\theoremstyle{plain}
\newtheorem{theorem}{Theorem}[section]
\newtheorem{proposition}[theorem]{Proposition}
\newtheorem{lemma}[theorem]{Lemma}
\newtheorem{corollary}[theorem]{Corollary}
\theoremstyle{definition}
\newtheorem{remark}[theorem]{Remark}
\newtheorem{assumption}[theorem]{Assumption}
\theoremstyle{remark}
\newcommand{\E}{\mathbb{E}}
\newcommand{\F}{\mathbb{F}}
\newcommand{\N}{\mathbb{N}}
\renewcommand{\P}{\mathbb{P}}
\newcommand{\Q}{\mathbb{Q}}
\newcommand{\R}{\mathbb{R}}
\renewcommand{\S}{\mathbb{S}}
\newcommand{\cC}{\mathcal{C}}
\newcommand{\cD}{\mathcal{D}}
\newcommand{\cE}{\mathcal{E}}
\newcommand{\cF}{\mathcal{F}}
\newcommand{\cG}{\mathcal{G}}
\newcommand{\cH}{\mathcal{H}}
\newcommand{\cK}{\mathcal{K}}
\newcommand{\cM}{\mathcal{M}}
\newcommand{\cN}{\mathcal{N}}
\newcommand{\cP}{\mathcal{P}}
\newcommand{\cS}{\mathcal{S}}
\newcommand{\cX}{\mathcal{X}}
\newcommand{\fP}{\mathfrak{P}}
\newcommand{\fM}{\mathfrak{M}}
\DeclareMathOperator{\proj}{proj}
\newcommand{\ov}{\overline}
\numberwithin{equation}{section}
\begin{document}

\title{ \vspace{-2.4em} 
Duality Theory for Robust Utility Maximisation 
\date{\today}
\author{
  Daniel Bartl%
  \thanks{
  Department of Mathematics, University of Vienna, \texttt{daniel.bartl@univie.ac.at}.
  Financial support by the Austrian Science Fund 
  (FWF) under project P28661
  and by the Vienna Science and Technology Fund (WWTF) under project MA16-021
  is gratefully acknowledged.
  }
  \and
    Michael Kupper%
     \thanks{
     Department of Mathematics and Statistics, University of Konstanz, \texttt{kupper@uni-konstanz.de}.
     }
  \and
  Ariel Neufeld%
   \thanks{
   Division of Mathematical Sciences, NTU Singapore, \texttt{ariel.neufeld@ntu.edu.sg}.
   Financial support by his 
   Nanyang Assistant Professorship Grant (NAP Grant) \emph{Machine Learning based Algorithms in Finance and Insurance}
   is gratefully acknowledged.
   }
 }
}
\maketitle \vspace{-1.2em}

\begin{abstract}
In this paper we present a duality theory for the robust utility maximisation problem in continuous time for utility functions defined on the positive real axis.
Our results are inspired by -- and can be seen as the robust analogues of -- the seminal work of Kramkov~\&~Schachermayer~\cite{KramkovSchachermayer.99}.
Namely, we show that if the set of attainable trading outcomes and the set of pricing measures satisfy a bipolar relation, then the utility maximisation problem is in duality with a conjugate problem.
We further discuss the existence of optimal trading strategies.
In particular, our general results include the case of logarithmic and power utility, and they apply to drift and volatility uncertainty.
%
\end{abstract}

\vspace{.9em}

{\small
\noindent \emph{Keywords} Robust utility maximisation, duality theory, bipolar theorem, drift and volatility uncertainty

\noindent \emph{AMS 2010 Subject Classification}
91B16; 91B28; 93E20
}

\section{Introduction}
The goal of this paper is to develop a duality theory for the robust utility maximisation problem. 
Given a utility function $U\colon(0,\infty)
\to \R$ which is nondecreasing and concave,  one defines the robust utility maximisation problem as
\begin{equation}\label{primal-classical}
u(x):= \sup_{g \in \mathcal{C}(x)}\inf_{\P \in \cP} \E_\P[U(g)].
\end{equation}
Here $\cP$ denotes a set of possibly nondominated probability  measures, and $\mathcal{C}(x)\equiv x\cC$ is a set of random variables. Financially speaking, the set $\cP$ represents the set of possible candidates for the real-world measure, which is not known to the portfolio manager who tries to solve the maximisation problem, whereas the set $\cC(x)$ represents all the possible portfolio values at terminal time $T$ available with initial capital $x>0$. Note that if $\cP=\{\P\}$ is a singleton, then the robust utility maximisation problem coincides with the \textit{classical utility maximisation problem} and has the financial interpretation  that the portfolio manager (believes to) know the real-world measure $\P$. 
The robust utility maximisation problem with respect to nondominated probability measures $\cP$ has been already widely studied; see
\cite{bartl2019exponential, bartl2019limmed, blanchard2018multiple, NeufeldSikic2018robust, neufeld2019nonconcave,rasonyi2018utility}
for works in a discrete-time setting, and
\cite{biagini2017robust, chau2019robust,DenisKervarec.13,FouquePunWong.16, guo2019robust, ismail2019robust, liang2018robust, LinRiedel.14, lin2020horizon, MatoussiPossamaiZhou.12utility, neufeld2018robust, NeufeldSikic2018robust, pham2018portfolio, pun2018g, TevzadzeToronjadzeUzunashvili.13,uugurlu2020robust, yang2019constrained}
for works in a continuous-time setting. 
To the best of our knowledge, the only paper so far which provides a duality theory for the robust utility maximisation problem with nondominated probability measures in a continuous-time setting is the one by Denis~\&~Kervarec \cite{DenisKervarec.13}.  Indeed, Denis~\&~Kervarec \cite{DenisKervarec.13} provide a duality theory for the robust utility maximisation problem under drift and volatility uncertainty, but under the strong assumptions that the utility function is bounded and that the set of trading strategies possess some continuity (as a functional of the stock price). In addition, the corresponding volatility matrix is required to be of diagonal form.

In order to get an idea how one could try to identify the dual problem 
for the robust utility maximisation problem, let us recall the main idea in the seminal paper by Kramkov~\&~Schachermayer~\cite{KramkovSchachermayer.99} which provides a complete duality theory in the classical case where $\cP=\{\P\}$ is a singleton.
Consider the conjugate function  $V\colon(0,\infty)\to \R$ defined by 
\begin{equation*}
V(y):=\sup_{x\geq 0} \big[U(x)-xy\big], \quad y>0.
\end{equation*}
In the classical case where $\cP=\{{\P}\}$ is a singleton, Brannath~\&~Schachermayer~\cite{brannath1999bipolar} developed a bipolar theorem on $L^0_+(\P)$, despite the fact that $L^0_+(\P)$ is generally not a locally convex space, by considering the dual pairing $(L^0_+({\P}),L^0_+({\P}))$ endowed with the scalar product $\langle g, h \rangle:=\E_{{\P}}[gh]$ (see also {\v Z}itkovi\'c~\cite{Zitkovic.02} for a conditional version). This bipolar theorem then allows to identify the dual optimisation problem and to prove that the corresponding optimisation problems are conjugates. More precisely, let $\cC\subseteq L^0_+(\P)$ and define the polar set
\begin{equation}\label{intro:D-KS}
\cD:=\big\{h\in L^0_+(\P)\colon \E_{{\P}}[gh]\leq 1 \mbox{ for all } g \in \cC \big\}.
\end{equation}
Then one can define the dual optimisation problem by
\begin{equation}\label{intro:dual-classical}
v(y):=\inf_{h \in \cD}\E_{\P}[V(yh)].
\end{equation}
By the bipolar relation that 
\begin{align}
\cD&= \big\{h \in L^0_+(\P)\colon \E_{\P}[gh]\leq 1 \mbox{ for all } g \in \cC\big\}\label{intro:bipolar-1}\\
\cC&= \big\{g \in L^0_+(\P)\colon \E_{\P}[gh]\leq 1 \mbox{ for all } h \in \cD\big\} \label{intro:bipolar-2}
\end{align}
together with a minimax argument, Kramkov~\&~Schachermayer~\cite[Theorem~3.1]{KramkovSchachermayer.99} proved that indeed $u$ and $v$ are conjugates, namely that
\begin{equation}
\label{intro:conjugate}
\begin{split}
u(x)&=\inf_{y\geq 0}\big[v(y)+xy\big], \quad x>0,\\
v(y)&=\sup_{x\geq 0}\big[u(x)-xy\big], \quad y>0.
\end{split}
\end{equation}

However, in the robust analogue where $\cP$ is not a singleton, it is not clear how to find a suitable dual pairing $(\cX,\cX^*)$ for $\mathcal{X}\supseteq \cC$  such that a bipolar relation \eqref{intro:bipolar-1}~\&~\eqref{intro:bipolar-2} holds.
Our approach is the following. Instead of working on an arbitrary measurable space $(\Omega,\cF)$, we impose that $\Omega$ is a Polish space endowed with its Borel $\sigma$-field. This allows us to use the natural dual pairing $(C_b(\Omega),\fP(\Omega))$ consisting of the bounded continuous functions $C_b\equiv C_b(\Omega)$ together with the set of Borel probability measures $\fP(\Omega)$ on $\Omega$. Given a set $\cC$ of nonnegative measurable functions defined on $\Omega$, we then define its polar set by
\begin{equation}\label{intro:D-our}
\cD:=\big\{ \Q\in \fP(\Omega)\colon \E_{\Q}[g]\leq 1 \mbox{ for all } g \in \cC\big\}.
\end{equation}
This allows us to formulate a bipolar relation on the subset $C_b$, namely we require that\footnote{We denote by $C_b^+:=\{g \in C_b(\Omega)\colon g\geq 0\}$.}
\begin{align}
\cD&= \big\{\Q \in \fP(\Omega) \colon \E_{\Q}[g]\leq 1 \mbox{ for all } g \in \cC\cap C_b\big\}\label{intro:bipolar-our-1}\\
\cC\cap C_b&= \big\{g \in C_b^+\colon \E_{\Q}[g]\leq 1 \mbox{ for all } \Q \in \cD\big\} \label{intro:bipolar-our-2}. 
\end{align}
In our first result, we show that if $\cC$ is a set of nonnegative measurable functions and $\cD$ is a set of probability measures defined by  \eqref{intro:D-our} such that the bipolar relation \eqref{intro:bipolar-our-1}~\&~\eqref{intro:bipolar-our-2} holds, then the functions $u$ and $v$ defined by \eqref{primal-classical} and\footnote{We set
	$V(-\infty):=\infty$, and $\tfrac{d\Q}{d\P}:=-\infty$ if $\Q$ is not absolutely continuous with respect to $\P$.}
\begin{equation*}
v(y):=\inf_{\Q \in \cD}\inf_{\P \in \cP}\E_{\Q}\big[V(y\tfrac{d\Q}{d\P})\big]
\end{equation*}
 indeed satisfy the conjugate relation \eqref{intro:conjugate}; see Theorem~\ref{thm:main}.

At first glance, 
the bipolar relation \eqref{intro:bipolar-our-1}~\&~\eqref{intro:bipolar-our-2} 
might seem restrictive. However, it turns out that in fact, this bipolar relation is \textit{naturally} satisfied in the context of drift and volatility uncertainty. More precisely, let $\Omega=C([0,T],\R^d)$ and consider the set of probability measures  $\cP:=\cP^{ac}_{sem}(\Theta)$ for which the canonical process $(S_t)_{0\leq t \leq T}$ is a semimartingale with differential characteristics taking values in $\Theta$. Then, we define 
\begin{equation}\label{intro:C-our}
\mathcal{C}:=\big\{g \colon \exists\, H \in \cH \mbox{ such that } g\leq 1 + (H\cdot S)_T \ \P\mbox{-a.s. } \forall\, \P \in \cP\big\},
\end{equation}
namely $\mathcal{C}$ is the set of $\cP$-quasi surely superreplicable claims, and $\cD$ is its polar set defined in \eqref{intro:D-our}. As admissibility condition on the set of hedging strategies $\cH$ we require for each $H \in \cH$ that the stochastic integral satisfies $(H\cdot S)\geq -c$ 
$\P$-a.s.\ $\forall\, \P\in \cP$ 
for some constant $c>0$ 
(where $c$ can depend on both $H$ and $\mathbb{P}$),
similar to the classical admissibility condition.
This setting can be seen as the robust analogue to the setting of Kramkov~\&~Schachermayer~\cite[Theorem~2.1]{KramkovSchachermayer.99}. We show in (the proof of) Theorem~\ref{thm:main-app}
that for $\cP:=\cP^{ac}_{sem}(\Theta)$ together with $\cC$ and $\cD$ defined in \eqref{intro:C-our} and \eqref{intro:D-our}, the bipolar relation \eqref{intro:bipolar-our-1}~\&~\eqref{intro:bipolar-our-2} is satisfied \textit{automatically}.

Let us explain why the bipolar relation \eqref{intro:bipolar-our-1}~\&~\eqref{intro:bipolar-our-2} holds.
 To see that \eqref{intro:bipolar-our-1} holds, note  that since $\cC$ is the set of superhedgeable claims, every element in $\cD$ satisfies that $\E_{\Q}[(H\cdot S)_T]\leq 0$ for every $H \in \cH$. In other words, $\cD$ can be seen as the set of separating measures (in the notion of Kabanov~\cite{kabanov1997ftap}). Moreover, since $S$ has continuous sample paths, it is well-known that the set of separating measures coincides with the set of local martingale measures; see, e.g., \cite[Lemma~5.1.3, p.74]{delbaen2006mathematics}). Having this in mind, the condition \eqref{intro:bipolar-our-1} means that the set of separating measures is already determined by the superhedgeable claims which are continuous. We show that this is indeed true due to the Polish structure of $\Omega$; see Proposition~\ref{prop:separation-app} and Proposition~\ref{prop:D-C-Cb-polar}. 

To see that \eqref{intro:bipolar-our-2} holds, we need to show that every nonnegative and continuous bounded claim $g$ can be superhedged. In classical theory where $\cP=\{\P\}$ is a singleton, this has been proved  (even for measurable claims)  by El~Karoui~\&~Quenez \cite{el1995dynamic} and Kramkov~\cite{kramkov1996optional} in the following way. First they show that there exists a nonnegative process $(Y_t)_{0\leq t \leq T}$ with $Y_T=g$ such that 
$Y$ is a $\Q$-supermartingale for every equivalent local martingale measure $\Q$. Then, the optional decomposition theorem guarantees that $g=Y_T\leq 1+ (H\cdot S)_T$ for some hedging strategy $H$. The robust analogue of the results in  El~Karoui~\&~Quenez \cite{el1995dynamic} and Kramkov~\cite{kramkov1996optional} has been recently developed by Soner~\&~Nutz~\cite{nutz2012superhedging}, Neufeld~\&~Nutz~\cite{NeufeldNutz.12}, and Nutz~\cite{nutz2015robust} which is compatible with the set $\cP:=\cP^{ac}_{sem}(\Theta)$. This allows us to prove that indeed \eqref{intro:bipolar-our-2} holds; see Proposition~\ref{prop:C-C-b-char}.

Like in the classical result by Kramkov~\&~Schachermayer~\cite[Theorem~3.1]{KramkovSchachermayer.99}, we need to apply a minimax argument (together with the bipolar relation) in order to prove Theorem~\ref{thm:main}. To that end the lower-semicontinuity of the map
\begin{equation*}
\P\mapsto \E_{\P}[U(g)]
\end{equation*}
for every $g \in C_b^+$ is crucial. For this reason, we need to impose that $U$ is bounded from below, i.e. $U(0):=\lim_{x \downarrow 0} U(x)>-\infty$. To obtain the corresponding result also for utility functions which are unbounded from below (i.e.\ $U(0)=-\infty$), we observe
that the utility functions $U_n\colon(0,\infty)\to \R$, $n\in \N$, defined by $U_n(x):=U(x+\tfrac{1}{n})$, $x>0$, are bounded from below. Hence we apply Theorem~\ref{thm:main} to each $U_n$ together with a limit argument to prove that the corresponding duality result  also holds true for utility functions which are unbounded from below; see Theorem~\ref{thm:main-2}. 
However, since the set of probability measures $\cP$ is typically non-dominated, we cannot apply tools like the Koml\'os theorem to obtain a limit. This is why for Theorem~\ref{thm:main-2} we impose that medial limits exist (see also Theorem~\ref{thm:main-app-2}).

From a technical point of view, next to the common assumptions that $\Theta$ is convex and compact, we additionally assume that $\Theta$ satisfies a uniform ellipticity condition; see Assumption~\ref{ass:invertible}. This technical assumption is crucial to
	connect $\cP=\cP^{ac}_{sem}(\Theta)$ with a corresponding set of martingale measures, however, imposes  that each measure $\P \in \cP$ corresponds to a complete financial market. We refer to Remark~\ref{rem:Theta} for a more detailed discussion regarding the importance of this uniform ellipticity condition.
	
Finally, due to the generality of the utility function $U$, we impose some conditions to make sure that $u(x)<\infty$ for every $x>0$,\footnote{This condition is standard in the literature dealing with utility maximisation.}
 together with some integrability conditions which allow us to conclude Theorem~\ref{thm:main-2} from Theorem~\ref{thm:main} by a limit argument; see  Assumptions~\ref{ass:u-finite}~\&~\ref{ass:u-bar-finite} and Assumptions~\ref{ass:U-I}~\&~\ref{ass:v-finite}. We point out that these conditions are naturally satisfied by all the relevant utility functions. Indeed, these conditions are automatically satisfied for utility functions which are bounded from above, as well as, in the setting of simultaneously drift and volatility uncertainty described above\footnote{i.e.\ when $\cP:=\cP^{ac}_{sem}(\Theta)$ and $\cC$ and $\cD$ are defined by \eqref{intro:C-our} and \eqref{intro:D-our}}, also for the common utility functions like  logarithmic, power, and exponential utilities; see Corollary~\ref{co:main-app}~\&~\ref{co:main-app-2} and Remark~\ref{rem:ass:U-I}~\&~\ref{rem:ass:U-I-2}.

 The remainder of this paper is organised as follows. 
 In Section~\ref{sec:abstract}, we present the main results in the abstract setting.
 In Section~\ref{sec:application}, we state our main results in the setting of drift and volatility uncertainty.
The proofs of the results of Section~\ref{sec:abstract} are provided in Section~\ref{sec:proof}, whereas the proofs of the  results of Section~\ref{sec:application} are provided in Section~\ref{sec:proof-app}.

%
\section{Main results in the abstract setting}\label{sec:abstract}
We fix a time horizon $T\in (0,\infty)$ and a Polish space $\Omega$ endowed with its Borel $\sigma$-field $\mathcal{F}$. We denote by $\fP(\Omega)$ the set of all Borel probability measures endowed with the topology induced by the weak convergence, becoming itself a Polish space. Without further mentioning, we interpret $\fP(\Omega)$ as a subset of the set of all nonnegative, finite measures. In addition, we denote by $\mathcal{F}^*:=\cap_{\P\in \fP(\Omega)} \mathcal{F}^\P$ the universal $\sigma$-field. Moreover, we denote by $C_b$ the set of all continuous functions from $\Omega$ to $\R$ and $C_b^+$ denotes the subset of all nonnegative functions in $C_b$. Furthermore, we denote by $\S^d_+$ the set of all symmetric positive semi-definite matrices in $\R^{d\times d}$ and by 
$\S^d_{++}\subseteq \S^d_+$ the set of all positive definite matrices in $\S^d_+$. In addition, we say that $A\leq B$ holds for $A,B \in \R^{d\times d}$ if $B-A$ is positive semi-definite. 
%
%

We first fix a nonempty set of probability measures $\cP\subseteq \fP(\Omega)$, which represents the set of possible candidates for the  unknown real-world measure, 
and a nonempty set 
\begin{equation}
\label{def:C}
\begin{split}
\mathcal{C}&\subseteq \big\{X\colon \Omega \to [0,\infty]: X \mbox{ is } \cF^*\mbox{-measurable}\big\},
\end{split}
\end{equation}
where we set $\mathcal{C}(x):=x\mathcal{C}$ for every $x>0$. Then, we fix  $\emptyset\neq\fP\subseteq \fP(\Omega)$\footnote{typically, one chooses $\fP:=\fP(\Omega)$ or $\fP:=\{\Q \in \fP(\Omega)\colon \exists\, \P \in \cP \mbox{ such that } \Q \approx\P\}$; see also Section~\ref{sec:application} and Section~\ref{sec:proof-app}.} 
and define
\begin{equation}
\label{def:D}
\begin{split}
\mathcal{D}&:= \big\{\Q \in \fP\colon \E_\Q[X]\leq 1 \mbox{ for all } X \in \mathcal{C}\big\},
\end{split}
\end{equation}
where we set $\mathcal{D}(y):=y\mathcal{D}$ for every $y>0$. In other words, we interpret $\mathcal{D}$, roughly speaking, as the polar set of $\mathcal{C}$. 
%

%
%
We impose the following assumption,
which is 
standard in the utility maximisation literature in mathematical finance (cf., e.g., \cite{KramkovSchachermayer.99}).
%
%
\begin{assumption}\label{ass:no-arbitrage-style}
	For every $\P \in \cP$ there exists a $\Q \in \mathcal{D}$ such that $\Q\ll \P$.
\end{assumption}

We call a function $U:(0,\infty)\to[-\infty,\infty)$ a utility function if it is concave and non-decreasing. 
We fix a utility function $U:(0,\infty)\to[-\infty,\infty)$ and define $U(0):=\lim_{x \downarrow 0} U(x)$. 
%
%
%
%
Moreover, we consider the conjugate function 
\begin{equation}
\begin{split}
V(y)&:=\sup_{x\geq 0}  \big[U(x)-xy\big], \quad y>0,\\
V(0)&:=\lim_{y \downarrow 0} V(y),\\
V(y)&:= + \infty, \quad y<0.\\
	\end{split}
	\end{equation}
%
%
In the following two subsections we present our main duality results for the robust utility maximisation problem in the abstract setting, where we distinguish the two cases where $U(0)>-\infty$ and $U(0)=-\infty$.
\subsection{Main result for utility functions bounded from below}\label{subsec:abstract-boundedbelow}
In this subsection, we provide our main result for utility functions which satisfy that $U(0)>-\infty$. More precisely, in this subsection, we impose the following condition on the utility function.
\begin{assumption}
	\label{ass:U-1}
	The utility function $U:[0,\infty)\to \R$ is real valued.
\end{assumption}
The robust utility maximisation problem is then defined as the following maximisation problem
\begin{equation}\label{def:u}
u(x):=\sup_{g \in \mathcal{C}(x)}\inf_{\P\in \cP} \E_\P[U(g)], \quad x>0.
\end{equation}
\begin{remark}\label{rem:U-trivial}
	Note that  Assumption~\ref{ass:U-1} ensures that $U\colon[0,\infty)\to \R$ is continuous; see \cite[Theorem~10.1, p.82]{Rockafellar.97}.
	Moreover, common utility functions defined on $(0,\infty)$ like the
	power utilities $U(x):=\frac{1}{p}x^p$, $p\in (0,1)$, and the exponential utilities $U(x)=-e^{-\lambda x}$, $\lambda>0$,   satisfy Assumption~\ref{ass:U-1}.
\end{remark}
We define the  corresponding dual function
\begin{equation}
\label{def:v}
v(y):=\inf_{\Q\in \mathcal{D}(y)}\inf_{\P\in \cP} \E_\P\big[V(\tfrac{d\Q}{d\P})\big], \quad y>0,
\end{equation}
where we make the convention $\frac{d\Q}{d\P}:=-\infty$ if $\Q$ is not absolutely continuous with respect to $\P$. 
We impose the following standard condition (see, e.g., in \cite{KramkovSchachermayer.99}) on the robust utility maximisation problem.
\begin{assumption}\label{ass:u-finite}
	There exists $x_0 \in (0,\infty)$ such that $u(x_0)<\infty$.
\end{assumption}
For the second part of the  main result of this subsection, we assume that medial limits exist.

\begin{assumption}\label{ass:limmed}
	There exists a positive linear functional $\mathop{\mathrm{lim\,med}}\colon \ell^\infty\to \R$, called \textit{medial limit}, satisfying $\liminf_{n\to \infty}\leq \mathop{\mathrm{lim\,med}}_{n\to \infty}\leq \limsup_{n\to \infty}$ so that for any uniformly bounded sequence of universally measurable functions $X_n\colon \Omega \to \R$, $n \in \N$, the medial limit $X:=\mathop{\mathrm{lim\,med}}_{n\to \infty} X_n$ is universally measurable and satisfies that $\E_\P[X]=\mathop{\mathrm{lim\,med}}_{n\to \infty}\E_{\P}[X_n]$ for every $\P \in \fP(\Omega)$.
	%
	 Moreover, following \cite{bartl2019limmed}, we extend the definition of the $\mathop{\mathrm{lim\,med}}$ from $\ell^\infty$ to $[-\infty,\infty]^\N$ by setting 
	\begin{equation*}
	\mathop{\mathrm{lim\,med}}_{n \to \infty} x_n
	:=
	\sup_{k \in \N} \inf_{m \in \N}
		\mathop{\mathrm{lim\,med}}_{n \to \infty} \big[(-m)\vee (x_n\wedge k)\big].
	\end{equation*}

\end{assumption}

\begin{remark}\label{rem:limmed}
	The existence of the medial limit is guaranteed under the usual ZFC axioms together with Martin's axiom; see \cite{meyer1973limites,normann1976martin}. In the literature of robust mathematical finance, the usage of medial limits appeared first in \cite{nutz2012pathwise} to construct an (aggregated) stochastic integral simultaneously under a set of non-dominated probability measures. In \cite{Nutz.13}, medial limits were applied to construct superhedging strategies in the quasi-sure setting in discrete-time. Moreover, in \cite{bartl2019limmed},   medial limits were used in the context of robust utility maximisation on the real line in the discrete-time setting. Roughly speaking,  medial limits turn out to be particularly useful in the robust finance theory when dealing with a set of non-dominated probability measures, as then classical limit-arguments like
	the Koml\'os theorem cannot be applied; we refer to \cite{nutz2012pathwise,Nutz.13,bartl2019limmed} for more details and properties regarding medial limits.
\end{remark}

Under the condition that Assumption~\ref{ass:limmed} holds, we can then define for every $x>0$,
\begin{equation*}
\ov{\mathcal{C}(x)}:= \Big\{\mathop{\mathrm{lim\,med}} g_n:\Omega \to [0,\infty] \colon (g_n)_{n \in \N}\subseteq \mathcal{C}(x)\Big\}
\end{equation*}
and $\ov{\mathcal{C}}\equiv\ov{\mathcal{C}(1)}$.
Observe that 
$\ov{\mathcal{C}(x)}=x\ov{\mathcal{C}}$ for every $x>0$, and  we denote it by $\ov{\mathcal{C}}(x)$.
Moreover, we also consider the following robust utility maximisation problem
\begin{equation}\label{def:u-bar}
\ov u(x):=\sup_{g \in \ov{\mathcal{C}}(x)}\inf_{\P\in \cP} \E_\P[U(g)], \quad x>0,
\end{equation}
and impose the following conditions.
\begin{assumption}\label{ass:u-bar-finite}
	There exists $x_0 \in (0,\infty)$ such that $\ov u(x_0)<\infty$.
\end{assumption}
\begin{assumption}\label{ass:U-I}
	For every $x\in (0,\infty)$ and $(g_n)_{n\in \N}\subseteq {\mathcal{C}}(x)$, the sequence of random variables
	\begin{equation*}
	\max\big\{U\big(g_n+\tfrac{1}{n}\big),0\big\}, \ n \in \mathbb N, 
	\end{equation*}
	is uniformly integrable with respect to $\P$ for all $\P \in \cP$.
\end{assumption}
\begin{remark}\label{rem:ass:U-I}
	Whereas Assumptions~\ref{ass:u-finite}~\&~\ref{ass:u-bar-finite} are standard in the mathematical finance literature, Assumption~\ref{ass:U-I} is not a common one. However, note that  every utility function $U$ which is bounded from above automatically satisfies Assumptions~\ref{ass:u-finite},~\ref{ass:u-bar-finite},~\&~\ref{ass:U-I}, no matter what $\mathcal{C}$ and $\cP$ are. 
	In addition, we show in Subsection~\ref{subsec:proof-co} that in the setting of drift and volatility uncertainty (see Section~\ref{sec:application}) the Assumptions~\ref{ass:u-finite},~\ref{ass:u-bar-finite},~\&~\ref{ass:U-I} are automatically satisfied for the logarithm, power, and exponential utility functions.
\end{remark}
Now we are ready to state our main result in the abstract setting for utility functions $U$ satisfying Assumption~\ref{ass:U-1}, which can be seen as the robust version of the classical result of Kramkov~\&~Schachermayer~\cite[Theorem~3.1]{KramkovSchachermayer.99}.
\begin{theorem}
	\label{thm:main}
	Let $U$ be a utility function satisfying Assumption~\ref{ass:U-1}, let $\mathcal{C}$, $\mathcal{D}$ be defined as in \eqref{def:C} and \eqref{def:D}, and let $\cP$ be a set of probability measures such that Assumption~\ref{ass:no-arbitrage-style} and Assumption~\ref{ass:u-finite} hold.
	Moreover, assume that 
\begin{enumerate}[(1)]
	\item\label{thm:ass-1} the set of probability measures $\cP$, $\cD$ are both convex and compact,
	\item\label{thm:ass-2} we have that 
	\begin{equation*}
	\mathcal{D}= \big\{\Q \in \fP\colon \E_\Q[X]\leq 1 \mbox{ for all } X \in (\mathcal{C}\cap C_b)\big\},
	\end{equation*}
	\item\label{thm:ass-3} we have that
	\begin{equation*}
	\big\{X \in C_b^+\colon \E_\Q[X]\leq 1 \mbox{ for all } \Q \in \mathcal{D}\big\}=\mathcal{C}\cap C_b.
	\end{equation*}
\end{enumerate}
Then the following holds:
\begin{enumerate}[(i)]
	\item\label{thm:res-1} $u$ is nondecreasing, concave, 
	and $u(x) \in \R$ for all $x>0$,
	\item\label{thm:res-1b} $v$ is nonincreasing, convex, and proper, 
	
	\item\label{thm:res-2}  the functions $u$ and  $v$ defined in \eqref{def:u} and \eqref{def:v} are conjugates, i.e.\
	\begin{equation*}
	\begin{split}
	u(x)&=\inf_{y\geq 0}\big[v(y)+xy\big], \quad x>0,\\ 
	v(y)&=\sup_{x\geq 0}\big[u(x)-xy\big], \quad y>0,
	\end{split}
	\end{equation*}
\item\label{thm:res-4-neu}  for every $x>0$ we have that
		\begin{equation*}
	u(x)\equiv \sup_{g\in {\mathcal{C}}(x)}\inf_{\P\in\mathcal{P}} \E_\P[U(g)]
	= \sup_{g\in (\mathcal{C}(x)\cap C_b)}\inf_{\P\in\mathcal{P}} \E_\P[U(g)].
	\end{equation*}
\end{enumerate}
If in addition, we assume that Assumption~\ref{ass:limmed},  Assumption~\ref{ass:u-bar-finite}, and Assumption~\ref{ass:U-I} hold, then we additionally obtain that
	\begin{enumerate}[(i)]
		\addtocounter{enumi}{4}
		
	\item\label{thm:res-4}  for every $x>0$ we have that
	\begin{equation*}
	u(x)= \ov u(x),
	\end{equation*}	
\item\label{thm:res-3}  for every $x>0$ there exists $\widehat{g} \in \ov{\mathcal{C}}(x)$ such that
	\begin{equation*}
	\inf_{\P\in\mathcal{P}} \E_\P[U(\widehat{g})]
	=\sup_{g\in \ov{\mathcal{C}}(x)}\inf_{\P\in\mathcal{P}} \E_\P[U(g)]
	\equiv\ov u(x).
	\end{equation*}
\end{enumerate}
\end{theorem}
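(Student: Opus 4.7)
The plan is to mimic the Kramkov--Schachermayer strategy, replacing the bipolar pair $(L^0_+(\P),L^0_+(\P))$ by the topological pair $(C_b,\fP(\Omega))$ through hypotheses \eqref{thm:ass-2} and \eqref{thm:ass-3}. The structural assertions \eqref{thm:res-1} and \eqref{thm:res-1b} are mostly routine: monotonicity of $u$ is immediate from $\mathcal C(x)\subseteq\mathcal C(x')$ for $x\le x'$; real-valuedness on $(0,\infty)$ comes from Assumption~\ref{ass:u-finite} combined with $U(0)>-\infty$; convexity, monotonicity and properness of $v$ follow from those of $V$ together with Assumption~\ref{ass:no-arbitrage-style}, which guarantees at least one admissible pair $(\P,\Q)$. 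Concavity of $u$ will drop out of the conjugacy proved below. The easy Fenchel direction $u(x)\le\inf_{y\ge 0}[v(y)+xy]$ is then elementary: for $g\in\mathcal C(x)$, $\P\in\cP$, $\Q\in\mathcal D$ with $\Q\ll\P$, and $y>0$, the pointwise inequality $U(g)\le V(y\,d\Q/d\P)+yg\,d\Q/d\P$ combined with $\E_\Q[g]\le x$ yields $\E_\P[U(g)]\le \E_\P[V(y\,d\Q/d\P)]+xy$, and the bound follows on taking the appropriate infima and suprema.

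The reverse inequality is the core of the argument; I would first establish it for the bounded-continuous surrogate
\begin{equation*}
u^{(b)}(x):=\sup_{g\in\mathcal C(x)\cap C_b}\inf_{\P\in\cP}\E_\P[U(g)],
\end{equation*}
and then deduce $u=u^{(b)}$, which is \eqref{thm:res-4-neu}. For $g\in \mathcal{C}\cap C_b$ the composition $U\circ g$ is bounded continuous (as $U$ is continuous and $g$ is bounded), so $\P\mapsto \E_\P[U(g)]$ is linear and continuous on the convex compact set $\cP$, while $g\mapsto\E_\P[U(g)]$ is concave on the convex set $x(\mathcal C\cap C_b)$ (convex by \eqref{thm:ass-3}). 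Sion's minimax therefore gives
\begin{equation*}
u^{(b)}(x)=\inf_{\P\in\cP}\sup_{g\in x(\mathcal C\cap C_b)}\E_\P[U(g)].
\end{equation*}
For fixed $\P$, hypothesis \eqref{thm:ass-3} rewrites $\mathcal C\cap C_b$ as $\{g\in C_b^+:\E_\Q[g]\le 1\ \forall\Q\in\mathcal D\}$, turning the inner problem into a linearly constrained concave maximisation with Slater point $g\equiv 0$. I would apply Fenchel--Lagrange duality with multipliers $y\ge 0$ and $\Q\in\mathcal D$: combining a second Sion exchange on the compact convex set $\mathcal D$ (from \eqref{thm:ass-1}), Rockafellar's interchange of supremum and integration, and an approximation of the pointwise Fenchel maximiser $I(y\,d\Q/d\P)$ by $C_b^+$-functions, I obtain $\sup_g\{\E_\P[U(g)]-y\E_\Q[g]\}=\E_\P[V(y\,d\Q/d\P)]$ for $\Q\ll\P$, and consequently
\begin{equation*}
\sup_{g\in x(\mathcal C\cap C_b)}\E_\P[U(g)]=\inf_{y\ge 0,\,\Q\in\mathcal D,\,\Q\ll\P}\Big[\E_\P\!\left[V\!\left(y\tfrac{d\Q}{d\P}\right)\right]+xy\Big].
\end{equation*}
Taking $\inf_\P$ and commuting with the $(y,\Q)$-infimum produces $u^{(b)}(x)=\inf_{y\ge 0}[v(y)+xy]$; the trivial $u^{(b)}(x)\le u(x)$ sandwiched against the first paragraph's Fenchel bound gives $u(x)=u^{(b)}(x)=\inf_{y\ge 0}[v(y)+xy]$, yielding \eqref{thm:res-2}, \eqref{thm:res-4-neu} and concavity of $u$; the dual identity $v(y)=\sup_{x\ge 0}[u(x)-xy]$ then follows by biconjugation of the proper concave upper-semicontinuous function $u$.

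For the additional items \eqref{thm:res-4} and \eqref{thm:res-3}, I would invoke medial limits. Pick a maximising sequence $(g_n)\subseteq\mathcal C(x)$ for $u(x)$ and set $\widehat g:=\limmed_n g_n\in\bar{\mathcal C}(x)$. Assumption~\ref{ass:U-I} supplies the uniform integrability of the positive parts of $U(g_n+\tfrac 1n)$ under each $\P\in\cP$, which together with Assumption~\ref{ass:limmed} allows us to push $\limmed$ through $\E_\P[U(\cdot)]$ and obtain $\inf_\P\E_\P[U(\widehat g)]\ge u(x)$; hence $\bar u(x)\ge u(x)$ and $\widehat g$ is the claimed optimiser. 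The reverse inequality $\bar u(x)\le u(x)$ is obtained by extending the Fenchel bound of the first paragraph verbatim to $\bar{\mathcal C}(x)$: the fact $\E_\Q[\limmed g_n]\le\limmed\E_\Q[g_n]\le x$ for $(g_n)\subseteq\mathcal C(x)$ and $\Q\in\mathcal D$ shows the same duality estimate applies, and the already-proved conjugacy closes the argument.

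The principal obstacle is the inner Fenchel--Lagrange step for fixed $\P$: the pointwise maximiser $I(y\,d\Q/d\P)$ is only universally measurable, yet the duality must be closed inside $C_b^+$ to exploit the bipolar relation \eqref{thm:ass-3}. The Polish topology of $\Omega$, combined with Lusin-type approximations and the growth control afforded by Assumption~\ref{ass:U-I}, provides the key technical input. A secondary hurdle is replacing classical Koml\'os-type compactness---unavailable for the nondominated family $\cP$---by the medial-limit machinery of Assumption~\ref{ass:limmed}.
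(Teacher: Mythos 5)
Your plan follows the same overall strategy as the paper (weak duality, minimax via Sion on the topological pair $(C_b,\fP(\Omega))$, then biconjugation and medial limits), but it groups the minimax steps differently, and this is where a gap appears.

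The paper first fixes $y>0$ and absorbs the supremum over $x$ into the supremum over $g$: using the bipolar identity it rewrites
$\sup_{x>0}\big[u_c(x)-xy\big]=\sup_{g\in C_b^+}\inf_{(\P,\Q)\in\cP\times\cD(y)}\big(\E_\P[U(g)]-\E_\Q[g]\big)$
and then applies Sion \emph{once}, with the joint compact convex set $\cP\times\cD(y)$ on the inf side. Because $y$ sits \emph{outside} the minimax as a fixed parameter, the compactness hypothesis of Sion is verified directly, and the identity $\sup_{x\ge0}[u_c(x)-xy]=v(y)$ follows for every $y>0$; the relation $u(x)=\inf_y[v(y)+xy]$ is then recovered by the biconjugation Lemma~\ref{le:U-extension}. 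Your route instead fixes $x$, applies Sion to swap $\sup_g$ and $\inf_\P$, and then proposes a Lagrangian/Fenchel duality for the inner constrained problem with multipliers $y\ge 0$ and $\Q\in\cD$. The trouble is that the multiplier $y$ ranges over the unbounded half-line $[0,\infty)$. Your ``second Sion exchange'' is stated only over the compact $\cD$; neither Sion (which needs compactness on at least one side of the exchange that actually involves $y$) nor a bare Slater condition with infinitely many linear constraints in infinite dimensions suffices to justify swapping $\sup_g$ with $\inf_{y\ge0,\,\Q}$. As written the step would fail. It is repairable: since $\E_\P\big[V(y\tfrac{d\Q}{d\P})\big]+xy\ge U(0)+xy\to\infty$ as $y\to\infty$ (and $U(0)>-\infty$ under Assumption~\ref{ass:U-1}), the effective multiplier range can be truncated to a compact interval $[0,Y]$ before applying a minimax theorem — but you must say so explicitly, or reorganize as the paper does so that $y$ never enters the minimax. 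A smaller point: both you and the paper rely on the identity $\sup_{g\in C_b^+}\big(\E_\P[U(g)]-\E_\Q[g]\big)=\E_\P\big[V(\tfrac{d\Q}{d\P})\big]$ for $\Q\ll\P$, and in both cases one must also account for pairs with $\Q\not\ll\P$ in the outer infimum; the convention $V(-\infty)=\infty$ handles this on the $v$ side, but the inner $\sup_g$ need not be $+\infty$ for such pairs, so one should argue via the weak-duality sandwich rather than termwise equality. The medial-limit part of your proof, items \eqref{thm:res-4} and \eqref{thm:res-3}, is essentially identical to the paper's.
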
	
\begin{remark}\label{rem:v-finite}
Item~\eqref{thm:res-1b}	implies that $v(y)>-\infty$ for all $y\in [0,\infty)$ and that there exists $y_0\in (0,\infty)$ such that $v(y) \in \R$ for all $y\geq y_0$.
\end{remark}

%
%
\subsection{Main result for utility functions unbounded from below}\label{subsec:abstract-unboundedbelow}
In this section, we provide our main result for utility functions which satisfies that $U(0)=-\infty$. More precisely, in this subsection, we impose the following condition on the utility function.
\begin{assumption}
	\label{ass:U-2}
	The utility function $U:[0,\infty)\to[-\infty,\infty)$ satisfies that $U(0)=-\infty$, $U|_{(0,\infty)}$   is real valued,	and for every sequence $(x_n)_{n\in \N}\subseteq (0,\infty)$ with $\lim_{n \to \infty} x_n=\infty$, we have that
	\begin{equation*}
	\lim_{n\to \infty}\tfrac{U(x_n)}{x_n}=0.
	\end{equation*}
\end{assumption}
\begin{remark}\label{rem:U-trivial-2}
	Note that  Assumption~\ref{ass:U-2} and that $U(0)\equiv\lim_{x\downarrow0} U(x)$ ensure that $U\colon[0,\infty)\to [-\infty,\infty)$ is continuous; see \cite[Theorem~10.1, p.82]{Rockafellar.97}. 
	Moreover, common utility functions defined on $(0,\infty)$ like the 
	the logarithmic utility $U(x):=\log(x)$ and 
	the
	power utilities $U(x):=\frac{1}{p}x^p$, $p\in (-\infty,0)$,  satisfy Assumptions~\ref{ass:U-2}.
\end{remark}
In this subsection, we impose that medial limits exist (see Assumption~\ref{ass:limmed}) and consider the robust utility maximisation problem $\ov u$ defined in \eqref{def:u-bar}. 
In addition, we assume the following.
\begin{assumption}\label{ass:v-finite}
	Let 
	\begin{equation*}
	\begin{split}
	V_1(y):=\sup_{x\geq 0}  \big[U_1(x)-xy\big], \quad y>0,
	\end{split}
	\end{equation*}
	where $U_1(\cdot):=U(\cdot + 1)$, $x\geq 0$. 
	Then for each $y>0$ and each $\P \in \cP$ there exists $\Q \in \mathcal{D}$ such that
	\begin{equation*}
	\E_\P\big[\!\max\!\big\{V_1(y\tfrac{d\Q}{d\P}),0\big\}\big]
	<\infty.
	\end{equation*}
\end{assumption}
\begin{remark}\label{rem:ass:U-I-2}
	Although Assumption~\ref{ass:v-finite} is a priori not standard in the literature, we observe that it is a modest assumption. Indeed,  every utility function $U$ which is bounded from above automatically satisfies  Assumption~\ref{ass:v-finite}, no matter what $\mathcal{C}$ and $\cP$ are (note that $V$ is nonincreasing with $V_1(0)=U_1(\infty)=U(\infty)$). In addition, we show in Subsection~\ref{subsec:proof-co} that in the setting of  Section~\ref{sec:application}, Assumption~\ref{ass:v-finite} is automatically satisfied for the logarithm, power, and exponential utility functions.
	Furthermore, Assumption~\ref{ass:v-finite} implies that $v(y)<\infty$ for all $y> 0$.
\end{remark}
Then we obtain  the following result.
\begin{theorem}
	\label{thm:main-2}
	Let Assumption~\ref{ass:limmed} hold.
	Let $U$ be a utility function satisfying Assumption~\ref{ass:U-2}, let $\mathcal{C}$, $\mathcal{D}$ be defined as in \eqref{def:C} and \eqref{def:D}, and let $\cP$ be a set of probability measures such that
	Assumption~\ref{ass:no-arbitrage-style},
	Assumption~\ref{ass:u-bar-finite}, Assumption~\ref{ass:U-I}, 
	and Assumption~\ref{ass:v-finite} hold.
	Moreover, assume that 
	\begin{enumerate}[(1)]
		\item\label{thm:ass-1-2} the set of probability measures $\cP$, $\cD$ are both convex and compact,
		\item\label{thm:ass-2-2} we have that 
		\begin{equation*}
		\mathcal{D}= \big\{\Q \in \fP\colon \E_\Q[X]\leq 1 \mbox{ for all } X \in (\mathcal{C}\cap C_b)\big\},
		\end{equation*}
		\item\label{thm:ass-3-2} we have that
		\begin{equation*}
		\big\{X \in C_b^+\colon \E_\Q[X]\leq 1 \mbox{ for all } \Q \in \mathcal{D}\big\}=\mathcal{C}\cap C_b.
		\end{equation*}
	\end{enumerate}
	Then the following holds:
	\begin{enumerate}[(i)]
		\item\label{thm:res-1-2} $\ov u$ is nondecreasing, concave, 
		and $\ov u(x) \in \R$ for all $x>0$,
		\item\label{thm:res-1b-2} $v$ is nonincreasing, convex, and $v(y) \in \R$ for all $y>0$, 
		
		\item\label{thm:res-2-2}  the functions $\ov u$ and  $v$ defined in \eqref{def:u-bar} and \eqref{def:v} are conjugates, i.e.\
		\begin{equation*}
		\begin{split}
		\ov u(x)&=\inf_{y\geq 0}\big[v(y)+xy\big], \quad x>0,\\ 
		v(y)&=\sup_{x\geq 0}\big[\ov u(x)-xy\big], \quad y>0,
		\end{split}
		\end{equation*}
		\item\label{thm:res-3-2}  for every $x>0$ there exists $\widehat{g} \in \ov{\mathcal{C}}(x)$ such that
		\begin{equation*}
		\inf_{\P\in\mathcal{P}} \E_\P[U(\widehat{g})]
		=\sup_{g\in \ov{\mathcal{C}}(x)}\inf_{\P\in\mathcal{P}} \E_\P[U(g)]
		\equiv\ov u(x).
		\end{equation*}
	\end{enumerate}
\end{theorem}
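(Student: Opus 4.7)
The plan is to deduce Theorem~\ref{thm:main-2} from Theorem~\ref{thm:main} by smoothly shifting $U$ away from its singularity at $0$. Set $U_n(x):=U(x+1/n)$, so that $U_n(0)=U(1/n)\in\R$ and $U_n$ satisfies Assumption~\ref{ass:U-1}. First, I would verify that Theorem~\ref{thm:main} applies to each $U_n$ with the same data $\mathcal{C},\mathcal{D},\cP$: hypothesis~(3) of Theorem~\ref{thm:main-2} applied to the constant function $\mathbf{1}$ yields $\mathbf{1}\in\mathcal{C}\cap C_b$, and together with convexity of $\mathcal{C}$ (which may be assumed without loss of generality, since the convex, solid hull of $\mathcal{C}$ has the same polar set $\mathcal{D}$) this shows $g+1/n\in\mathcal{C}(x+1/n)$ whenever $g\in\mathcal{C}(x)$. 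Hence $u_n(x)\le\ov u(x+1/n)<\infty$ for suitable $x$ by Assumption~\ref{ass:u-bar-finite}, giving Assumption~\ref{ass:u-finite} for $U_n$; Assumption~\ref{ass:U-I} for $U_n$ likewise reduces to the original one by the same shift.

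Theorem~\ref{thm:main} then yields, for each $n$, the identity $u_n=\ov u_n$, the conjugacy $u_n(x)=\inf_{y\ge 0}[v_n(y)+xy]$, and optimisers $\widehat{g}_n\in\ov{\mathcal{C}}(x)$. A short computation, using that under Assumption~\ref{ass:U-2} the supremum in $V(y)=\sup_{x\ge 0}[U(x)-xy]$ is attained at some $x^\ast(y)>0$ (so $x^\ast(y)-1/n\ge 0$ for large $n$), shows $V_n(y)=V(y)+y/n$, whence $v_n(y)=v(y)+y/n$ and
\begin{equation*}
u_n(x)=\inf_{y\ge 0}\bigl[v(y)+y(x+\tfrac{1}{n})\bigr]=:\tilde u\bigl(x+\tfrac{1}{n}\bigr).
\end{equation*}
On the other hand, the Fenchel inequality $U(g)\le V(y\tfrac{d\Q}{d\P})+yg\tfrac{d\Q}{d\P}$ combined with $\E_\Q[g]\le x$ for $g\in\ov{\mathcal{C}}(x),\,\Q\in\mathcal{D}$ (which follows from the polar definition of $\mathcal{D}$ and the Fatou-type property of $\limmed$) gives the weak duality $\ov u(x)\le\tilde u(x)$. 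Combining this with $\ov u_n(x)\le\ov u(x+1/n)$ yields
\begin{equation*}
\tilde u\bigl(x+\tfrac{1}{n}\bigr)=\ov u_n(x)\le\ov u\bigl(x+\tfrac{1}{n}\bigr)\le\tilde u\bigl(x+\tfrac{1}{n}\bigr),
\end{equation*}
so $\ov u=\tilde u$ at every point of the form $x+1/n$. Letting $n\to\infty$ and using continuity of the concave functions $\ov u$ and $\tilde u$ on $(0,\infty)$ establishes $\ov u=\tilde u$ everywhere, thereby proving items (i)--(iii).

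For the existence statement in item~(iv), I would aggregate the optimisers $\widehat{g}_n$ by setting $\widehat{g}:=\limmed_n\widehat{g}_n$. A diagonal argument shows that $\ov{\mathcal{C}}(x)$ is closed under medial limits of its own sequences, hence $\widehat{g}\in\ov{\mathcal{C}}(x)$. Interchanging expectation with $\limmed$ via Assumption~\ref{ass:limmed}, controlling the positive part of $U(\widehat{g}_n+1/n)$ by Assumption~\ref{ass:U-I} and the negative part by the Fenchel bound together with Assumption~\ref{ass:v-finite}, one concludes $\inf_{\P\in\cP}\E_\P[U(\widehat{g})]\ge\limmed_n\ov u_n(x)=\ov u(x)$, which combined with the duality already established gives equality. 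The hardest step is precisely this last one: in the non-dominated setting Koml\'os-type compactness is unavailable, and Assumption~\ref{ass:v-finite} together with the medial-limit Fatou inequality are exactly what is needed to control $U(\widehat{g})^-$ uniformly over $\P\in\cP$; this is also the reason why medial limits appear in the hypotheses of Theorem~\ref{thm:main-2}.
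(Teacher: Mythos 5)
Your proposal follows the same high-level route as the paper — replace $U$ by $U_n(x):=U(x+1/n)$, apply Theorem~\ref{thm:main} to each $U_n$, and pass to the limit, using medial limits to produce a candidate optimiser — but several intermediate steps are incorrect or need justification, and the fixes are precisely what the paper's proof does differently.

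First, the claimed exact identity $V_n(y)=V(y)+y/n$ does not hold for all $y$. The substitution $z=x+1/n$ gives
\begin{equation*}
V_n(y)=\sup_{z\ge 1/n}\bigl[U(z)-zy\bigr]+\tfrac{y}{n},
\end{equation*}
which equals $V(y)+y/n$ only when the maximiser $x^\ast(y)$ lies in $[1/n,\infty)$. Since $x^\ast(y)\to 0$ as $y\to\infty$, the pointwise identity $V_n(z)=V(z)+z/n$ fails for $z$ large enough; because $v_n(y)$ integrates $V_n$ against the \emph{random} argument $d\Q/d\P$ (which can be arbitrarily large), the conclusion $v_n(y)=v(y)+y/n$ is false in general. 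Only the two-sided bound $V(y)\le V_n(y)\le V(y)+y/n$ holds, which gives the limit $\inf_n V_n=V$; the paper's Lemma~\ref{lem:conjugate.converges} establishes exactly this. Second, the step $\ov u_n(x)\le\ov u(x+1/n)$ needs $g+1/n\in\ov{\mathcal{C}}(x+1/n)$ for $g\in\ov{\mathcal{C}}(x)$, i.e.\ essentially convexity of $\mathcal{C}$, which is \emph{not} assumed. Your ``WLOG replace $\mathcal{C}$ by its convex solid hull'' preserves $\mathcal{D}$ and $\mathcal{C}\cap C_b$ (since both sides of hypothesis~(3) are unchanged), but it changes $\ov{\mathcal{C}}(x)$, which enters the assertion of item~(iv), so the reduction is not obviously without loss. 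Third, and most seriously for item~(iv): setting $\widehat g:=\limmed_n\widehat g_n$ with $\widehat g_n\in\ov{\mathcal{C}}(x)$ requires $\ov{\mathcal{C}}(x)$ to be closed under medial limits of its own elements. A ``diagonal argument'' is not available for medial limits the way it is for ordinary limits — $\limmed$ is only a positive linear functional and does not commute with taking diagonals or iterated limits without a separate argument — and $\ov{\mathcal{C}}(x)$ is by definition the set of medial limits of sequences in $\mathcal{C}(x)$, not its $\limmed$-closure. The paper sidesteps all three points: it uses Theorem~\ref{thm:main}, item~\eqref{thm:res-4}, to conclude $\ov u_n=u_n$, picks near-optimisers $g_n\in\mathcal{C}(x)$ (not in $\ov{\mathcal{C}}(x)$) with $\ov u_n(x)\le\inf_\P\E_\P[U(g_n+1/n)]+1/n$, and sets $\widehat g:=\limmed_n g_n\in\ov{\mathcal{C}}(x)$ directly by definition; Jensen and Fatou for medial limits plus Assumption~\ref{ass:U-I} then give $\inf_\P\E_\P[U(\widehat g)]\ge\inf_n\ov u_n(x)\ge\ov u(x)$, establishing simultaneously $\ov u=\inf_n\ov u_n$ and the existence of the optimiser. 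Incorporating these fixes would turn your sketch into the paper's argument.
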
	
%
%
\section{Main results under drift and volatility uncertainty}\label{sec:application}

The goal of this section is to show that the main assumptions imposed in Theorem~\ref{thm:main}~\&~\ref{thm:main-2}, namely the bipolar relation of $\cC$ and $\cD$ and the convex-compactness assumption on $\cP$ and $\cD$, are naturally fulfilled in the context of robust utility maximisation under simultaneous drift and volatility uncertainty.

To that end, in this section, 
let $\Omega=C([0,T],\R^d)$ and $\mathcal{F}$ be its Borel $\sigma$-field. 
We denote by $(S_t)_{0\leq t \leq T}$ the canonical process on $C([0,T],\R^d)$, i.e.\ $S_t(\omega)=\omega(t)$. Moreover, let
$\mathbb{F}:=(\mathcal{F}_t)_{0\leq t \leq T}$ be the raw filtration generated by the canonical process $S$, i.e.\ $\mathcal{F}_t=\sigma(S_s, s\leq t)$, and denote by $\F^*=(\mathcal{F}_t^*)_{0\leq t \leq T}$ the corresponding universal $\sigma$-field. 
%

Now consider the following sets of  Borel probability measures on $\Omega$ which were introduced in \cite{neufeld2014measurability}.
\begin{equation*}
\begin{split}
&\cP_{sem}:=\big\{ \P \in \fP(\Omega)\colon S \mbox{ is a semimartingale on } (\Omega,\cF,\F,\P)\big\},\\
&\cP^{ac}_{sem}:=\big\{ \P \in \cP_{sem}\colon B^\P\ll dt, C^\P\ll dt  \ \ \P\mbox{-a.s.}\big\},
\end{split}
\end{equation*}
where $B^\P$ and $C^\P$ denotes the first and second characteristic of the continuous semimartingale $S$ under $\P$; we refer to \cite{JacodShiryaev.03} for further discussions regarding semimartingale theory. Given any Borel set $\Theta\subseteq \R^d\times\S^d_+$ we then define the set $\cP$ 
by
\begin{equation}\label{P-theta}
\cP\equiv\cP^{ac}_{sem}(\Theta):=\big\{ \P \in \cP^{ac}_{sem} \colon (b^\P, c^\P) \in \Theta \ \ \P\otimes dt\mbox{-a.e.}\big\},
\end{equation}
where $(b^\P,c^\P)$ denotes the differential characteristics of $S$ under $\P$; see also \cite{JacodShiryaev.03,neufeld2014measurability}. We use the standard notion to say  that a property holds $\cP$-q.s.\ if it holds true $\P$-a.s.\ for all $\P\in \cP$.
%
%
%
%
During this section, we fix a set $\Theta \subseteq \R^d \times \S^d_+$ and impose the following conditions.
\begin{assumption}\label{ass:invertible}
	The set $\Theta \subseteq \R^d \times \S^d_+$ satisfies the following:
	
	\vspace{0.2cm}
	\noindent
	$\bullet$ $\Theta \subseteq \R^d \times \S^d_+$ is convex and compact,\\
	$\bullet$ there exists $\underline{c} \in \S^d_{++}$ such that $\underline{c}\leq c$ for all $c \in \proj_c(\Theta)
	=:\Theta_c$, where
	\begin{equation*}
	\proj_c(\Theta):=\big\{c\in \S^d_+\colon \exists\, b \in \R^d \mbox{ such that } (b,c)\in \Theta \big\}.
	\end{equation*}	
\end{assumption}
\begin{remark} \label{rem:Theta}
	Assumption~\ref{ass:invertible} guarantees that each $c \in \Theta_c$ is in $\S^d_{++}$, in particular is invertible. Moreover, we have for each $c \in \Theta_c$ that both $c$ and $c^{-1}$ are bounded. The uniform ellipticity condition in Assumption~\ref{ass:invertible} however imposes that each $\P \in \cP$ corresponds to a complete financial market. From a technical point of view, this condition allows for each $\P \in \cP$ to guarantee the existence of an equivalent martingale measure $\Q \in  \cM := \cP^{ac}_{sem}(\widetilde{\Theta})$, where $\widetilde\Theta:=\{0,\dots,0\}\times \Theta_c\subseteq \R^d \times \S^d_+$, and vice versa for each $\Q \in \cM$ that there exists $\P \in \cP$ such that $\P \approx \Q$; we refer to Proposition~\ref{le:Q-P}. We point out that a similar condition has also been imposed in Denis~\&~Kervarec \cite{DenisKervarec.13}; see Hypothesis (H) in their paper. This in turn allows us to identify $\cD=\cM$, which together with Proposition~\ref{prop:separation-app} is the key property enabling us to show that the bipolar relation on the subset $C_b$ introduced in \eqref{intro:bipolar-our-1} and \eqref{intro:bipolar-our-2} naturally holds in the context of drift and volatility uncertainty; we refer to Proposition~\ref{prop:D-C-Cb-polar} and Proposition~\ref{prop:C-C-b-char}.
\end{remark}
Next, let us introduce a particular filtration $\mathbb G:=(\mathcal{G}_t)_{0\leq t\leq T}$ defined by 
\begin{equation}\label{def:filtration-G}
\begin{split}
\mathcal{G}_t&:= \bigcap_{s>t} \big(\cF^*_s \vee \mathcal{N}^\cP\big), 
\qquad 0 \leq t \leq T,
\end{split}
\end{equation} 
where $\mathcal{N}^{\cP}$ is the collection of all sets which are $\cF_T$-$\P$-null for all $\P \in \cP$.  A priori, the filtration $\mathbb G$ looks non-natural. However,  it will be helpful in the sequel to apply results in \cite{NeufeldNutz.12,nutz2015robust} where this filtration has been used; see also Remark~\ref{rem:Nullsets}.
 In addition, note that for every $\P \in \cP$, the filtration $\mathbb G$ satisfies that $\F\subseteq \mathbb{G} \subseteq \F^\P_+$, 
where $\F_+$ denotes the right continuous version of $\F$ and  $\F_+^\P$ denotes the usual $\P$-augmentation of $\F$; see also the following Remark~\ref{rem:filtration}.
\begin{remark}\label{rem:filtration}
	By \cite[Proposition~2.2]{neufeld2014measurability} we know that $(S_t)_{0\leq t \leq T}$ is a $\P$-$\F$-semimartingale if and only if it is a $\P$-$\F_+$-semimartingale, as well as if and only if it is a $\P$-$\F_+^\P$-semimartingale.
	 Moreover, the associated semimartingale characteristics with respect to these filtrations are the same. In particular, we see that \eqref{P-theta} does not depend on the 
	choice of the filtration $\mathbb G$, as long as $\F \subseteq \mathbb G \subseteq \F^\P_+$.
\end{remark}

Furthermore, for any fixed  $\P\in \fP(\Omega)$ such that $S$ is a $\P$-semimartingale and any (predictable) process $H$ which is $\P$-$S$-integrable in the semimartingale sense (see, e.g., \cite[Definition~III.6.17, p.207]{JacodShiryaev.03}), we denote by $\int H\,dS\equiv(H\cdot S)\equiv ^{(\P)}\!\!(H\cdot S)$ the usual stochastic integral under $\P$. 
Then, we define $\mathcal{H}$ to be the set of all $\mathbb G$-predictable processes $H$ which are $\P$-$S$-integrable in the semimartingale sense 
for all $\P \in \cP$  such that 
$(H\cdot S)\geq -c$ \, $\P$-a.s.\ $\forall\, \P\in \cP$ 
for some constant $c>0$,
where $c$ may depend on $H$ and $\mathbb{P}$.
%
%
%
%
Finally we specify the sets $\cC$, $\cD$ appearing in Theorem~\ref{thm:main} of the previous section. We define\footnote{We choose $\fP:=\fP_e(\cP)$ in the definition of $\cD$.}
\begin{equation}\label{C-D-applic}
\begin{split}
\cC&\!:=\!\Big\{ X: \Omega \to [0,\infty] \mbox{ }\cF^*_T\mbox{-measurable}\colon \exists\, H \in \cH \mbox{ so that } 1+ (H \cdot S)_T \geq  X \  \,
\cP\mbox{-q.s.}
\Big\},\\
\fP_e(\cP)&:=\Big\{\Q \in \fP(\Omega)\colon  \exists\, \P \in \cP \mbox{ such that } \Q \approx \P
\Big\},\\
\cD&:=
\Big\{\Q \in \fP_e(\cP)\colon \E_\Q[X]\leq 1 \ \mbox{ for all } X \in \cC \Big\}.
\end{split}
\end{equation}
%
%
Moreover, for every $x,y>0$ we define the sets $\cC(x)$, $\ov \cC(x)$, and $\cD(y)$, as well as the functions $u(x)$, $\ov u(x)$, and $v(y)$ analog to Section~\ref{sec:abstract}.
Now we are able to state the main results of this section.
We distinguish the two cases where $U(0)>-\infty$ and $U(0)=-\infty$.
%
%
%
%

Now, we provide our main result under the setting of Section~\ref{sec:application} for utility functions which satisfy that $U(0)>-\infty$. 
\begin{theorem}
	\label{thm:main-app}
	Let $U$ be a utility function satisfying Assumption~\ref{ass:U-1}, let $\cP$, $\mathcal{C}$, and $\mathcal{D}$ be defined as in  \eqref{P-theta} and \eqref{C-D-applic} such that  Assumption~\ref{ass:u-finite} and Assumption~\ref{ass:invertible} hold.
	Then 
	\begin{enumerate}[(I)]
		\item 
		Item~(\ref{thm:res-1}), Item~(\ref{thm:res-1b}), Item~(\ref{thm:res-2}), and Item~(\ref{thm:res-4-neu}) of Theorem~\ref{thm:main} hold.
	\end{enumerate}
If in addition, we assume that Assumption~\ref{ass:limmed}, Assumption~\ref{ass:u-bar-finite}, and Assumption~\ref{ass:U-I} hold, then we additionally obtain that
\begin{enumerate}[(I)]
	\addtocounter{enumi}{1}
	
		\item Item~(\ref{thm:res-4}) and Item~(\ref{thm:res-3}) of Theorem~\ref{thm:main} hold.
	\end{enumerate}
\end{theorem}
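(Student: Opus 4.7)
The strategy is reduction: I would prove Theorem~\ref{thm:main-app} by verifying, in the drift/volatility uncertainty setup of Section~\ref{sec:application}, each hypothesis of the abstract Theorem~\ref{thm:main} and then invoking that theorem as a black box. The assumptions on $U$ (Assumption~\ref{ass:U-1}, \ref{ass:u-finite}) as well as, for part~(II), the additional Assumptions~\ref{ass:limmed}, \ref{ass:u-bar-finite}, \ref{ass:U-I} are imposed directly, so what remains is to check Assumption~\ref{ass:no-arbitrage-style} together with hypotheses~\eqref{thm:ass-1}, \eqref{thm:ass-2}, \eqref{thm:ass-3} of Theorem~\ref{thm:main} for the concrete $\cP=\cP^{ac}_{sem}(\Theta)$ and $\cC,\cD$ defined in \eqref{C-D-applic}.

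\textbf{Step 1: identification $\cD=\cM$ and the convex-compactness assumption~\eqref{thm:ass-1}.} First I would invoke Proposition~\ref{le:Q-P}, which under the uniform ellipticity of Assumption~\ref{ass:invertible} asserts that the set $\cM:=\cP^{ac}_{sem}(\widetilde\Theta)$ (with $\widetilde\Theta=\{0\}\times\Theta_c$) consists exactly of local martingale measures for $S$, and that every $\P\in\cP$ admits an equivalent $\Q\in\cM$ (and vice versa). The latter immediately verifies Assumption~\ref{ass:no-arbitrage-style} once it is shown that $\cD=\cM$. Convex-compactness of both $\cP$ and $\cM$ follows from convexity and compactness of $\Theta$ (and $\widetilde\Theta$) combined with the general stability/compactness result for $\cP^{ac}_{sem}(\Theta)$ from \cite{neufeld2014measurability}; in particular this handles half of~\eqref{thm:ass-1}. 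The remaining identification $\cD=\cM$ is packaged in Proposition~\ref{prop:D-C-Cb-polar}: since $S$ has continuous paths, by Kabanov~\cite{kabanov1997ftap} / \cite[Lemma~5.1.3]{delbaen2006mathematics} every separating measure (i.e.\ every $\Q\in\fP_e(\cP)$ satisfying $\E_\Q[(H\cdot S)_T]\leq 0$ for all $H\in\cH$) is already a local martingale measure.

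\textbf{Step 2: bipolar relation~\eqref{thm:ass-2}.} Here I must show that $\cD$ is already determined by testing against $\cC\cap C_b$ rather than all of $\cC$. The containment $\supseteq$ is trivial. For $\subseteq$ I would rely on Proposition~\ref{prop:separation-app}: any $\Q\in\fP_e(\cP)$ whose pairing with every continuous bounded superhedgeable claim is $\leq 1$ must annihilate stochastic integrals (using that $(H\cdot S)_T$ can be approximated by continuous bounded superhedgeable payoffs in a suitable sense on the Polish space $\Omega$), hence is a separating measure, hence by Step~1 lies in $\cM=\cD$. This is exactly the point where the Polish structure of $\Omega=C([0,T],\R^d)$ together with the continuity of $S$ is exploited, and I expect it to be one of the delicate steps, though the work is already done in Proposition~\ref{prop:separation-app} and Proposition~\ref{prop:D-C-Cb-polar}.

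\textbf{Step 3: bipolar relation~\eqref{thm:ass-3}, and conclusion.} The inclusion $\cC\cap C_b^+ \subseteq \{X\in C_b^+\colon \E_\Q[X]\leq 1\text{ for all } \Q\in\cD\}$ is immediate from the definitions. The reverse inclusion is the robust superhedging duality and is the main analytic obstacle: given a continuous bounded $g\geq 0$ with $\sup_{\Q\in\cD}\E_\Q[g]\leq 1$, one must produce $H\in\cH$ with $1+(H\cdot S)_T\geq g$ $\cP$-q.s. I would invoke Proposition~\ref{prop:C-C-b-char}, which is built on the quasi-sure optional decomposition of Soner-Touzi-Zhang / Soner-Nutz~\cite{nutz2012superhedging}, Neufeld-Nutz~\cite{NeufeldNutz.12}, and Nutz~\cite{nutz2015robust}: one first constructs a $\cG$-optional $\Q$-supermartingale envelope $Y$ with $Y_T=g$ simultaneously under all $\Q\in\cM=\cD$, and then applies the quasi-sure optional decomposition to write $Y=1+(H\cdot S)-K$ for some $H\in\cH$ and a nondecreasing $\cP$-q.s.\ process $K\geq 0$; this yields $1+(H\cdot S)_T \geq g$ as required. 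The hard part is this superhedging theorem itself, but with Assumption~\ref{ass:invertible} in force the cited results apply. Once Steps~1-3 are in place, part~(I) follows from Items~\eqref{thm:res-1}-\eqref{thm:res-4-neu} of Theorem~\ref{thm:main}, and under the additional assumptions of part~(II) the existence and attainability statements~\eqref{thm:res-4}-\eqref{thm:res-3} of Theorem~\ref{thm:main} also transfer verbatim.
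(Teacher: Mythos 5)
Your proposal matches the paper's proof essentially line-for-line: the paper also reduces to Theorem~\ref{thm:main} by verifying Assumption~\ref{ass:no-arbitrage-style} via Lemma~\ref{le:Q-P}, hypothesis~\eqref{thm:ass-1} via Proposition~\ref{prop:P-D-convex-compact}, and hypotheses~\eqref{thm:ass-2}--\eqref{thm:ass-3} via Propositions~\ref{prop:D-C-Cb-polar} and~\ref{prop:C-C-b-char} (the latter resting on Proposition~\ref{prop:separation-app} and the robust optional decomposition). The only minor discrepancy is attributional: the paper obtains compactness of $\cP^{ac}_{sem}(\Theta)$ from \cite[Theorem~2.5]{liu2019compactness} and convexity from \cite[Theorem~III.3.40]{JacodShiryaev.03}, rather than from \cite{neufeld2014measurability} as you suggest.
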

%
%

%
%

Next, we provide our main result under the setting of Section~\ref{sec:application} for utility functions which satisfy that $U(0)=-\infty$. 
\begin{theorem}
	\label{thm:main-app-2}
	Let Assumption~\ref{ass:limmed} hold.
	Let $U$ be a utility function satisfying Assumption~\ref{ass:U-2}, let $\cP$, $\mathcal{C}$, and $\mathcal{D}$ be defined as in  \eqref{P-theta} and \eqref{C-D-applic} such that
	Assumptions~\ref{ass:u-bar-finite},~\ref{ass:U-I},~\&~\ref{ass:v-finite},
	 and Assumption~\ref{ass:invertible} hold.
	Then 
	\begin{enumerate}[(I)]
		\item  Item~(\ref{thm:res-1-2}), Item~(\ref{thm:res-1b-2}), Item~(\ref{thm:res-2-2}), and Item~(\ref{thm:res-3-2}) of Theorem~\ref{thm:main-2} hold.
	\end{enumerate}
\end{theorem}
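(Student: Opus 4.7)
The plan is to deduce Theorem~\ref{thm:main-app-2} directly from the abstract Theorem~\ref{thm:main-2} by verifying, in the concrete drift-and-volatility-uncertainty setting of Section~\ref{sec:application}, the three structural hypotheses of Theorem~\ref{thm:main-2} that are not already assumed: namely, Assumption~\ref{ass:no-arbitrage-style}, the convex-compactness of $\cP$ and $\cD$ (item~(1)), and the two halves of the bipolar relation (items~(2) and~(3)). The remaining hypotheses of Theorem~\ref{thm:main-2}, namely Assumptions~\ref{ass:limmed},~\ref{ass:U-2},~\ref{ass:u-bar-finite},~\ref{ass:U-I}, and~\ref{ass:v-finite}, are imposed directly in the statement of Theorem~\ref{thm:main-app-2}.

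The first step is to identify $\cD$ with a set of local martingale measures. Put $\widetilde\Theta := \{0,\dots,0\}\times\Theta_c$ and $\cM := \cP^{ac}_{sem}(\widetilde\Theta)$. Under Assumption~\ref{ass:invertible}, Proposition~\ref{le:Q-P} yields an equivalence-modulo-$\approx$ correspondence between $\cP$ and $\cM$: for every $\P\in\cP$ there is $\Q\in\cM$ with $\Q\approx\P$, and conversely. Since $S$ has continuous paths, Proposition~\ref{prop:separation-app} identifies the separating measures with the local martingale measures, and combined with the admissibility of $\cH$ this yields the identification $\cD=\cM\cap\fP_e(\cP)$. Assumption~\ref{ass:no-arbitrage-style} then follows at once by picking, for given $\P\in\cP$, a $\Q\in\cM$ with $\Q\approx\P$.

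The second step is to verify convexity and weak-compactness of $\cP$ and of $\cD$. Convexity follows from the convexity of $\Theta$ (resp.\ $\widetilde\Theta$) imposed in Assumption~\ref{ass:invertible}. Compactness is obtained from the compactness of $\Theta$: the uniformly bounded differential characteristics give tightness via standard moment estimates for continuous semimartingales, while closedness under weak convergence follows by a martingale-problem argument. The third step is the bipolar relation on $C_b$. Part~(2), asserting that $\cD$ is already the polar of $\cC\cap C_b$, is delivered by Proposition~\ref{prop:D-C-Cb-polar}, whose content is that the identification $\cD=\cM$ together with the Polish structure of $\Omega$ forces the separating property to be testable on continuous bounded superhedgeable claims. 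Part~(3), the robust superhedging direction, is delivered by Proposition~\ref{prop:C-C-b-char}: given $g\in C_b^+$ with $\sup_{\Q\in\cD}\E_\Q[g]\leq 1$, one constructs $H\in\cH$ with $1+(H\cdot S)_T\geq g$ $\cP$-q.s.\ by invoking the robust optional decomposition of Soner--Nutz, Neufeld--Nutz, and Nutz applied to $\cP=\cP^{ac}_{sem}(\Theta)$, which produces an aggregated $\cP$-quasi-sure supermartingale $Y$ with $Y_T=g$ of the form $Y=1+(H\cdot S)-K$.

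The main obstacle will be part~(3) of the bipolar relation: the classical El~Karoui--Quenez--Kramkov optional decomposition does not apply in the non-dominated setting, and its robust analogue requires the sophisticated aggregation and robust-supermartingale machinery of the cited works. The uniform ellipticity in Assumption~\ref{ass:invertible} is essential at exactly this point, since it is what guarantees the identification $\cD=\cM$ (via Proposition~\ref{le:Q-P}) and therefore a sufficiently rich dual set to convert the inequality $\sup_{\Q\in\cD}\E_\Q[g]\leq 1$ into genuine $\cP$-q.s.\ superhedgeability. Once the three structural ingredients above are in place, the conclusion of Theorem~\ref{thm:main-app-2} follows by directly invoking items~(\ref{thm:res-1-2})--(\ref{thm:res-3-2}) of Theorem~\ref{thm:main-2}.
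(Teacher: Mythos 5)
Your plan coincides with the paper's proof: the paper likewise verifies Assumption~\ref{ass:no-arbitrage-style} via Lemma~\ref{le:Q-P}, convexity and compactness of $\cP$ and $\cD$ via Proposition~\ref{prop:P-D-convex-compact}, and the bipolar relation via Propositions~\ref{prop:D-C-Cb-polar} and~\ref{prop:C-C-b-char}, and then invokes Theorem~\ref{thm:main-2}. Two small remarks for accuracy: for compactness the paper does not re-derive tightness and closedness but cites an existing result, and in Proposition~\ref{prop:C-C-b-char} the robust supermartingale construction and the optional decomposition are applied to the set $\cM=\cP^{ac}_{sem}(\widetilde\Theta)$ of local martingale measures (not to $\cP$ directly), the superhedge then being transferred to a $\cP$-q.s.\ statement via Lemma~\ref{le:Q-P} and Remark~\ref{rem:Nullsets}.
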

The  idea of the proof of Theorem~\ref{thm:main-app} and Theorem~\ref{thm:main-app-2} is to verify 
the bipolar relation of $\cC$ and $\cD$ and the convex-compactness assumption on $\cP$ and $\cD$ to be able to apply Theorem~\ref{thm:main} and Theorem~\ref{thm:main-2}, respectively.
We refer to Subsection~\ref{subsec:proof-app} for their proofs. 
%
%

Finally, we would like to emphasise that Assumptions~\ref{ass:u-finite}~\&~\ref{ass:u-bar-finite} and Assumptions~\ref{ass:U-I}~\&~\ref{ass:v-finite}  are naturally satisfied in the setting of Section~\ref{sec:application}  by showing that they automatically hold true in the case where   $U(x)=\log(x)$, $U(x)=\tfrac{x^p}{p}$, $p \in (-\infty,0)\cup(0,1)$, and $U(x)=-e^{-\lambda x}$, $\lambda>0$ (see also Remark~\ref{rem:ass:U-I} and Remark~\ref{rem:ass:U-I-2}). As in the previous results 
we distinguish the two cases where $U(0)>-\infty$ and $U(0)=-\infty$.  
\begin{corollary}
	\label{co:main-app}
	Let $U$ be either  a power utility $U(x)=\tfrac{x^p}{p}$ for some $p\in (0,1)$, or an exponential utility function $U(x)=-e^{-\lambda x}$ for some $\lambda>0$. Moreover, let $\cP$, $\mathcal{C}$, and $\mathcal{D}$ be defined as in  \eqref{P-theta} and \eqref{C-D-applic} such that  Assumption~\ref{ass:invertible} holds.
	Then 
	\begin{enumerate}[(I)]
		\item Item~(\ref{thm:res-1}), Item~(\ref{thm:res-1b}), Item~(\ref{thm:res-2}), and Item~(\ref{thm:res-4-neu}) of Theorem~\ref{thm:main} hold, and $v(y) \in \R$ for all $y>0$.
	\end{enumerate}
If in addition, we assume that Assumption~\ref{ass:limmed} holds, then we additionally obtain that
\begin{enumerate}[(I)]
	\addtocounter{enumi}{1}
		\item Item~(\ref{thm:res-4}) and Item~(\ref{thm:res-3}) of Theorem~\ref{thm:main} hold.
\end{enumerate}
\end{corollary}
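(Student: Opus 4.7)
The plan is to verify the outstanding hypotheses of Theorem~\ref{thm:main-app} in each case. The common technical input is Assumption~\ref{ass:invertible}: boundedness of $\Theta$ together with uniform ellipticity bound the Girsanov integrand $\theta^\P$ transforming $\P$ into its equivalent martingale measure $\Q^\P \in \cD$ (whose existence is recorded in Remark~\ref{rem:Theta}) uniformly in $\P \in \cP$. Standard exponential moment estimates for Dol\'eans--Dade exponentials with bounded integrand then yield
\begin{equation*}
\sup_{\P \in \cP} \E_\P\bigl[(Z^\P_T)^\alpha\bigr] < \infty \quad \text{for every } \alpha \in \R,
\end{equation*}
where $Z^\P_T := d\Q^\P/d\P$, and this single bound will deliver all remaining assumptions.

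For $U(x) = -e^{-\lambda x}$, $U$ is bounded with range in $[-1,0]$; hence $u, \ov u \leq 0$, the sequence $\max\{U(g_n + 1/n),0\}$ vanishes identically, and a direct calculation shows $V$ is bounded on $(0,\infty)$. This settles Assumptions~\ref{ass:u-finite},~\ref{ass:u-bar-finite},~\ref{ass:U-I} and $v(y) \in \R$, so Theorem~\ref{thm:main-app} immediately delivers (I) and (II).

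For $U(x) = x^p/p$, $p \in (0,1)$, one has $U \geq 0$ and $V(y) = \tfrac{1-p}{p}y^{-q}$ with $q := p/(1-p)$. Assumptions~\ref{ass:u-finite} and \ref{ass:u-bar-finite} are obtained via the minimax inequality $u(x), \ov u(x) \leq \inf_{\P \in \cP} u_\P(x)$ combined with classical complete-market duality in each $(\Omega,\P)$, giving $u_\P(x) \leq \inf_{y>0}\bigl\{xy + \tfrac{1-p}{p} y^{-q}\E_\P[(Z^\P_T)^{-q}]\bigr\} < \infty$ with a bound uniform in $\P$. Taking $\Q = \Q^\P$ in the infimum defining $v(y)$ likewise produces $v(y) \leq \tfrac{1-p}{p}y^{-q}\E_\P[(Z^\P_T)^{-q}] < \infty$, while $v(y) > -\infty$ is trivial from $V \geq 0$.

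The main obstacle is Assumption~\ref{ass:U-I}. Fix $\P \in \cP$, $x > 0$, $(g_n)_n \subseteq \cC(x)$, and choose $p' \in (p,1)$. Since $U(g_n+1/n)^{p'/p}$ is a constant multiple of $(g_n+1/n)^{p'}$, the de la Vall\'ee--Poussin criterion reduces the task to a uniform $L^{p'}(\P)$-bound on $g_n+1/n$. Now $\Q^\P \in \cD$ yields $\E_{\Q^\P}[g_n] \leq x$, so writing $\E_\P[h] = \E_{\Q^\P}[h/Z^\P_T]$ and applying H\"older's inequality under $\Q^\P$ with exponents $1/p'$ and $1/(1-p')$ gives
\begin{equation*}
\E_\P\bigl[(g_n+1/n)^{p'}\bigr] \leq \bigl(\E_{\Q^\P}[g_n+1/n]\bigr)^{p'}\bigl(\E_{\Q^\P}[(Z^\P_T)^{-1/(1-p')}]\bigr)^{1-p'},
\end{equation*}
with both factors bounded uniformly in $n$ via the $\cD$-bound and the density moment estimate. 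This H\"older step is the crux, as it is the only place where both the dual bound on $\cC(x)$ and the density moments must be combined; everything else is either immediate or a direct consequence of the moment estimate. With all missing hypotheses thereby verified, Theorem~\ref{thm:main-app} delivers (I), and under the additional Assumption~\ref{ass:limmed} also (II).
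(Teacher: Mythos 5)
Your proposal is correct and follows essentially the same route as the paper: the exponential case is dispatched by boundedness from above; for the power case you establish uniform density-moment estimates for $d\P/d\Q$ via exponential-martingale bounds under Assumption~\ref{ass:invertible} (the paper's Lemma~\ref{le:density-estimate}), obtain a uniform $L^{p'}(\P)$-bound on $(g_n)$ via the same H\"older step under $\Q^\P$ (Lemma~\ref{le:cons-density-estimate}), conclude Assumption~\ref{ass:U-I} by de la Vall\'ee-Poussin (Lemma~\ref{le:power-U-I}), and get Assumptions~\ref{ass:u-finite},~\ref{ass:u-bar-finite} and $v(y)\in\R$ by bounding $v$ through a specific $\Q^\P$ (Lemma~\ref{le:u-finite},~\ref{le:power-V}). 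Two cosmetic remarks: your detour through single-$\P$ duality $u(x)\leq\inf_\P u_\P(x)$ is a restatement of the abstract weak duality $u(x)\leq v(y)+xy$ already available in the paper, and you claim $\sup_\P\E_\P[(Z^\P_T)^\alpha]<\infty$ for \emph{all} $\alpha\in\R$ whereas the paper's Lemma~\ref{le:density-estimate} proves and uses only the range $\alpha<1$ (equivalently $\delta>0$) — which is all you actually invoke, so no harm done.
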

The following corollary corresponds to the case where $U(0)=-\infty$.
\begin{corollary}
	\label{co:main-app-2}
	Let Assumption~\ref{ass:limmed} hold.
	Let $U$ be either the log utility function $U(x)=\log(x)$, or a power utility $U(x)=\tfrac{x^p}{p}$ for some $p\in (-\infty,0)$. Moreover, let $\cP$, $\mathcal{C}$, and $\mathcal{D}$ be defined as in  \eqref{P-theta} and \eqref{C-D-applic} such that  Assumption~\ref{ass:invertible} holds.
	Then 
	\begin{enumerate}[(I)]
		\item Item~(\ref{thm:res-1-2}), Item~(\ref{thm:res-1b-2}), Item~(\ref{thm:res-2-2}), and Item~(\ref{thm:res-3-2}) of Theorem~\ref{thm:main-2} hold.
	\end{enumerate}
\end{corollary}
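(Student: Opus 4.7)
My plan is to reduce Corollary~\ref{co:main-app-2} to Theorem~\ref{thm:main-app-2} by verifying its utility-specific hypotheses --- Assumptions~\ref{ass:u-bar-finite}, \ref{ass:U-I}, and \ref{ass:v-finite} --- for each of $U(x)=\log x$ and $U(x)=x^p/p$ with $p\in(-\infty,0)$. The remaining inputs are given or immediate: Assumption~\ref{ass:limmed} and Assumption~\ref{ass:invertible} are hypotheses of the corollary, while Assumption~\ref{ass:U-2} is a direct check (both utilities satisfy $U(0)=-\infty$, are real-valued on $(0,\infty)$, and obey $U(x_n)/x_n\to 0$ as $x_n\to\infty$ by l'H\^opital in the log case or since $x^{p-1}/p\to 0$ in the power case).

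In the power case $p<0$ the verification is essentially trivial because $U\leq 0$ on $(0,\infty)$: the positive parts $\max\{U(g_n+1/n),0\}$ vanish identically, so Assumption~\ref{ass:U-I} is automatic, and $\ov u(x)\leq 0$ for every $x>0$ gives Assumption~\ref{ass:u-bar-finite}. A direct computation of the conjugate $V_1(y)=\sup_{z\geq 1}[z^p/p - (z-1)y]$ yields the interior critical point $z^{\ast}=y^{1/(p-1)}$, which lies in $[1,\infty)$ only when $y\leq 1$; a short case analysis then gives $V_1\leq 0$ throughout, hence $V_1^{+}\equiv 0$. Consequently Assumption~\ref{ass:v-finite} holds with any $\Q\in\cD$ equivalent to the fixed $\P\in\cP$, and such a $\Q$ is produced by Proposition~\ref{le:Q-P} under Assumption~\ref{ass:invertible}.

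The logarithmic case carries the real content. For each $\P\in\cP$ I would use Proposition~\ref{le:Q-P} (whose proof relies on the uniform ellipticity in Assumption~\ref{ass:invertible}) to produce an equivalent martingale measure $\Q\in\cM\subseteq\cD$ arising from a Girsanov transformation with bounded market-price-of-risk $\theta=\sigma^{-1}b$; boundedness of $\theta$ then forces both $d\Q/d\P$ and $d\P/d\Q$ into every $L^p(\P)$ and gives them finite exponential moments. Since every $g\in\cC(x)$ satisfies $\E_\Q[g]\leq x$, a H\"older estimate with exponents $(1/\delta,1/(1-\delta))$ produces $\sup_{g\in\cC(x)}\E_\P[g^\delta]<\infty$ for every small $\delta>0$, and combined with the elementary bound $(\log^{+}\!y)^{1+\eps}\leq C_\eps\, y^\delta$ this delivers both $\ov u(x)<\infty$ (via Jensen applied to $\log$) and the uniform integrability required by Assumption~\ref{ass:U-I} (via de la Vall\'ee-Poussin). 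Finally the explicit formula $V_1(y) = \max\{-\log y - 1 + y,\,0\}$, which is supported on $\{y\leq 1\}$, reduces Assumption~\ref{ass:v-finite} to the entropy-type bound $\E_\P[\log^{+}(d\P/d\Q)]<\infty$, which is immediate because $\log(d\P/d\Q)$ equals a stochastic integral against the bounded process $\theta$ plus a bounded drift, hence is Gaussian-integrable under $\P$. With every hypothesis of Theorem~\ref{thm:main-app-2} verified, the conclusion of Corollary~\ref{co:main-app-2} follows at once. The main obstacle is the log case, where one must judiciously select the dual measure from the large family inside $\cD$ and then convert uniform ellipticity into the moment and entropy estimates that drive all three assumptions simultaneously.
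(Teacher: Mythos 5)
Your proposal reduces Corollary~\ref{co:main-app-2} to Theorem~\ref{thm:main-app-2} exactly as the paper does: dispose of the $p<0$ power case by boundedness from above (the paper's Remarks~\ref{rem:ass:U-I} and~\ref{rem:ass:U-I-2}), and for the log case derive $L^\delta$-moment bounds on the Radon--Nikodym densities from the Girsanov construction under the ellipticity Assumption~\ref{ass:invertible} (the paper's Lemma~\ref{le:density-estimate}), apply H\"older to control $\sup_n\E_\P[g_n^\delta]$, then de la Vall\'ee-Poussin for Assumption~\ref{ass:U-I} and the explicit bound on $V_{1,\log}$ for Assumption~\ref{ass:v-finite}. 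The only differences are cosmetic --- you use $(\log^+y)^{1+\varepsilon}\lesssim y^\delta$ where the paper takes $\Psi(x)=e^{\varepsilon x}$, a Gaussian-integrability argument for the entropy term where the paper uses $\log x\le 1+x$, and Jensen in place of the paper's weak-duality step for $\bar u(x)<\infty$ --- so the proof is correct and essentially the same.
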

The proof of Corollary~\ref{co:main-app} and Corollary~\ref{co:main-app-2} are provided in Subsection~\ref{subsec:proof-co}.
%
\section{Proof of Theorem~\ref{thm:main}  and Theorem~\ref{thm:main-2}}\label{sec:proof}
We first start with two well-known results on the extension of utility functions defined on $(0,\infty)$, which we provide for the sake of completeness.
\begin{lemma}\label{le:U-extension}
Let $U\colon(0,\infty)\to \R$ be nondecreasing and concave. 
Let $V\colon(0,\infty) \to (-\infty,\infty]$ be defined by
\begin{equation}\label{eq:le:V}
\begin{split}
V(y)&:=\sup_{x\geq 0}\big[U(x)-xy\big], \quad y>0.\\
\end{split}
\end{equation}
Moreover, define  $\widetilde{U}\colon\R \to [-\infty,\infty)$ and $\widetilde{V}\colon\R \to (-\infty,\infty]$ by
\begin{equation}\label{eq:U-V-extension}
\widetilde{U}(x):=\begin{cases}
U(x) 	& 	x> 0,\\
\lim_{x\downarrow 0} U(x) & x=0,\\
-\infty & x<0,
\end{cases}
\qquad \quad \mbox{ and } \qquad \quad
\widetilde{V}(y):=\begin{cases}
V(y) 	& 	y> 0,\\
\lim_{y\downarrow 0} V(y) & y=0,\\
\infty & y<0.
\end{cases}
\end{equation}
Furthermore, define the function $\varphi\colon\R \to (-\infty,\infty]$ by $\varphi(x)=-\widetilde{U}(-x)$, $x\in \R$.
Then 
\begin{enumerate}[(i)]
	\item\label{U-exten-1} $\widetilde{U}$ is nondecreasing, concave, proper, upper-semicontinuous;
	
	\item\label{U-exten-2}  $\widetilde{V}$ is nonincreasing, convex, proper, lower-semicontinuous;
	
	\item\label{U-exten-3}  $\widetilde{V}$ is the convex conjugate of $\varphi$;
	\item \label{U-exten-4} We have for every $x>0$ that
	\begin{equation*}
	U(x)=\inf_{y\geq 0}\big[V(y)+xy\big].
	\end{equation*}
\end{enumerate}
\end{lemma}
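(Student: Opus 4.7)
The plan is to treat items \eqref{U-exten-1}--\eqref{U-exten-4} in order, with the key strategy being to deduce the biconjugate identity \eqref{U-exten-4} from the Fenchel--Moreau theorem applied to $\varphi$ once the first three items have been established. For \eqref{U-exten-1}, concavity and monotonicity of $\widetilde{U}$ on $(0,\infty)$ are inherited from $U$ and force continuity there; the prescribed value $\widetilde{U}(0)=\lim_{x\downarrow 0} U(x)$ gives right-continuity at $0$, which together with the convention $\widetilde{U}\equiv -\infty$ on $(-\infty,0)$ yields upper semicontinuity on $\R$. Concavity of $\widetilde{U}$ extends to $\R$ by direct check of the defining inequality, the only non-trivial case being a convex combination landing at $0$, which is handled by the right-limit definition. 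Properness is immediate since $\widetilde{U}$ is real-valued on $(0,\infty)$ and never takes value $+\infty$.

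For \eqref{U-exten-2}, $V$ is a pointwise supremum of affine functions of $y$, hence convex and lower semicontinuous, and nonincreasing since each $y\mapsto U(x)-xy$ is nonincreasing for $x\geq 0$. The extensions $\widetilde{V}\equiv +\infty$ on $(-\infty,0)$ and $\widetilde{V}(0)=\lim_{y\downarrow 0} V(y)$ preserve convexity, monotonicity, and lower semicontinuity. Properness splits into two bounds: $V(y)\geq U(x_0)-x_0 y>-\infty$ for any fixed $x_0>0$ gives $\widetilde{V}>-\infty$; the concave slope bound $U(x)\leq U(1)+U_+'(1)(x-1)$ for all $x>0$ (with $U_+'(1)\geq 0$ by monotonicity) gives $V(y)\leq U(1)-U_+'(1)<\infty$ for every $y\geq U_+'(1)$, so $\widetilde{V}\not\equiv +\infty$.

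For \eqref{U-exten-3}, I would compute $\varphi^*$ directly. Substituting $z=-x$,
\begin{equation*}
\varphi^*(y)=\sup_{x\in\R}\bigl[xy-\varphi(x)\bigr]=\sup_{x\in\R}\bigl[xy+\widetilde{U}(-x)\bigr]=\sup_{z\geq 0}\bigl[\widetilde{U}(z)-zy\bigr],
\end{equation*}
using $\widetilde{U}\equiv-\infty$ on $(-\infty,0)$. For $y>0$ the right-hand side equals $V(y)$, the boundary value at $z=0$ being dominated by the supremum over $z>0$ thanks to $\widetilde{U}(0)=\lim_{z\downarrow 0} U(z)$. For $y=0$ the supremum equals $\lim_{z\to\infty} U(z)$, and one checks that this agrees with $\lim_{y\downarrow 0} V(y)=\widetilde{V}(0)$: the crude bound $V(y)\leq \sup_{x\geq 0} U(x)$ for $y\geq 0$ gives one direction, and $V(y)\geq U(z_0)-z_0 y$ for any fixed $z_0>0$ gives the reverse after first letting $y\downarrow 0$ and then $z_0\to\infty$. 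For $y<0$, monotonicity of $U$ forces $U(z)-zy\geq U(1)+|y|z\to\infty$ as $z\to\infty$, so $\varphi^*(y)=+\infty=\widetilde{V}(y)$.

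For \eqref{U-exten-4}, since $\widetilde{U}$ is concave, proper, and upper semicontinuous by \eqref{U-exten-1}, the function $\varphi$ is convex, proper, and lower semicontinuous, so the Fenchel--Moreau theorem yields $\varphi^{**}=\varphi$. Together with $\varphi^*=\widetilde{V}$ from \eqref{U-exten-3}, this gives, for every $x>0$,
\begin{equation*}
U(x)=\widetilde{U}(x)=-\varphi(-x)=-\varphi^{**}(-x)=-\sup_{y\in\R}\bigl[-xy-\widetilde{V}(y)\bigr]=\inf_{y\in\R}\bigl[\widetilde{V}(y)+xy\bigr],
\end{equation*}
and since $\widetilde{V}(y)=+\infty$ for $y<0$, the infimum may be restricted to $y\geq 0$, which is the claim. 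The only mildly delicate points in the whole argument are the lower semicontinuity of $\widetilde{V}$ at $y=0$ and the careful bookkeeping of the boundary value $\widetilde{U}(0)$ when passing between sups over $z>0$ and $z\geq 0$; both reduce cleanly to the monotone limit conventions built into the definitions of $\widetilde{U}$ and $\widetilde{V}$.
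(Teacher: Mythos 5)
Your proof is correct and follows essentially the same strategy as the paper: establish the structural properties of $\widetilde{U}$ and hence of $\varphi$, compute $\varphi^*=\widetilde{V}$ case by case in $y$, and invoke the Fenchel--Moreau biconjugate theorem to obtain item~(iv). The only differences are micro-level: you prove item~(ii) directly (pointwise supremum of affine functions, plus explicit properness bounds via a supporting line through $x=1$) and settle $\widetilde{V}(0)$ by a hands-on two-sided sandwich, whereas the paper extracts properness and lower-semicontinuity of $\widetilde{V}$ as consequences of $\widetilde{V}=\varphi^*$ being a conjugate of a proper function, and settles $\varphi^*(0)$ by combining the monotonicity of $\varphi^*$ with its lower-semicontinuity; these variants are interchangeable and equally rigorous.
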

\begin{proof}
Note that Item~\eqref{U-exten-1} and that $\widetilde{V}$ is nonincreasing follows directly from the definitions and the assumptions imposed on these function together with \cite[Theorem~10.1, p.82]{Rockafellar.97}. As a consequence, $\varphi$ is convex, proper, and lower-semicontinuous. Hence  the biconjugate theorem (see \cite[Theorem~12.2, p.104]{Rockafellar.97} and \cite[p.52]{Rockafellar.97}) ensures that the conjugate $\varphi^*$ of $\varphi$ is convex, proper, lower-semicontinuous and that $\varphi^{**}=\varphi$. Therefore, to prove Item~\eqref{U-exten-2} and Item~\eqref{U-exten-3}, it remains to show that $\varphi^* =\widetilde{V}$.

To that end, note that
 \eqref{eq:U-V-extension} implies 
for every $y \in \R $ that
\begin{equation*}
\begin{split}
\varphi^*(y)
=\sup_{x \in \R} \big[xy- (-\widetilde{U}(-x))\big]
&=
\sup_{x \in \R} \big[-xy+\widetilde{U}(x)\big]=
\sup_{x \geq 0} \big[-xy+\widetilde{U}(x)\big].
\end{split}
\end{equation*}
As a consequence, we see that $\R \ni y \mapsto\varphi^*(y)$ is nonincreasing, that for any $y<0$,
\begin{equation}\label{v-tilde-infty}
\varphi^*(y) =
\sup_{x \geq 0} \big[-xy+\widetilde{U}(x)\big] 
=\sup_{x \geq 0} \big[x|y|+\widetilde{U}(x)\big]
=+ \infty,
\end{equation}
and due to \eqref{eq:le:V} that for any $y> 0$,
\begin{equation}\label{v-tilde-v}
\varphi^*(y) =
\sup_{x \geq 0} \big[-xy+\widetilde{U}(x)\big] 
=
\sup_{x \geq 0} \big[-xy+U(x)\big]
=V(y).
\end{equation}
Moreover, observe that  $\varphi^*$ being nonincreasing implies that $\varphi^*(0)\geq \limsup_{y \downarrow 0} \varphi^*(y)$, whereas the lower-semicontinuity implies that 
$\varphi^*(0)\leq \liminf_{y\downarrow 0} \varphi^*(0)$.
Therefore, we obtain by \eqref{v-tilde-v} that
\begin{equation}\label{v-tilde-v-0}
\widetilde V(0)=\lim_{y\downarrow 0} V(y) = \lim_{y \downarrow 0} \varphi^*(y) = \varphi^*(0).
\end{equation}
This shows that $\varphi^*=\widetilde{V}$.

Finally, to see that Item~\eqref{U-exten-4} holds,
note that
the biconjugate theorem (see \cite[Theorem~12.2, p.104]{Rockafellar.97}) 
and Item~\eqref{U-exten-3}
imply
that 
\begin{equation*}
\sup_{y \in \R}\big[xy-\widetilde V(y)]
=\varphi^{**}(x)=\varphi(x)=-\widetilde{U}(-x), \quad x \in \R.
\end{equation*}
Therefore, we deduce  from \eqref{eq:U-V-extension} 
that for all $x>0$,
\begin{equation*}
\begin{split}
U(x)
=\widetilde{U}(x)
&=  -\sup_{y \in \R}\big[-xy-\widetilde V(y)]
= \inf_{y \in \R}\big[xy+\widetilde V(y)]
= \inf_{y \geq 0}\big[xy+\widetilde V(y)]
= \inf_{y \geq 0}\big[xy+ V(y)].
\end{split}
\end{equation*}
\end{proof}
%
\begin{lemma}\label{le:V-extension}
	Let $V\colon(0,\infty)\to \R$ be nonincreasing and convex. 
	Let $U\colon(0,\infty) \to [-\infty,\infty)$ be defined by
	\begin{equation}\label{eq:le:U}
	\begin{split}
	U(x)&:=\inf_{y\geq 0}\big[V(y)+xy\big], \quad x>0.
	\end{split}
	\end{equation}
	Moreover, define  $\widetilde{U}\colon\R \to [-\infty,\infty)$ and $\widetilde{V}\colon\R \to (-\infty,\infty]$ by
	\begin{equation*}
	\widetilde{U}(x):=\begin{cases}
	U(x) 	& 	x> 0,\\
	\lim_{x\downarrow 0} U(x) & x=0,\\
	-\infty & x<0,
	\end{cases}
	\qquad \quad \mbox{ and } \qquad \quad
	\widetilde{V}(y):=\begin{cases}
	V(y) 	& 	y> 0,\\
	\lim_{y\downarrow 0} V(y) & y=0,\\
	\infty & y<0.
	\end{cases}
	\end{equation*}
	Furthermore, define the function $\varphi\colon\R \to (-\infty,\infty]$ by $\varphi(x)=-\widetilde{U}(-x)$, $x\in \R$.
	Then 
	\begin{enumerate}[(i)]

		\item\label{V-exten-2}  $\widetilde{U}$ is nondecreasing, concave, proper, upper-semicontinuous;
		
			\item\label{V-exten-1} $\widetilde{V}$ is nonincreasing, convex, proper, lower-semicontinuous;
		
		\item\label{V-exten-3}  $\widetilde{V}$ is the convex conjugate of $\varphi$;
		\item \label{V-exten-4} We have for every $y>0$ that
		\begin{equation*}
		V(y)=\sup_{x\geq 0}\big[U(x)- xy\big].
		\end{equation*}
	\end{enumerate}
\end{lemma}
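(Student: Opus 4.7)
The plan is to carry out a dual mirror of the proof of Lemma~\ref{le:U-extension}: there the starting object was $U$ and the conjugate $\widetilde{V}$ was recovered via the Legendre--Fenchel conjugate of $\varphi$; here the starting object is $V$ and the role of $\varphi$ is played by the same function, with $\widetilde{V}$ now obtained directly from $V$ and $\widetilde{U}$ recovered as the infimal representation \eqref{eq:le:U}.

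First I would establish Item~\eqref{V-exten-1}. On $(0,\infty)$ convexity, monotonicity, and real‑valuedness of $V$ give continuity (by \cite[Theorem~10.1, p.82]{Rockafellar.97}), hence lower‑semicontinuity. Extending by $+\infty$ on $(-\infty,0)$ preserves convexity, and the choice $\widetilde{V}(0):=\lim_{y\downarrow 0} V(y)$ is precisely the value which, together with monotonicity, yields lower‑semicontinuity at $0$ (the only possible failure point). Properness is immediate because $\widetilde V$ is real valued on $(0,\infty)$. Now define $\psi:=\widetilde V$; since $\psi$ is proper, convex, lower‑semicontinuous, the biconjugate theorem (\cite[Theorem~12.2, p.104]{Rockafellar.97}) applies, so $\psi^{*}$ is also proper, convex, lower‑semicontinuous and $\psi^{**}=\psi$.

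Next I would identify $\psi^{*}$ with $\varphi$, which in one stroke yields Items \eqref{V-exten-3} and \eqref{V-exten-2}. By definition
\begin{equation*}
\psi^{*}(x)=\sup_{y\in\R}\bigl[xy-\widetilde V(y)\bigr]=\sup_{y\geq 0}\bigl[xy-\widetilde V(y)\bigr],
\end{equation*}
since $\widetilde V=+\infty$ on $(-\infty,0)$. For $x>0$ monotonicity of $\widetilde V$ forces $\psi^{*}(x)=+\infty$; for $x\leq 0$, writing $x=-z$ with $z\geq 0$,
\begin{equation*}
\psi^{*}(-z)=\sup_{y\geq 0}\bigl[-zy-\widetilde V(y)\bigr]=-\inf_{y\geq 0}\bigl[zy+\widetilde V(y)\bigr]=-\widetilde U(z),
\end{equation*}
where the last equality uses the definition \eqref{eq:le:U} for $z>0$ and, at $z=0$, the fact that $\inf_{y\geq 0}\widetilde V(y)=\lim_{y\to\infty}V(y)$ coincides with $\lim_{z\downarrow 0}U(z)=\widetilde U(0)$ by the explicit formula \eqref{eq:le:U}. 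Hence $\psi^{*}(x)=-\widetilde U(-x)=\varphi(x)$ for every $x\in\R$, proving \eqref{V-exten-3}. Since $\psi^{*}=\varphi$ is proper, convex, lower‑semicontinuous, the function $x\mapsto -\varphi(-x)=\widetilde U(x)$ is proper, concave, upper‑semicontinuous, which together with monotonicity in $x$ inherited from convexity of $V$ in $y$ gives \eqref{V-exten-2}.

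Finally, Item~\eqref{V-exten-4} follows from the biconjugate identity $\widetilde V=\psi=\psi^{**}=\varphi^{*}$: for $y>0$,
\begin{equation*}
V(y)=\widetilde V(y)=\varphi^{*}(y)=\sup_{x\in\R}\bigl[xy-\varphi(x)\bigr]=\sup_{x\leq 0}\bigl[xy+\widetilde U(-x)\bigr]=\sup_{z\geq 0}\bigl[U(z)-zy\bigr],
\end{equation*}
using $\varphi(x)=+\infty$ for $x>0$ and $\widetilde U(0)=\lim_{z\downarrow 0}U(z)$. The main delicate step is the second paragraph's identification at the boundary $z=0$: one must verify that the value $\widetilde U(0)$ supplied by \eqref{eq:le:U} matches the $\inf$ of $\widetilde V$ so that $\psi^{*}$ and $\varphi$ agree at the single point $x=0$. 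Everything else is a routine transcription of the arguments in the proof of Lemma~\ref{le:U-extension}.
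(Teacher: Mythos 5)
Your proposal is correct and tracks the paper's own proof closely: both establish that $\widetilde V$ is proper, convex, lower-semicontinuous, identify $\varphi$ with $\widetilde V^{*}$ by reducing $\inf_{y\in\R}[\widetilde V(y)+xy]$ to $\inf_{y\geq 0}[\widetilde V(y)+xy]=\widetilde U(x)$ (including the boundary case $x=0$, which you rightly flag as the delicate point), and then invoke the biconjugate theorem to obtain $\widetilde V=\varphi^{*}$ and hence Items~(iii) and~(iv). The only substantive difference is organisational: the paper first proves $\widetilde U$ concave and upper-semicontinuous directly (as a pointwise infimum of continuous affine functions of $x$) and only then passes to $\varphi$, whereas you obtain those properties of $\widetilde U$ as a byproduct of $\varphi=\widetilde V^{*}$ being proper, convex and lower-semicontinuous. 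Both routes work; yours is marginally more economical but logically equivalent. One small misattribution worth fixing: the monotonicity of $\widetilde U$ is not ``inherited from convexity of $V$ in $y$'' --- it follows from the fact that the infimum in \eqref{eq:le:U} runs over $y\geq 0$, so that $x\mapsto V(y)+xy$ is nondecreasing for each fixed $y$ in that range, and a pointwise infimum of nondecreasing functions is nondecreasing.
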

\begin{proof}
	First, observe that Item~\eqref{V-exten-1} and that $\widetilde{U}$ is nondecreasing follows from their definitions and
	\cite[Theorem~10.1, p.82]{Rockafellar.97}. 
	 Moreover, since for any $y\geq0$ the function $(0,\infty)\ni x \mapsto V(y)+xy$ is continuous and affine,
	 we get by \eqref{eq:le:U} that $\widetilde{U}$ is concave and upper-semicontinuous. As a consequence, we see that $\varphi$ is a convex  lower-semicontinuous function.
	Moreover, note that \eqref{eq:le:U} and the definitions of $\widetilde{U}$, $\widetilde{V}$  imply for any $x \in \R$  that
	\begin{equation*}
 \inf_{y\in \R}\big[\widetilde{V}(y)+xy\big]=\inf_{y\geq 0}\big[\widetilde{V}(y)+xy\big]= 
 \widetilde{U}(x).
	\end{equation*}
	Therefore, we get that
	\begin{equation*}
	\begin{split}
	\varphi(x)
	=
	-\widetilde{U}(-x)
	=
	 -\inf_{y\in \R}\big[\widetilde{V}(y)-xy\big]
	 =
	 \sup_{y\in \R}\big[-\widetilde{V}(y)+xy\big].
	\end{split}
	\end{equation*}
	Hence, we conclude that $\varphi(x)$ is the convex conjugate of $\widetilde{V}$. 
	In particular, as $\widetilde{V}$ is proper,  we get from \cite[Theorem~12.2, p.104]{Rockafellar.97} that  $\varphi(x)$ and hence also $\widetilde{U}$ is proper.
Moreover, by the biconjugate theorem (see \cite[Theorem~12.2, p.104]{Rockafellar.97}), we have that
	$\widetilde V=\widetilde V^{**}=\varphi^*$.
	 Thus we see that indeed, Items~\eqref{V-exten-2}--\eqref{V-exten-3} hold.
	
	Finally, %
	Items~\eqref{V-exten-2}--\eqref{V-exten-3} and 
	the biconjugate theorem (see \cite[Theorem~12.2, p.104]{Rockafellar.97}) imply that  for all $y>0$,
	\begin{equation*}
	\begin{split}
	V(y)
	&=
	\widetilde{V}(y)
	= 
	\widetilde{V}^{**}(y)
	=\sup_{x \in \R} \big[xy-\widetilde{V}^*(x)\big]
	=\sup_{x \in \R} \big[xy-\varphi^{**}(x)\big]\\
	&=\sup_{x \in \R} \big[xy-\varphi(x)\big]
	=\sup_{x \in \R} \big[xy+\widetilde{U}(-x)\big]
	=\sup_{x \in \R} \big[-xy+\widetilde{U}(x)\big]
	=\sup_{x\geq 0} \big[-xy+\widetilde{U}(x)\big]\\
	&=\sup_{x\geq 0} \big[-xy+U(x)\big].
	\end{split}
	\end{equation*}
\end{proof}
We also consider the following robust maximisation problem, which will be useful in the sequel:
\begin{equation}
\label{eq:u-small-v}
\begin{split}
u_c(x)&:=\sup_{g\in (\mathcal{C}(x)\cap{C_b})} \inf_{\P\in\mathcal{\cP}} \E_\P[U(g)], \quad x>0.
\end{split}
\end{equation}
\begin{lemma}\label{le:u-prop}
Suppose   that $\big\{X \in C_b^+\colon \E_\Q[X]\leq 1 \mbox{ for all } \Q \in \mathcal{D}\big\}=\mathcal{C}\cap C_b$ and that Assumptions~\ref{ass:no-arbitrage-style}~\&~\ref{ass:u-finite} hold.
Then the functions $(0,\infty) \ni x \mapsto u(x)$ and $(0,\infty) \ni x \mapsto u_c(x)$ are finite valued, nondecreasing, and concave. In particular, when setting $u(0):=\lim_{x\downarrow 0} u(x)$ and $u_c(0):=\lim_{x\downarrow 0} u_c(x)$, then both  $[0,\infty) \ni x \mapsto u(x)$ and $[0,\infty) \ni x \mapsto u_c(x)$ are continuous.

Moreover, if in addition Assumptions~\ref{ass:limmed}~\&~\ref{ass:u-bar-finite} hold, then the same holds true also for the function $(0,\infty) \ni x \mapsto \ov u(x)$.
\end{lemma}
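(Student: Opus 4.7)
My plan is to verify the four properties (monotonicity, concavity, finiteness, continuity) in that order, arguing uniformly for $u$, $u_c$, and, in the second part, $\ov u$.

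\emph{Monotonicity and concavity.} For $0 < x_1 \leq x_2$ and $g_1 = x_1 h \in \cC(x_1)$ with $h \in \cC$, I would set $g_2 := x_2 h \in \cC(x_2)$; since $h\geq 0$ one has $g_2 \geq g_1$ pointwise, so $U$ nondecreasing yields $\inf_{\P\in\cP} \E_\P[U(g_1)] \leq \inf_{\P\in\cP} \E_\P[U(g_2)]$, and taking the supremum over $g_1$ gives $u(x_1)\leq u(x_2)$. The same argument works for $u_c$ (since $g_i \in C_b$ iff $h\in C_b$) and, using linearity of $\limmed$, for $\ov u$. For concavity I would first note that the bipolar hypothesis exhibits $\cC \cap C_b$ as an intersection of half-spaces $\{X\in C_b^+: \E_\Q[X]\leq 1\}$, hence convex; combined with the convexity of $\cC$ (a standing, if not explicitly stated, hypothesis in the utility-maximisation setup, and manifest in Section~\ref{sec:application} where $\cC$ arises as a set of superhedgeable claims), which transfers to $\ov\cC$ via linearity of $\limmed$, the standard scaling argument applies: for $\lambda \in (0,1)$, $x := \lambda x_1 + (1-\lambda) x_2$, and $g_i = x_i h_i \in \cC(x_i)$, the convex combination $h := (\lambda x_1/x) h_1 + ((1-\lambda) x_2/x) h_2$ lies in $\cC$, so $\lambda g_1 + (1-\lambda) g_2 = x h \in \cC(x)$. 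Concavity of $U$, superadditivity of $\inf_\P$, and passing to the supremum over $g_1, g_2$ then yield $u(x) \geq \lambda u(x_1) + (1-\lambda) u(x_2)$; the same structure applies to $u_c$ and $\ov u$.

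\emph{Finiteness.} The bipolar hypothesis forces the constant $\mathbf{1}\in \cC\cap C_b$, since $\E_\Q[\mathbf{1}]=1$ for every $\Q\in \cD$. Consequently the constant $x$ lies in $\cC(x)\cap C_b\subseteq \cC(x)\subseteq \ov\cC(x)$, giving the uniform lower bound $u(x)\geq u_c(x)\geq U(x)>-\infty$ for all $x>0$, and analogously $\ov u(x)\geq U(x)$. For the upper bound, monotonicity and Assumption~\ref{ass:u-finite} give $u(x)\leq u(x_0)<\infty$ on $(0,x_0]$; for $x>x_0$ I would pick $\eps\in(0,x_0)$, write $x_0 = \lambda \eps + (1-\lambda) x$ with $\lambda=(x-x_0)/(x-\eps)\in(0,1)$, and use the just-established concavity to deduce
\begin{equation*}
u(x)\leq \frac{u(x_0) - \lambda u(\eps)}{1-\lambda} < \infty,
\end{equation*}
which is valid because $u(\eps)\geq U(\eps) > -\infty$. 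The bound $u_c\leq u$ handles $u_c$, and Assumption~\ref{ass:u-bar-finite} in place of Assumption~\ref{ass:u-finite} handles $\ov u$.

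\emph{Continuity.} Each of $u, u_c, \ov u$ is then finite, concave, and nondecreasing on $(0,\infty)$, hence continuous there by \cite[Theorem~10.1]{Rockafellar.97}. The boundary extension $u(0):=\lim_{x\downarrow 0} u(x)$, which exists in $[-\infty,\infty)$ by monotonicity, is right-continuous at $0$ by construction, so the extension is continuous on $[0,\infty)$ (in the extended-real sense if $u(0)=-\infty$). The main obstacle I anticipate is the concavity step, which hinges on the convexity of the full set $\cC$ (not just of $\cC\cap C_b$, which is immediate from the bipolar hypothesis) -- this is the only place where a structural property of $\cC$ beyond what is spelled out in the lemma's hypotheses is invoked.
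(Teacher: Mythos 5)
Your proof follows the same route as the paper's: both obtain $u\geq U>-\infty$ from the fact that $\mathbf{1}\in\cC$, derive monotonicity and concavity from the scaling identity $\cC(x)=x\cC$ together with the corresponding properties of $U$, deduce finiteness on $(0,\infty)$ from Assumption~\ref{ass:u-finite} combined with the lower bound and concavity, and conclude continuity via \cite[Theorem~10.1]{Rockafellar.97}. You are also right to flag that the concavity step for $u$ (as opposed to $u_c$, whose domain $\cC\cap C_b$ is convex directly from the bipolar hypothesis) tacitly requires $\cC$ itself to be convex; the paper leaves this unstated, presenting it as ``immediate'' from $\cC(x)=x\cC$, even though it is genuinely needed and is only manifestly satisfied in the superhedging application of Section~\ref{sec:application}.
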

\begin{proof}
	First, note that 
	the assumptions 
	ensure that the constant function 1 is in $\mathcal{C}$. This implies for every $x>0$ that
	\begin{equation}\label{eq:u-U-minusInf}
	u(x)
	=\sup_{g\in {\mathcal{C}}(x)}\inf_{\P \in \cP} \E_\P\big[U(g)\big]
	\geq \inf_{\P \in \cP} \E_\P\big[U(x)\big] =U(x)>-\infty.
	\end{equation}
Since $U$ is  concave and nondecreasing, and since ${\mathcal{C}}(x)=x{\mathcal{C}}$ for all $x>0$, it immediately follows that $u$ is concave and nondecreasing, too. 
%
%
%
Furthermore, $U$ being nondecreasing, concave and Assumption~\ref{ass:u-finite} ensure that $u(x)<\infty$ for every $x>0$. 
%
Together with \eqref{eq:u-U-minusInf} we  see that $u(x)\in \R$ for all $x>0$.
Finally, the continuity of $u$ now follows from \cite[Theorem~10.1, p.82]{Rockafellar.97}.

 Next, since $1 \in (\mathcal{C}\cap C_b) \subseteq \mathcal{C}$, the same arguments guarantee that the results also hold for $u_c$.

For the second part, note that Assumption~\ref{ass:limmed} ensures that medial limits exist and hence $\ov{u}$ is well-defined. Using  Assumption~\ref{ass:u-bar-finite}, the result for $\ov u$ now follows by the  same arguments.
\end{proof}
%
%
%
From now on, we define
\begin{equation}
\begin{split}
u(0)&:=\lim_{x\downarrow 0} u(x) \in [-\infty,\infty),\\
u_c(0)&:=\lim _{x\downarrow 0} u_c(x) \in [-\infty,\infty),\\
\ov u(0)&:=\lim_{x\downarrow 0} \ov u(x) \in [-\infty,\infty),
\end{split}
\end{equation}
which is well-defined by Lemma~\ref{le:u-prop}.
We start with the proof of our main results in Theorem~\ref{thm:main} in the easier setting that $U$ satisfies Assumption~\ref{ass:U-1}.
As we will see later, this will help us to  prove the corresponding results of Theorem~\ref{thm:main-2} in the case where $U$ satisfies Assumption~\ref{ass:U-2}.
%
%
%
%
\begin{proof}[Proof of Theorem~\ref{thm:main}]
We start to prove the first part of Theorem~\ref{thm:main} (which does not involve $\ov{u}$).
To that end, note first that Item~\eqref{thm:res-1} has been proved in Lemma~\ref{le:u-prop}.
As a next step, we prove Item~\eqref{thm:res-2} and Item~\eqref{thm:res-4-neu}.
%
Note that the definition of  $V$ ensures  for any $y>0$, $x> 0$, $g \in {\mathcal{C}}(x)$, $\P \in \cP$, $\Q \in \mathcal{D}$ with $\Q\ll \P$ that
\begin{equation}
\label{eq:weak-duality-U-V}
\begin{split}
\E_\P\big[V(y\tfrac{d\Q}{d\P})\big]
\geq 
\E_\P\big[U(g)-gy\tfrac{d\Q}{d\P}\big]
=\E_\P[U(g)]-y\E_{\Q}[g\big]
\geq \E_\P[U(g)]-xy.
\end{split}
\end{equation}
This assures for every $x,y> 0$ that
\begin{equation}\label{eq:weak-duality-U-V-2}
\sup_{g\in (\mathcal{C}(x)\cap C_b)}\inf_{\P\in\mathcal{P}} \E_\P[U(g)] -xy
\leq  \sup_{g\in {\mathcal{C}}(x)}\inf_{\P\in\mathcal{P}} \E_\P[U(g)]-xy
\leq v(y),
\end{equation}
which in turn implies that 
\begin{equation}
\label{eq:weak-duality-u-v-0}
\sup_{x> 0}\big[u_c(x)-xy\big]\leq \sup_{x> 0}\big[u(x)-xy\big]\leq v(y), \quad y>0.
\end{equation}
Moreover, \eqref{eq:weak-duality-u-v-0} implies for every $y>0$ 
\begin{equation}
\label{u-0-weak-duality}
\begin{split}
u(0)&=\lim _{x\downarrow 0} \big[u(x)-xy\big] \leq v(y),
\end{split}
\end{equation}
and hence we obtain the weak duality  
\begin{equation}
\label{eq:weak-duality-u-v}
\sup_{x\geq 0}\big[u_c(x)-xy\big]\leq\sup_{x\geq 0}\big[u(x)-xy\big]\leq v(y), \quad y>0.
\end{equation}
%
%
To see the opposite inequalities, note that by the bipolar representation in Items~\eqref{thm:ass-2}~\&~\eqref{thm:ass-3} 
it holds for every $x>0$, $y>0$, and $g \in C_b^+$ that $g \in \cC(x)\cap C_b$ if and only if $\sup_{\Q\in \cD(y)}\E_{\Q}[g]\leq xy$, and hence
we obtain 
for every $y>0$ that 
\begin{equation}
\label{eq:thm-bounded-1}
\begin{split}
\sup_{x> 0} \big[u_c(x)-xy\big]
&=
\sup_{x> 0}\sup_{g\in(\mathcal{C}(x)\cap C_b)}\inf_{\P\in\mathcal{P}} \big( \E_\P[U(g)]-xy\big) \\
&=
\sup_{g\in C_b^+} \inf_{\P\in\mathcal{P}}\inf_{\Q\in \mathcal{D}(y)} \big( \E_\P[U(g)]-\E_\Q[g]\big).
\end{split}
\end{equation} 
%
%
Now, for every $g\in C_b^+$, the mapping
\begin{equation}\label{eq:minimax-1}
\mathcal{D} \times \cP\ni(\Q,\P)\mapsto \E_\P[U(g)]-\E_\Q[g] 
\end{equation}
is convex and, since $U(g)$ is bounded from below, also lower semicontinuous.
Moreover, for every fixed $(\Q,\P)\in\mathcal{D} \times \cP$, 
the mapping 
\begin{equation*}
C_b^+\ni g\mapsto \E_\P[U(g)]-\E_\Q[g]
\end{equation*}
is concave. This, \eqref{eq:minimax-1}, and the assumption that both $\cD$ and $\cP$ are compact ensure that we can apply Sion's minimax theorem \cite[Theorem~4.2']{Sion.58} which establishes 
for every $y>0$ 
that
\begin{equation}\label{eq:minimax}
\sup_{g\in C_b^+}\inf_{\P\in\mathcal{P},\,\Q\in \mathcal{D}(y)} \big( \E_\P[U(g)]-\E_\Q[g]\big)
=\inf_{\P\in\mathcal{P},\,\Q\in \mathcal{D}(y)}\sup_{g\in C_b^+} \big( \E_\P[U(g)]-\E_\Q[g]\big).
\end{equation}
Moreover, one can check that for any fixed $\P\in \cP$ and $\Q\in\mathcal{D}(y)$  such that $\Q\ll\P$ that 
\begin{equation*}
\sup_{g\in C_b^+} \big( \E_\P[U(g)]-\E_\Q[g]\big)
= \E_\P\Big[\sup_{x> 0}\big( U(x)-\tfrac{d\Q}{d\P}x\big)\Big]
=\E_\P\big[ V\big(\tfrac{d\Q}{d\P}\big) \big].
\end{equation*}
This, \eqref{eq:thm-bounded-1}, and \eqref{eq:minimax} demonstrate that for every $y> 0$,
\begin{equation}
\label{eq:le:duality-v-c}
\sup_{x> 0} \big[u_c(x)-xy\big]=v(y).
\end{equation}
Therefore, the weak duality \eqref{eq:weak-duality-u-v} 
and the fact that $u\geq u_c$ 
imply that 
\begin{equation}
\label{eq:le:duality-v}
\sup_{x\geq 0} \big[u_c(x)-xy\big]=\sup_{x\geq 0} \big[u(x)-xy\big]=v(y), \quad y>0.
\end{equation}
Moreover, note that  \eqref{eq:le:duality-v} together with Lemma~\ref{le:u-prop} and Lemma~\ref{le:U-extension} show that
%
\begin{equation*}
u_c(x)=\inf_{y\geq 0}\big[v(y)+xy\big]=u(x), \quad x>0,
\end{equation*}
which together with \eqref{eq:le:duality-v} indeed proves that Item~\eqref{thm:res-2} and Item~\eqref{thm:res-4-neu} hold.
%
Furthermore, note that Item~\eqref{thm:res-1} and Item~\eqref{thm:res-2} together with Lemma~\ref{le:U-extension} imply that $v$ is nonincreasing, convex, and proper, 
which proves Item~\eqref{thm:res-1b}. This finishes the first part of the proof.

To prove the second part of Theorem~\ref{thm:main} (which involves $\ov u$),
note that by definition of $g \in \ov{\mathcal{C}}(x)$, there exists a sequence $(g_n)_{n \in \N} \subseteq \mathcal{C}(x)$ such that $g= \mathop{\mathrm{lim\,med}}_{n \to \infty}  g_n$. Therefore, the definition of $V$ and Fatou's lemma for the medial limit (see \cite[Lemma~3.8(v)]{bartl2019limmed}) imply that
for any $y>0$, $x> 0$, $g \in \ov{\mathcal{C}}(x)$, $\P \in \cP$, $\Q \in \mathcal{D}$ with $\Q\ll \P$,
\begin{equation}
\label{eq:weak-duality-U-V-1}
\begin{split}
\E_\P\big[V(y\tfrac{d\Q}{d\P})\big]
& \geq  \E_\P\big[U(g)\big]-\E_\P[gy\tfrac{d\Q}{d\P}\big]
\\
&=\E_\P\big[U(g)\big] -y\E_\Q\big[\mathop{\mathrm{lim\,med}}_{n \to \infty} g_n\big]\\
&\geq  \E_\P\big[U(g)\big] -y\mathop{\mathrm{lim\,med}}_{n \to \infty} \E_\Q\big[ g_n\big]\\
& \geq  \E_\P\big[U(g)\big] -xy.
\end{split}
\end{equation}
This, the fact $u(x)\leq \ov u(x)$ as $\cC(x) \subseteq \ov{\cC}(x)$ for every $x>0$, \eqref{eq:le:duality-v}, and \eqref{u-0-weak-duality} (with $u$ replaced by $\ov u$) show that
\begin{equation*}
v(y)=\sup_{x\geq 0}\big[u(x)-xy\big]\leq\sup_{x\geq 0}\big[\ov u(x)-xy\big]\leq v(y), \quad y>0,
\end{equation*}
which implies that 
\begin{equation*}
\sup_{x\geq 0} \big[u(x)-xy\big]=\sup_{x\geq 0} \big[\ov u(x)-xy\big]=v(y), \quad y>0.
\end{equation*}
Combining this with Lemma~\ref{le:u-prop} and Lemma~\ref{le:U-extension} shows that
%
\begin{equation*}
\ov u(x)=\inf_{y\geq 0}\big[v(y)+xy\big]=u(x), \quad x>0,
\end{equation*}
which proves Item~\eqref{thm:res-4}.

Finally, to see that Item~\eqref{thm:res-3} holds, 
we know from Item~\eqref{thm:res-4} that $\ov u=u$,  hence for each $n \in \N$ there exists an element $g_n\in {\mathcal{C}}(x)$ such that 
\begin{equation}\label{eq:pf_1-bounded}
\ov u(x)\leq \inf_{\P\in\mathcal{P}} \E_\P[U(g_n)]+\tfrac{1}{n}.
\end{equation}
Define 
\begin{equation*}
\widehat{g}:=\mathop{\mathrm{lim\,med}}_{n \to \infty} g_n\in \ov{\mathcal{C}}(x).
\end{equation*}
Since $U$ is concave, we obtain by Jensen's inequality for medial limits (see \cite[Lemma~3.8(iii)]{bartl2019limmed}) that
\begin{equation*}
U(\widehat{g})
=
U\big(\mathop{\mathrm{lim\,med}}_{n\to \infty}g_n\big)
\geq 
\mathop{\mathrm{lim\,med}}_{n\to \infty}
U\big(g_n\big).
\end{equation*}
Therefore, since by Assumption~\ref{ass:U-I} the sequence $\max\big\{U(g_n),0\big\}$, $n \in \N$, is uniformly integrable with respect to any $\P \in \cP$,  
Fatou's lemma for the medial limit (see \cite[Lemma~3.8(v)]{bartl2019limmed}) and \eqref{eq:pf_1-bounded} ensure that
\begin{align*}
\inf_{\P\in\mathcal{P}} \E_\P[U(\widehat{g})]
\geq 
\inf_{\P\in\mathcal{P}} \mathop{\mathrm{lim\,med}}_{n\to \infty} \E_\P\big[U(g_n)\big]
\geq \mathop{\mathrm{lim\,med}}_{n \to \infty} \big( \ov  u(x)- \tfrac{1}{n}\big)
= \ov  u(x).
\end{align*} 
This shows that indeed Item~\eqref{thm:res-3} holds and finishes the proof.
\end{proof}
%
%
It remains to prove Theorem~\ref{thm:main-2}.
To that end,
from now on, 
we denote for every $n \in \N$,
\begin{equation}\label{Un}
\begin{split}
&U_n(x):=U(x+\tfrac{1}{n}\big), \quad x \geq 0,\\
&V_n(y):= \sup_{x\geq 0} \big[U_n(x)-xy\big], \quad y\geq 0,
\end{split}
\end{equation}
and define $\ov u_n$ and $v_n$ as in \eqref{def:u-bar} and \eqref{def:v}, but with respect to $U_n$ and $V_n$, respectively.
Note that if $U$ satisfies Assumption~\ref{ass:U-2}, then each $U_n$, $n\in \N$, is a utility function which satisfies Assumption~\ref{ass:U-1}; in particular we can apply Theorem~\ref{thm:main} with respect to each $U_n$. This will be useful, by applying a limit argument, to prove Theorem~\ref{thm:main-2} for the case that $U$ satisfies Assumption~\ref{ass:U-2}.
%
%
\begin{lemma}
	\label{lem:conjugate.converges}
Let the assumptions in Theorem~\ref{thm:main-2} hold. 
Then for every $y>0$ we have that $\inf_{n \in \N} V_n(y)=V(y)$.
\end{lemma}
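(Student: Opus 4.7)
The plan is to rewrite $V_n(y)$ by a change of variables and then compare it directly with $V(y)$. Setting $z := x + \tfrac{1}{n}$ in the defining supremum, I first observe that
\begin{equation*}
V_n(y) = \sup_{x \geq 0}\big[U(x + \tfrac{1}{n}) - xy\big] = \sup_{z \geq 1/n}\big[U(z) - zy\big] + \tfrac{y}{n}.
\end{equation*}
Denote $\alpha_n(y) := \sup_{z \geq 1/n}[U(z) - zy]$. Since the feasible set $[1/n, \infty)$ is increasing in $n$ and contained in $(0,\infty)$, the sequence $\alpha_n(y)$ is nondecreasing and bounded above by $V(y) = \sup_{z > 0}[U(z) - zy]$.

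Next I would argue $\alpha_n(y) \uparrow V(y)$. Given $\varepsilon > 0$, by definition of $V(y)$ there exists $z_\varepsilon > 0$ with $U(z_\varepsilon) - z_\varepsilon y \geq V(y) - \varepsilon$. Choosing $n$ large enough so that $1/n \leq z_\varepsilon$, one gets $\alpha_n(y) \geq U(z_\varepsilon) - z_\varepsilon y \geq V(y) - \varepsilon$; combined with $\alpha_n(y) \leq V(y)$, this yields $\lim_{n \to \infty} \alpha_n(y) = V(y)$. Consequently
\begin{equation*}
\lim_{n \to \infty} V_n(y) \;=\; \lim_{n \to \infty} \Big( \alpha_n(y) + \tfrac{y}{n} \Big) \;=\; V(y).
\end{equation*}

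Finally, to upgrade the limit to an infimum, I would note that $U$ is nondecreasing implies $U_{n}(x) = U(x + \tfrac{1}{n}) \geq U(x + \tfrac{1}{n+1}) = U_{n+1}(x) \geq U(x)$ for every $x \geq 0$, so that $V_n(y) \geq V_{n+1}(y) \geq V(y)$ for every $n$. Combined with the limit computed above, this gives $\inf_{n \in \N} V_n(y) = \lim_{n \to \infty} V_n(y) = V(y)$, as required. No substantial obstacle is expected here; the only subtlety is that the supremum defining $V(y)$ is taken over $(0, \infty)$ while the shifted sup is over $[1/n, \infty)$, which is exactly why the $y/n$ remainder from the change of variables must be tracked and seen to vanish in the limit.
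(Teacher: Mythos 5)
Your proof is correct, and it takes a genuinely different (and cleaner) route than the paper. You perform the change of variables $z = x + 1/n$ to rewrite $V_n(y) = \sup_{z \geq 1/n}[U(z) - zy] + y/n$, observe that the inner supremum increases to $V(y)$ as the feasible set exhausts $(0,\infty)$, and conclude by combining this limit with the elementary monotonicity $V_n \geq V_{n+1} \geq V$. The paper instead runs a compactness argument: it extracts approximate maximisers $x_n$ of $V_n(y)$ up to error $1/n$, invokes $U(0) = -\infty$ to rule out $x_n \to 0$ along a subsequence and the asymptotic-slope condition in Assumption~\ref{ass:U-2} to rule out $x_n \to \infty$, then passes to a convergent subsequence $x_n \to x \in (0,\infty)$ and uses continuity of $U$. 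Your argument is more elementary in that it does not require either boundary condition of Assumption~\ref{ass:U-2} -- only that $U$ is nondecreasing (for the comparison $V_n \geq V_{n+1} \geq V$) -- and it avoids the subsequence extraction entirely. Both proofs are correct; yours buys simplicity and wider applicability, while the paper's compactness-style reasoning is perhaps more in keeping with the verification-of-maximiser arguments used elsewhere in that section.
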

\begin{proof}
	Since $U_n\geq U$ it follows from the definition that $V_n\geq V$ for each $n \in \N$, and hence we focus on showing that $\inf_{n \in \N}V_n\leq V$.
	To that end,  fix some $y>0$ and let $(x_n)_{n \in \N} \subseteq [0,\infty)$ such that for each $n \in \N$
	\begin{equation}\label{eq:V-n-V-xn}
	V_n(y)= \sup_{x\geq 0} \big[ U(x+\tfrac{1}{n})-xy\big]
	\leq U\big(x_n+\tfrac{1}{n}\big)-x_ny + \tfrac{1}{n}.
	\end{equation}
	In particular, by monotonicity of $U$, we have that
	\begin{equation}\label{eq:V-n-V-1}
	\sup_{x\geq 0} \big[ U(x)-xy\big]
	\leq
	 \sup_{x\geq 0} \big[ U(x+\tfrac{1}{n})-xy\big]
	\leq U\big(x_n+\tfrac{1}{n}\big)-x_ny + \tfrac{1}{n}.
	\end{equation}
	Now notice that $U$ satisfying Assumption~\ref{ass:U-2} enforces that $\liminf_{n \to \infty} x_n>0$, since otherwise  
	$\liminf_{n\to \infty} U(x_n+1/n)-x_ny=U(0)=-\infty$, which contradicts \eqref{eq:V-n-V-1}. 
	Therefore, without loss of generality, we may assume that $x_n>0$ for each $n$.
	
	Moreover, we claim that $\limsup_{n \to \infty} x_n< \infty$. Indeed, if $\limsup_{n \to \infty} x_n=\infty$, then there is a subsequence (which we still denote by $(x_n)_{n \in \N}$)  such that 
	$\lim_{n \to \infty} x_n= \infty$. 
	Therefore,
	by concavity and monotonicity of $U$, we get that
	\begin{equation*}\label{eq:Vn-V-1}
	\tfrac{U(x_n)}{x_n}\leq	\tfrac{U(x_n+\frac{1}{n})}{x_n}\leq 	\Big(\tfrac{U(x_n)}{x_n}+ \tfrac{\partial_+ U(x_n)}{nx_n}\Big),
	\end{equation*}
	where $\partial_+ U$  denotes the right-derivative of $U$. Therefore, as $U$ is nondecreasing and concave satisfying Assumption~\ref{ass:U-2}, we obtain that
	\begin{equation*}
	\lim\limits_{n\to \infty} \tfrac{U(x_n+\frac{1}{n})}{x_n}=0.
	\end{equation*}
	For any fixed $0<\varepsilon<y$, we hence see  for big enough $n$ that
	\begin{equation*}
\big|\tfrac{U(x_n+\frac{1}{n})}{x_n} \big| \leq \varepsilon.
	\end{equation*}
	This ensures for any big enough $n$ that
	\begin{equation*}
	 U\big(x_n+\tfrac{1}{n}\big)-x_ny
	=
 x_n\Big( \tfrac{U(x_n+\frac{1}{n})}{x_n}-y \Big) \leq
 x_n\Big( \varepsilon-y \Big) <0.
	\end{equation*}
	This, in turn, implies that
	\begin{equation*}
	\lim_{n \to \infty} U\big(x_n+\tfrac{1}{n}\big)-x_ny
	=-\infty,
	\end{equation*}
	which contradicts \eqref{eq:V-n-V-1}.
	
	Therefore, we conclude that the sequence $(x_n)_{n \in \N}$ is bounded, and after passing to a subsequence, it has a limit $x\in(0,\infty)$.
	Thus by
	 \eqref{eq:V-n-V-xn} we obtain that
	\begin{equation*} 
	V(y)\geq U(x)-xy
	=\lim_{n \to \infty} \Big( U\big(x_n+\tfrac{1}{n}\big)-x_ny \Big)
	\geq\inf_{n \in \N} V_n(y),
	\end{equation*}
	which completes the proof.
\end{proof}
\noindent
Now we are ready to present the proof of our main results
 for the case where $U$ satisfies Assumption~\ref{ass:U-2}.
\begin{proof}
	[Proof of Theorem \ref{thm:main-2}]
	First, recall that Item~\eqref{thm:res-1-2} has been proved in Lemma~\ref{le:u-prop}.
	%
	
Furthermore, since each $U_n$ satisfies Assumption~\ref{ass:U-1}, we get from
Theorem~\ref{thm:main} 
that 
for every $n \in \N$  
\begin{equation}\label{eq:conj-un-vn}
\begin{split}
\ov u_n(x)&:=\inf_{y\geq 0} \big[v_n(y)+xy\big], \quad x > 0,\\
v_n(y)&:=\sup_{x\geq0} \big[\ov u_n(x)-xy\big], \quad y>0.
\end{split}
\end{equation}
Now, we claim that $\ov u(x)=\inf_n \ov u_n(x)$ for each $x>0$. Indeed, since by monotonicity $\ov u_n\geq \ov u$, we only need to show that $\ov u(x)\geq \inf_n \ov u_n(x)$.
To that end, fix $x> 0$. By 
 Theorem~\ref{thm:main}\eqref{thm:res-4} we have that $\ov u_n =u_n$, hence  there exists for each $n$ an element $g_n\in {\mathcal{C}}(x)$ such that 
\begin{equation}\label{eq:pf_1}
\ov u_n(x)
\leq 
\inf_{\P\in\mathcal{P}} \E_\P[U_n(g_n)]+\tfrac{1}{n}
= 
\inf_{\P\in\mathcal{P}} \E_\P\big[U(g_n+\tfrac{1}{n})\big]+\tfrac{1}{n}.
\end{equation}
	Define 
	\begin{equation*}
	\widehat{g}:=\mathop{\mathrm{lim\,med}}_{n \to \infty} g_n\in \ov{\mathcal{C}}(x).
	\end{equation*}
	Since $U$ is concave, we obtain by 
	 Jensen's inequality for medial limits (see \cite[Lemma~3.8(iii)]{bartl2019limmed})
	 that
	\begin{equation*}
	U(\widehat{g})
	=
	U\big(\mathop{\mathrm{lim\,med}}_{n\to \infty}(g_n+\tfrac{1}{n})\big)
	\geq 
	\mathop{\mathrm{lim\,med}}_{n\to \infty}
	U\big(g_n+\tfrac{1}{n}\big).
	\end{equation*}
Therefore, since by Assumption~\ref{ass:U-I} the sequence $\max\{U(g_n+1/n),0\}$, $n \in \N$, is uniformly integrable with respect to every $\P \in \cP$,  Fatou's lemma for the medial limit,  and \eqref{eq:pf_1} ensure that
	\begin{align*}
	\inf_{\P\in\mathcal{P}} \E_\P[U(\widehat{g})]
	\geq 
	\inf_{\P\in\mathcal{P}} \mathop{\mathrm{lim\,med}}_n \E_\P\big[U(g_n+\tfrac{1}{n})\big]
	\geq \mathop{\mathrm{lim\,med}}_n \big( \ov u_n(x)- \tfrac{1}{n}\big)
	=\inf_n \ov u_n(x).
	\end{align*} 
	This together with the fact that $\inf_n \ov u_n(x)\geq \ov u(x)$ shows that  for every $x>0$,
	\begin{equation}
\label{eq:un-u}
\inf_{\P\in\mathcal{P}} \E_\P[U(\widehat{g})]= \ov u(x)=\inf_n \ov u_n(x).
	\end{equation}
In particular, we see that Item~\eqref{thm:res-3-2} holds.

Next, we  claim that $\inf_n v_n(y)=v(y)$ for each $y>0$.
Indeed, by Lemma~\ref{lem:conjugate.converges} we know that $\inf_{n\in \N} V_n(y)=V(y)$ for every $y>0$, and since 
$n \mapsto V_n(y)$ is decreasing in $n$,
Assumption~\ref{ass:v-finite} together with the monotone convergence theorem  imply for every $y>0$ that
\begin{equation*}
\inf_n v_n(y)
=\inf_{\Q\in \mathcal{D},\,\P\in\mathcal{P}} \inf_n \E_\P\big[ V_n \big(y\tfrac{d\Q}{d\P}\big) \big]
=\inf_{\Q\in \mathcal{D},\,\P\in\mathcal{P}} \E_\P\big[ V\big(y\tfrac{d\Q}{d\P}\big) \big]
= v(y).
\end{equation*}
This, \eqref{eq:un-u}, and \eqref{eq:conj-un-vn} ensure that for every
$x>0$,
\begin{equation}\label{eq:u-rep} 
\begin{split}
\ov u(x)=\inf_{n \in \N} \ov u_n(x)
=\inf_{n \in \N}\inf_{y\geq 0} \big[v_n(y)+xy\big]
=\inf_{y\geq 0} \big[\inf_{n \in \N} v_n(y)+xy\big]
=\inf_{y\geq 0} [v(y)+xy\big].
\end{split}
\end{equation}

Furthermore, since by \eqref{eq:conj-un-vn} we know that each $v_n$ is nonincreasing, and as $\inf_n v_n(y)=v(y)$, we see that also $v$ is nonincreasing on $[0,\infty)$. 
In  addition, as $n \mapsto v_n(y)$ is nonincreasing for each $y>0$, and as each $v_n$ is convex, we conclude that also $v=\lim_n v_n$ is convex.
Moreover, by \eqref{eq:u-rep} we have for every $x>0$, $y\geq 0$ that  $\ov u(x)\leq v(y)+xy$, which together with Lemma~\ref{le:u-prop} imply that $v(y)>-\infty$ for all $y\geq 0$. In addition, by Assumption~\ref{ass:v-finite}, we get that
 $v(y)<\infty$ for all $y>0$. Therefore, we conclude that $v(y) \in \R$ for every $y>0$ and hence proves Item~\eqref{thm:res-1b-2}. Finally,  we can apply Lemma~\ref{le:V-extension}  together with \eqref{eq:u-rep} to conclude that 
 indeed  for every $y>0$ we have that
\begin{equation*}
v(y)=\sup_{x\geq 0} \big[\ov u(x)-xy\big],
\end{equation*}
which together with \eqref{eq:u-rep}  proves Item~\eqref{thm:res-2-2} and finishes the proof.
\end{proof}
%
%
%
\section{Proof of Theorem~\ref{thm:main-app}~\&~\ref{thm:main-app-2} and Corollary~\ref{co:main-app}~\&~\ref{co:main-app-2}}\label{sec:proof-app}
The idea of the proof of
Theorem~\ref{thm:main-app}~\&~\ref{thm:main-app-2} and Corollary~\ref{co:main-app}~\&~\ref{co:main-app-2}
is to verify that the assumptions in Theorem~\ref{thm:main} are satisfied. To that end, throughout this section, we put ourselves into the setting of Section~\ref{sec:application} and refer by $\cC$, $\cD$, $\cP$ to the corresponding sets specified there.

We recall the set of probability measures
\begin{equation*}
\fP_e(\cP):=\Big\{\Q \in \fP(\Omega)\colon  \exists\, \P \in \cP \mbox{ such that } \Q \approx \P
\Big\}
\end{equation*}
and  consider the following sets of probability measures, which will be useful in the sequel:
\begin{equation*}
\begin{split}
\fM_e(\cP)&:=\Big\{\Q \in \fP_e(\cP)\colon 
S \mbox{ is a  $\Q$-$\mathbb{F}$-local martingale}\Big\},\\
\cM&:=\Big\{\Q \in \cP^{ac}_{sem}\colon S \mbox{ is a $\Q$-$\F$-local martingale with } c^\Q \in \Theta_c \ \,\Q\otimes dt\mbox{-a.e.}\Big\}.
\end{split}
\end{equation*}
\begin{remark}\label{rem:M-P-tilde-Theta}
	By definition, we have that $\cM = \cP^{ac}_{sem}(\widetilde{\Theta})$ for the set $\widetilde\Theta:=\{0,...,0\}\times \Theta_c\subseteq \R^d \times \S^d_+$. In addition, due to Assumption~\ref{ass:invertible}, we will show in Proposition~\ref{prop:D-C-Cb-polar}
	that in fact $\cM=\fM_e(\cP)$.
\end{remark}
\subsection{Proof of Theorem~\ref{thm:main-app} and  Theorem~\ref{thm:main-app-2}}\label{subsec:proof-app}
%
%
\begin{lemma}\label{le:Q-P}
Let Assumption~\ref{ass:invertible} hold. Then for  each $\P \in \cP$ there exists $\Q\in \cM$ such that $\Q \approx \P$. 
Conversely, for each $\Q\in \cM$  there exists $\P \in \cP$  such that $\P \approx \Q$.
\end{lemma}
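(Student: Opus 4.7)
The plan is to establish both directions via a bounded Girsanov transformation, exploiting Assumption~\ref{ass:invertible} to guarantee that the market price of risk is bounded and hence the density process is a true martingale.

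For the forward direction, fix $\P \in \cP$. Then $S$ is a continuous $\P$-semimartingale with canonical decomposition $S = S_0 + A^\P + M^\P$, where $A^\P_t = \int_0^t b^\P_s\, ds$ and $\langle M^\P\rangle_t = \int_0^t c^\P_s\, ds$ with $(b^\P_s,c^\P_s)\in \Theta$ $\P\otimes dt$-a.e. Assumption~\ref{ass:invertible} gives $c^\P_s \geq \underline c$, so $c^\P_s$ is invertible with $(c^\P_s)^{-1}\leq \underline c^{-1}$ bounded, and compactness of $\Theta$ makes $b^\P$ bounded. Define the predictable, bounded process $\lambda^\P_s := (c^\P_s)^{-1} b^\P_s$ and the stochastic exponential $Z^\P := \cE\!\left(-\int \lambda^{\P,\top}\, dM^\P\right)$. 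Since $\lambda^\P$ is bounded, Novikov's criterion yields that $Z^\P$ is a strictly positive $\P$-martingale on $[0,T]$, so $\Q := Z^\P_T\cdot \P$ defines a probability measure with $\Q\approx \P$. Girsanov's theorem then implies that $S - \int (b^\P - c^\P \lambda^\P)\, ds = M^\P$ has zero drift under $\Q$, so $S$ is a $\Q$-local martingale. Since the quadratic variation is invariant under equivalent change of measure, $c^\Q = c^\P \in \Theta_c$ $\Q\otimes dt$-a.e., hence $\Q \in \cM$.

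For the converse, fix $\Q \in \cM$. Then $S$ is a continuous $\Q$-local martingale with $c^\Q_s \in \Theta_c$ $\Q\otimes dt$-a.e. The correspondence $\Theta_c \ni c \mapsto \{b \in \R^d : (b,c)\in \Theta\}$ has nonempty, closed (by compactness of $\Theta$) values, so the Kuratowski--Ryll-Nardzewski selection theorem supplies a Borel-measurable selector $\beta\colon \Theta_c \to \R^d$ with $(\beta(c),c)\in \Theta$ for all $c$; compactness of $\Theta$ makes $\beta$ bounded. Set $b_s := \beta(c^\Q_s)$ and $\lambda_s := (c^\Q_s)^{-1} b_s$, both predictable and bounded. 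Define $Z := \cE\!\left(\int \lambda^\top\, dS\right)$ under $\Q$; by Novikov's criterion $Z$ is a $\Q$-martingale, and $\P := Z_T \cdot \Q$ satisfies $\P \approx \Q$. Applying Girsanov's theorem in the other direction, $S$ becomes a $\P$-semimartingale with drift $\int_0^\cdot c^\Q_s \lambda_s\, ds = \int_0^\cdot b_s\, ds$ and unchanged quadratic variation $\int_0^\cdot c^\Q_s\, ds$; hence $(b^\P_s,c^\P_s) = (b_s, c^\Q_s)\in \Theta$ $\P\otimes dt$-a.e., proving $\P \in \cP$.

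The only delicate point is ensuring that the density process is a true martingale rather than merely a local martingale, and this is exactly where the uniform ellipticity of Assumption~\ref{ass:invertible} is indispensable: without the lower bound $\underline c$, the inverse $(c^\Q)^{-1}$ could blow up and $\lambda$ would fail to be bounded, breaking Novikov's criterion. The measurable selection and the invariance of the second characteristic under equivalent measure change are standard.
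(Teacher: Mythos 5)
Your argument follows the paper's almost line for line: both directions proceed via a Girsanov change of measure whose density is a stochastic exponential of a bounded integrand, with Novikov securing the true-martingale property thanks to the uniform ellipticity in Assumption~\ref{ass:invertible}, and with the boundedness of $b$ and $c^{-1}$ coming from the compactness of $\Theta$ and the lower bound $\underline c$. The one place where you take a visibly different technical route is the measurable selection of a drift in the converse direction: you apply the Kuratowski--Ryll-Nardzewski selection theorem to the (closed-graph, compact-valued, hence measurable) correspondence $\Theta_c\ni c\mapsto\{b\in\R^d:(b,c)\in\Theta\}$ to obtain a bounded Borel selector $\beta$, and then set $b_s=\beta(c^\Q_s)$. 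The paper instead invokes \cite[Theorem~2.6]{neufeld2014measurability} for an $\F$-predictable version of $c^\Q$ and applies the implicit measurable functions theorem of Rockafellar--Wets to the Carath\'eodory map $(\omega,t,b)\mapsto(b,c_t(\omega))$ on $\Omega\times[0,T]\times\R^d$, working with the predictable set $\Upsilon$ of full $\Q\otimes dt$-measure. Both selection arguments produce the same object, namely a bounded predictable $b$ with $(b,c^\Q)\in\Theta$ $\Q\otimes dt$-a.e.; your version cleanly separates the set-theoretic selection from the stochastic machinery, but you should state explicitly that a predictable version of $c^\Q$ is being composed with $\beta$ (the paper cites a result for this; you take it for granted).

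One formula in the forward direction is off: the claimed identity $S-\int(b^\P-c^\P\lambda^\P)\,ds=M^\P$ reduces, since $b^\P-c^\P\lambda^\P\equiv 0$, to $S=M^\P$, which is false under $\P$. What Girsanov actually gives is that $M^\P-\langle M^\P,-\!\int(\lambda^\P)^\top\,dM^\P\rangle=M^\P+\int b^\P\,ds=S-S_0$ is a $\Q$-local martingale, so $S$ is a $\Q$-local martingale, which is exactly the conclusion you draw. The slip is purely notational and does not affect the validity of the argument. Likewise, the membership $\Q\in\cP^{ac}_{sem}$ (needed for $\Q\in\cM$) is implicit in your observation that $c^\Q=c^\P\ll dt$ together with the vanishing of the first characteristic under $\Q$; it would be worth one sentence to spell this out.
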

\begin{proof}
Let $\P \in \cP$ and consider the canonical decomposition of $S$ under $\P$ 
\begin{equation*}
S_t = \int_0^t b_s^\P \, ds + M^\P_t, \qquad 0\leq t \leq T,
\end{equation*}
where $M^\P$ is a continuous $\P$-local martingale with $\frac{d\langle M^\P \rangle}{dt}=c^\P$. Then Assumption~\ref{ass:invertible} guarantees that the stochastic process
\begin{equation}\label{eq:density-dQ-dP}
Z_t:= \mathcal{E}\Big(\int_0^t -\big[(c^\P)^{-1}\big]_s b^\P_s\,dM^\P_s\Big), \qquad 0\leq t \leq T,
\end{equation}
where $\cE(\cdot)$ denotes the stochastic exponential,
is well-defined and, e.g.\ by applying Novikov's condition, one sees that $Z$ defines a strictly positive continuous $\P$-martingale.
Therefore, one can define a measure $\Q \approx \P$ using  $(Z_t)_{0\leq t\leq T}$ as density process. Moreover, Girsanov's transformation theorem and Remark~\ref{rem:M-P-tilde-Theta} ensures that $\Q \in \cM$.

Conversely, let $\Q \in \cM$.
By \cite[Theorem~2.6]{neufeld2014measurability}, there exists an $\F$-predictable process such that $c=c^\Q \ \, \Q\otimes dt$-a.s.
Consider the set
\begin{equation}
\Upsilon:=\big\{(\omega,t)\in \Omega\times [0,T]\colon \exists\, b \in \R^d \mbox{ with } (b,c_t(\omega))\in \Theta \big\}.
\end{equation}
Since by assumption $\Theta\subseteq \R^d\times \S^d_+$ is compact (and hence closed), and the map $(\Omega\times [0,T])\times \R^d \ni (\omega,t,b)\mapsto (b,c_t(\omega))\in \R^d\times \S^d_+$ is a Carath\'eodory function, the implicit measurable functions theorem (see \cite[Theorem~14.16, p.654]{RockafellarWets}) ensures that $\Upsilon\subseteq \Omega\times[0,T]$ is an element of the $\F$-predictable $\sigma$-field
and that there exists a $\F$-predictable $\R^d$-valued stochastic process $(b_t)_{t\in [0,T]}$ such that
\begin{equation*}
(b_t(\omega),c_t(\omega))\in \Theta \quad \mbox{ for all } (\omega,t)\in \Upsilon.
\end{equation*}
Note that as $c=c^\Q \ \,\Q\otimes dt$-a.s., $\Theta_c=\proj_c(\Theta)$, and $\Q \in \cM$ we have that $\Upsilon$ has $\Q\otimes dt$-full measure.
Next, similar to above, 
Assumption~\ref{ass:invertible} guarantees that the  process
\begin{equation}\label{eq:density-dP-dQ}
\widetilde Z_t:= \mathcal{E}\Big(\int_0^t \big[c^{-1}\big]_s b_s\,dS_s\Big)
\end{equation}
is well-defined and, e.g.\ by applying Novikov's condition, one sees that $\widetilde Z$ defines a strictly positive continuous $\Q$-martingale.
Hence one can define a measure $\P \approx \Q$ using the  process $(\widetilde Z_t)_{t \in [0,T]}$ as density process. Moreover, Girsanov's transformation theorem ensures that $M^\P:= S-\int_0^\cdot b_s\,ds$ is a $\P$-local martingale. This in turn shows that
\begin{equation*}
\begin{split}
S_t= S_t-\int_0^t b_s\,ds + \int_0^t b_s\,ds = M^\P_t +\int_0^t b_s\,ds, \quad t\in [0,T],
\end{split}
\end{equation*}
which implies that $\P \in \cP$.
\end{proof}
As a consequence of the above lemma, we obtain the following observation.
\begin{remark}\label{rem:Nullsets}
Let Assumption~\ref{ass:invertible} hold. Then Lemma~\ref{le:Q-P} implies that the collection $\cN^{\cP}$ of all sets which are $\cF_T$-$\P$ null for every $\P \in \cP$ coincide with the corresponding set $\cN^{\cM}$. In particular, we see that
\begin{equation*}
\mathcal{G}_t = \bigcap_{s>t}  \big(\cF^*_s \vee \mathcal{N}^\cM\big), 
\qquad 0 \leq t \leq T.
\end{equation*}
\end{remark}
%
%
\begin{lemma}\label{le:diagonal}
Let Assumption~\ref{ass:invertible} holds. Then there exists  $\underline{\widehat{c}} \in \S^d_{++}$ which is diagonal and satisfies that $\underline{\widehat{c}}\leq c$ for all $c \in \Theta_c$.
\end{lemma}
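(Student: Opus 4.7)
The plan is to use the existing lower bound $\underline{c}\in\S^d_{++}$ provided by Assumption~\ref{ass:invertible} and to replace it by a scalar multiple of the identity matrix, which is automatically diagonal.

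More precisely, since $\underline{c}\in\S^d_{++}$, the matrix $\underline{c}$ is symmetric and all its eigenvalues are strictly positive. Let $\lambda:=\lambda_{\min}(\underline{c})>0$ denote the smallest eigenvalue of $\underline{c}$. By the spectral theorem, $\underline{c}-\lambda I$ is symmetric with all eigenvalues in $[0,\infty)$, hence positive semidefinite, so that $\lambda I\leq \underline{c}$ in the Loewner order. Combining this with the defining property $\underline{c}\leq c$ for all $c\in\Theta_c$ from Assumption~\ref{ass:invertible}, and transitivity of $\leq$ on symmetric matrices, yields $\lambda I\leq c$ for every $c\in\Theta_c$.

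It therefore suffices to set $\underline{\widehat{c}}:=\lambda I$, which is diagonal, belongs to $\S^d_{++}$, and satisfies the required lower bound. There is no real obstacle here; the argument is just the elementary observation that any positive definite matrix dominates a positive multiple of the identity.
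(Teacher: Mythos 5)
Your argument is correct and is essentially the same as the paper's: both take $\underline{\widehat{c}}=\lambda_{\min}(\underline{c})\,I$, observe via the spectral theorem that $\underline{c}-\underline{\widehat{c}}$ is positive semidefinite, and conclude $\underline{\widehat{c}}\leq\underline{c}\leq c$ for all $c\in\Theta_c$. The only cosmetic difference is that you state the transitivity step explicitly, which the paper leaves implicit.
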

\begin{proof}
Due to Assumption~\ref{ass:invertible}, there exists $\underline{c} \in \S^d_{++}$ which  satisfies that $\underline{c}\leq c$ for all $c \in \Theta_c$. Let  $\lambda_{\min}(\underline{c})>0$ 
be the smallest eigenvalue  
 of $\underline{c}$. 
Then we define
$\underline{\widehat{c}}=(\underline{\widehat{c}}^{ij})_{i,j\in\{1,\dots,d\}}$ by 
\begin{equation*}
\underline{\widehat{c}}^{ij}:= \lambda_{\min}(\underline{c}) \mbox{ Id}_{\R^{d\times d}}=
\begin{cases} 
\lambda_{\min}(\underline{c}) & \mbox{ if } i=j\\
0 & \mbox{ if } i\neq j.
\end{cases}
\end{equation*}
To see that $\underline{\widehat{c}}$ satisfies the desired properties, observe that $\underline{\widehat{c}}\in \S^d_{++}$  and is diagonal. Moreover, any eigenvalue of $\underline{c}-\underline{\widehat{c}}$ is of the form $\lambda_i-\lambda_{\min}(\underline{c})$ for some eigenvalue $\lambda_i$ of $\underline{c}$. This implies that
$\lambda_{\min}(\underline{c}-\underline{\widehat{c}})=0$, which ensures that 
$\underline{\widehat{c}}\leq  \underline{c}$.
\end{proof}
\begin{lemma}\label{le:H-Q-P-integr}
Let Assumption~\ref{ass:invertible} hold and let $(H_t)_{t\in [0,T]}$ be a $\mathbb{G}$-predictable process. Then $(H_t)_{t\in [0,T]}$ is $S$-integrable with respect to 
$\P$
for all $\P \in \cP$ if and only if   
$(H_t)_{t\in [0,T]}$ is $S$-integrable with respect to 
$\Q$
for all $\Q \in \cM$.
\end{lemma}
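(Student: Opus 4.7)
The plan is to reduce both integrability requirements to a single, measure-independent pathwise condition and then invoke Lemma~\ref{le:Q-P}. Specifically, set
\begin{equation*}
A:=\Bigl\{\omega\in\Omega\colon \int_0^T |H_s(\omega)|^2\,ds<\infty\Bigr\},
\end{equation*}
which is a Borel event independent of the probability measure; I would aim to show that $H$ is $S$-integrable under $\P$ iff $\P(A)=1$, for every $\P\in\cP$, and $S$-integrable under $\Q$ iff $\Q(A)=1$, for every $\Q\in\cM$.

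For this characterization, I would recall the standard criterion for semimartingale integrability (see, e.g., \cite[III.6.17]{JacodShiryaev.03}): for any $\P\in\cP$, the canonical process $S$ is a continuous $\P$-semimartingale with differential characteristics $(b^\P,c^\P)\in\Theta$ with respect to $\G$ (by Remark~\ref{rem:filtration}), and $H$ is $S$-integrable under $\P$ iff $\int_0^T|H_s^\top b_s^\P|\,ds + \int_0^T H_s^\top c_s^\P H_s\,ds<\infty$ holds $\P$-a.s. I would then apply Assumption~\ref{ass:invertible}: compactness of $\Theta$ gives uniform bounds $|b^\P|,\|c^\P\|\leq K$, while uniform ellipticity gives $c^\P\geq\underline c$, so there exist constants $0<k_1\leq k_2$ depending only on $\Theta$ with $k_1|H|^2\leq H^\top c^\P H\leq k_2|H|^2$, and Cauchy--Schwarz controls the drift integral by $K\sqrt{T}\bigl(\int_0^T|H_s|^2\,ds\bigr)^{1/2}$. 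Hence the criterion is equivalent to $\P(A)=1$. The same argument, now with the drift absent, yields the analogous statement under any $\Q\in\cM$.

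Finally, by Lemma~\ref{le:Q-P} every $\P\in\cP$ admits an equivalent $\Q\in\cM$ and conversely, and since $A$ does not depend on the measure, $\P(A)=1$ iff $\Q(A)=1$ whenever $\P\approx\Q$. Thus ``$H$ is $S$-integrable under every $\P\in\cP$'' and ``$H$ is $S$-integrable under every $\Q\in\cM$'' are equivalent, which is precisely the claim. The main obstacle will be to check that the integrability criterion applies verbatim to $\G$-predictable processes (not merely $\F$-predictable ones); this should follow from Remark~\ref{rem:filtration}, since both the semimartingale property and the characteristics of $S$ are invariant under passage from $\F$ to any filtration between $\F$ and $\F_+^\P$, in particular to $\G$. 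Care is also needed so that the predictable version of the volatility used to describe $\cM$ coincides $\Q\otimes dt$-a.e.\ with $c^\Q$, ensuring that the uniform ellipticity bound $c\geq\underline c$ transfers to the estimates above.
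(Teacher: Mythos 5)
Your proof is correct and is a genuine, slightly cleaner reorganization of the paper's argument. The paper proves the two implications separately: for $\cP\Rightarrow\cM$ it picks $\P\approx\Q$ via Lemma~\ref{le:Q-P}, uses $c^\Q=c^\P$ $\,\Q\otimes dt$-a.e., and applies the local-martingale criterion \cite[Theorem~III.6.4]{JacodShiryaev.03} (with no drift term to worry about); for $\cM\Rightarrow\cP$ it introduces a \emph{diagonal} lower bound $\underline{\widehat{c}}$ (Lemma~\ref{le:diagonal}) to obtain $\int_0^T|H_s^{(i)}|^2\,ds<\infty$ componentwise, then controls the drift part via Cauchy--Schwarz and compactness of $\Theta$. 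Your proposal instead factors both directions through the single measure-independent event $A=\{\int_0^T|H_s|^2\,ds<\infty\}$, showing that under Assumption~\ref{ass:invertible} $S$-integrability under any $\P\in\cP$ (resp.\ any $\Q\in\cM$) is \emph{equivalent} to $\P(A)=1$ (resp.\ $\Q(A)=1$), and then transfers across the two families via the pairwise equivalences from Lemma~\ref{le:Q-P}. This symmetrizes the two directions, replaces the detour through Lemma~\ref{le:diagonal} by the direct spectral bound $H^\top c H\geq\lambda_{\min}(\underline c)\,|H|^2$, and isolates the key point---that $A$ does not depend on the measure---more transparently. One small caution: the finite-variation integrability condition for the vector integrand should be read componentwise, i.e.\ $\sum_i\int_0^T|H_s^{(i)}b_s^{(i),\P}|\,ds<\infty$, which is still controlled by $\cK\sqrt{T}\bigl(\int_0^T|H_s|^2\,ds\bigr)^{1/2}$; with that reading your equivalence with $\P(A)=1$ holds as claimed. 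Your remarks on $\mathbb G$-predictability (via Remark~\ref{rem:filtration}) and the predictable version of the volatility are exactly the right places to double-check, and both go through.
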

\begin{proof}
For the first direction, assume that $(H_t)_{t\in [0,T]}$ is $S$-integrable with respect to 
every
$\P \in \cP$, and let $\Q \in \cM$. By Lemma~\ref{le:Q-P} there exists $\P \in \cP$ such that  $\P\approx\Q$. Let 
\begin{equation*}
S=S_0+ M^\P+ \int_0^\cdot b^\P_s \, ds
\end{equation*}
be its canonical representation under $\P$, where $M^\P$ is a $\P$-local martingale with second differential characteristic $c^\P=(c^{ij,\P})_{i,j\in\{1,\dots,d\}}$. 
Then, as $H=(H^{(1)},\dots,H^{(d)})$ is  $S$-integrable with respect to 
$\P$
and  $\P\approx\Q$, we have that $c^\Q=c^\P \ \Q\otimes dt$-a.s.\ and  by \cite[Definition~III.6.17, p.207]{JacodShiryaev.03} that $\Q$-a.s.\ (and  $\P$-a.s.) 
\begin{equation*}
\int_0^T {\textstyle\sum\limits_{i,j=1}^d} H^{(i)}_s c^{ij,\Q}_s H^{(j)}_s\,ds 
=
\int_0^T {\textstyle\sum\limits_{i,j=1}^d} H^{(i)}_s c^{ij,\P}_s H^{(j)}_s\,ds
<\infty.
\end{equation*}
This implies by \cite[Theorem~III.6.4, p.204]{JacodShiryaev.03} that 
 $H$ is  $S$-integrable with respect to 
$\Q$.

On the other hand, assume now that $(H_t)_{t\in [0,T]}$ is $S$-integrable with respect every $\Q \in \cM$, and let $\P \in \cP$. 
Moreover, let 
\begin{equation*}
S=S_0+ M^\P+ \int_0^\cdot b^\P_s\,ds
\end{equation*}
be the canonical decomposition of $S$ under $\P$, 
 where $M^\P$ is a $\P$-local martingale with second differential characteristic $c^\P=(c^{ij,\P})_{i,j\in\{1,\dots,d\}}$.
By Lemma~\ref{le:Q-P} there exists $\Q \in \cM$ such that  $\Q\approx\P$.
Moreover, due to Assumption~\ref{ass:invertible}, we know from Lemma~\ref{le:diagonal} that there exists  $\underline{\widehat{c}} \in \S^d_{++}$ which is diagonal and satisfies that $\underline{\widehat{c}}\leq c$ for all $c \in \Theta_c$.
Therefore, by \cite[Theorem~III.6.4, p.204]{JacodShiryaev.03}, we have $\Q$-a.s.\ that
\begin{equation*}
\begin{split}
\sum_{i=1}^d \Big(\underline{\widehat{c}}^{ii} \int_0^T  |H^{(i)}_s|^2 \,ds\Big)
=
\int_0^T {\textstyle\sum\limits_{i,j=1}^d} H^{(i)}_s \underline{\widehat{c}}^{ij} H^{(j)}_s\,ds 
\leq 
\int_0^T {\textstyle\sum\limits_{i,j=1}^d} H^{(i)}_s c^{ij,\Q}_s H^{(j)}_s\,ds 
<\infty.
\end{split}
\end{equation*}
This and the fact that $\underline{\widehat{c}}^{ii}>0$ for each $i$ implies that each summand on the left-hand side is nonnegative and hence finite $\Q$-a.s.. In particular, we have for each $i\in \{1,\dots,d\}$ that $\Q$-a.s.\ (and hence also $\P$-a.s.),
\begin{equation}\label{eq:H-2-integrable}
\int_0^T  \big|H^{(i)}_s\big|^2 \,ds <\infty.
\end{equation}
Moreover,  the hypothesis that $\Theta$ is compact (and hence bounded) ensures that
$\cK:=\sup_{(b,c)\in \Theta} \big[ \Vert b\Vert + \Vert c\Vert\big]<\infty$.
This, \eqref{eq:H-2-integrable}, the fact that  $c^\P=c^\Q \ \P\otimes dt$-a.s., the Cauchy-Schwarz inequality, and \cite[Theorem~III.6.4, p.204]{JacodShiryaev.03} imply
 that $\P$-a.s. (and also $\Q$-a.s.)
\begin{equation*} 
\begin{split}
\int_0^T \Big| {\textstyle{\sum\limits_{i=1}^d}}H_s^{(i)}\ b^{i,\P}_s\Big|\,ds + \int_0^T {\textstyle{\sum\limits_{i=1}^d}}H^{(i)}_s c^{ij,\P}_s H^{(j)}_s\,ds
& \leq
\int_0^T  \cK{\textstyle{\sqrt{\sum\limits_{i=1}^d\big|H_s^{(i)}\big|^2}}}\ \,ds + \int_0^T {\textstyle{\sum\limits_{i=1}^d}}H^{(i)}_s c^{ij,\Q}_s H^{(j)}_s\,ds\\
& \leq
\cK\int_0^T  \Big[{1+\textstyle{\sum\limits_{i=1}^d\big|H_s^{(i)}\big|^2}}\Big]\ \,ds + \int_0^T {\textstyle{\sum\limits_{i=1}^d}}H^{(i)}_s c^{ij,\Q}_s H^{(j)}_s\,ds\\
&<\infty.
\end{split}
\end{equation*}
By  \cite[Definition~III.6.17, p.207]{JacodShiryaev.03}, we hence get that  $(H_t)_{t\in [0,T]}$ is  $S$-integrable with respect to $\P$, which finishes the proof.
\end{proof}
%
%
%
%
%
%
The following lemma is one of the two main tools to verify that the bipolar relation of $\cC$ and $\cD$ imposed in Theorem~\ref{thm:main} and Theorem~\ref{thm:main-2} hold. It states that on $\Omega=C([0,T],\R^d)$ the set of separating measures, which coincides with the set of local-martingale measures as $S$ has continuous sample paths, is already characterised by the separation of continuous functions. 
\begin{lemma}\label{le:separation}
	Let
\begin{equation}\label{def:Gamma}
\Gamma:=\Big\{ \gamma \in C_b(\Omega)\colon \mbox{ there exists }  H \in \mathcal{H}\mbox{ such that } \gamma \leq (H \cdot S)_T\Big\},
\end{equation}
and let $\Q \in \fP(\Omega)$ such that $\E_{\Q}[\gamma]\leq 0$ for all $\gamma \in \Gamma$. Then $(S_t)_{0\leq t \leq T}$ is a $\Q$-$\F$-local martingale.
\end{lemma}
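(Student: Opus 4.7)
The plan is to exhibit an unbounded sequence of $\F$-stopping times $(\tau_N)$ for which the stopped process $S^{\tau_N}=S_{\cdot\wedge\tau_N}$ is a $\Q$-$\F$-martingale; since paths are bounded on $[0,T]$, this forces $\tau_N\uparrow T$ pointwise on $\Omega$ and therefore identifies $S$ as a $\Q$-$\F$-local martingale. I take $\tau_N:=\inf\{u\in[0,T]\colon|S_u-S_0|\geq N\}\wedge T$, which is an $\F$-stopping time since $|S-S_0|$ is continuous and $\F$-adapted. By the functional monotone class theorem it suffices to verify
\[
\E_\Q\bigl[\varphi(S_{\cdot\wedge s})\cdot(S^{\tau_N}_t-S^{\tau_N}_s)\bigr]=0
\]
for every $0\leq s<t\leq T$ and every $\R^d$-valued continuous bounded $\varphi$ on $\Omega$ depending only on $\omega|_{[0,s]}$ (i.e.\ $\F_s$-measurable), since such functions generate $\F_s$.

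Fix $s<t$ and such a $\varphi$ with $M:=\sup|\varphi|<\infty$, and consider the simple $\F$-predictable strategy $H:=\varphi(S_{\cdot\wedge s})\mathbf{1}_{(s,t\wedge\tau_N]}$. Pathwise, $(H\cdot S)_T(\omega)=\varphi(\omega)\cdot(\omega(t\wedge\tau_N(\omega))-\omega(s\wedge\tau_N(\omega)))\in[-2MN,2MN]$, so $\pm H\in\cH$; write $f_N:=(H\cdot S)_T$. A direct check shows that $\tau_N$ is lower semicontinuous on $\Omega$ and is in fact continuous at every $\omega$ outside the level set
\[
B_N:=\Bigl\{\omega\in\Omega\colon\max_{u\in[0,T]}|\omega(u)-\omega(0)|=N\Bigr\}:
\]
on $\{\max_{[0,T]}|\omega-\omega(0)|<N\}$ every nearby path stays strictly below $N$, so $\tau_N=T$ on a whole neighbourhood; on $\{\max_{[0,T]}|\omega-\omega(0)|>N\}$ the path strictly crosses $N$ at $\tau_N$, making the first-crossing time stable under uniform perturbations. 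Combined with the joint continuity of the evaluation $(u,\omega)\mapsto\omega(u)$ on $[0,T]\times C([0,T],\R^d)$, this implies $f_N$ is continuous at every $\omega\notin B_N$. The sets $(B_N)_{N>0}$ are pairwise disjoint Borel subsets of $\Omega$, so at most countably many carry positive $\Q$-mass; I henceforth restrict attention to $N$ (still forming an unbounded set) for which $\Q(B_N)=0$, so that $f_N$ is $\Q$-a.s.\ continuous.

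For such a good $N$, introduce the lower-semicontinuous envelope $f_N^\ast(\omega):=\liminf_{\omega'\to\omega}f_N(\omega')$. Then $f_N^\ast\leq f_N$ pointwise, $f_N^\ast=f_N$ $\Q$-a.s.\ by the $\Q$-a.s.\ continuity of $f_N$, and the bounded lsc function $f_N^\ast$ is the pointwise supremum of its continuous bounded minorants, so monotone convergence yields
\[
\E_\Q[f_N]=\E_\Q[f_N^\ast]=\sup\bigl\{\E_\Q[\gamma]\colon\gamma\in C_b(\Omega),\ \gamma\leq f_N^\ast\bigr\}.
\]
Every such $\gamma$ satisfies $\gamma\leq f_N^\ast\leq f_N=(H\cdot S)_T$ pointwise and hence belongs to $\Gamma$; the hypothesis $\E_\Q[\gamma]\leq 0$ therefore gives $\E_\Q[f_N]\leq 0$. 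Running the identical argument with $-H\in\cH$ in place of $H$ yields $\E_\Q[-f_N]\leq 0$, and combining the two gives $\E_\Q[\varphi\cdot(S^{\tau_N}_t-S^{\tau_N}_s)]=0$. A monotone class argument extends this to all bounded $\F_s$-measurable test functions, so $S^{\tau_N}$ is a $\Q$-$\F$-martingale for every good $N$; letting $N$ run through an unbounded sequence of good levels (with $\tau_N\uparrow T$ pointwise) identifies $S$ as a $\Q$-$\F$-local martingale. The step requiring the most care is showing that $f_N$ is $\Q$-a.s.\ continuous for all but countably many $N$, which rests on the disjointness of the level sets $B_N$ and is what enables the lsc-envelope construction to produce sufficiently many continuous minorants in $\Gamma$.
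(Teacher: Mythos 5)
Your overall strategy (localise with hitting times of $|S-S_0|$, approximate the resulting bounded payoff by continuous minorants, and feed those minorants into the hypothesis on $\Gamma$) is in the spirit of the paper's Proposition~\ref{prop:separation-app}, but there is a genuine gap at the step where you claim that $f_N$ (equivalently, $\tau_N$) is continuous at every $\omega$ outside $B_N=\{\max_{[0,T]}|\omega-\omega(0)|=N\}$. That claim is false. Lower semicontinuity of $\tau_N$ is fine, but upper semicontinuity at $\omega$ requires that the running maximum $M(u):=\sup_{r\le u}|\omega(r)-\omega(0)|$ strictly exceed $N$ immediately after $\tau_N(\omega)$. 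A path may touch level $N$ at $\tau_N(\omega)$, fall back, and only later exceed $N$: then $\max_{[0,T]}|\omega-\omega(0)|>N$ (so $\omega\notin B_N$), yet $\tau_N$ jumps under arbitrarily small uniform perturbations (for instance, multiplying the increments $\omega-\omega(0)$ by $1-\tfrac1k$ pushes $\tau_N$ to the later crossing time). The correct set of discontinuities of $\tau_N$ is therefore $\{\omega:\tau_N(\omega)<T\text{ and }M\text{ has a nontrivial flat at level }N\text{ starting at }\tau_N(\omega)\}$, which strictly contains $B_N\cap\{\tau_N<T\}$ and, more importantly, is \emph{not} pairwise disjoint across $N$. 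So the countable-additivity argument you invoke does not apply to the set you actually need. (One can still show these bad sets are $\Q$-null for Lebesgue-a.e.\ $N$ by a Fubini argument, using that for each fixed $\omega$ the flat levels of $M$ form a countable set — but that is a different argument from the one you wrote, and it is the missing piece.)

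The paper sidesteps this difficulty entirely. Working componentwise with $\cS=S^{(i)}$, it replaces the exact hitting time $\sigma=\inf\{r\ge s:|\cS_r|\ge m\}\wedge T$ by the family $\sigma_\varepsilon=\inf\{r\ge s:\cS_r>m-\varepsilon\text{ or }\cS_r\le -m+\varepsilon\}\wedge T$, for which $\omega\mapsto\cS_{t\wedge\sigma_\varepsilon(\omega)}(\omega)$ is lower semicontinuous on \emph{all} of $\Omega$ (this is \cite[Lemma~5.4]{predictionSet}); it then produces continuous minorants directly from that lsc function, applies the hypothesis on $\Gamma$, and finally lets $\varepsilon\downarrow 0$ using bounded convergence. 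That construction avoids ever having to characterise — or estimate the measure of — the discontinuity set of $\tau_N$. Your dot-product test functions and lsc-envelope idea are fine and the argument can be repaired along the lines indicated above, but as written the continuity-outside-$B_N$ step does not hold.
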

\begin{proof}
This follows directly from Proposition~\ref{prop:separation-app} and Remark~\ref{rem:Ansel-Stricker-simple}, which in turn is a slight modification of \cite[Proposition~5.5]{predictionSet} and \cite[Proposition~4.4]{bartl2019duality}.
\end{proof}
The following two lemmas are direct consequences of Lemma~\ref{le:separation}.
\begin{proposition}\label{prop:D-C-Cb-polar}
We have that  
	\begin{equation*}
	\fM_e(\cP)=\cD= 
\big\{\Q \in \fP_e(\cP)\colon \E_\Q[X]\leq 1 \mbox{ for all } X \in (\mathcal{C}\cap C_b)\big\}.
\end{equation*}
In addition, if Assumption~\ref{ass:invertible} holds, then we additionally have that
	\begin{equation*}
\mathcal{M}=
\fM_e(\cP).
\end{equation*}
\end{proposition}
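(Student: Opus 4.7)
\medskip

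\noindent\textbf{Proof plan.}
My plan is to prove the triple identity $\fM_e(\cP)=\cD=\{\Q\in\fP_e(\cP):\E_\Q[X]\leq 1\ \forall X\in\cC\cap C_b\}$ by a chain of three inclusions, and then deduce $\cM=\fM_e(\cP)$ from Lemma~\ref{le:Q-P}. Call the third set $\cD_c$. The middle inclusion $\cD\subseteq\cD_c$ is immediate since $\cC\cap C_b\subseteq\cC$, so everything reduces to showing $\fM_e(\cP)\subseteq\cD$ and $\cD_c\subseteq\fM_e(\cP)$.

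For $\fM_e(\cP)\subseteq\cD$, fix $\Q\in\fM_e(\cP)$ and $X\in\cC$. Choose $H\in\cH$ with $1+(H\cdot S)_T\geq X$ $\cP$-q.s.; by definition of $\cH$, the process $(H\cdot S)$ is bounded below by some constant $-c$ under every $\P\in\cP$, and since $\Q\approx\P$ for some $\P\in\cP$, the same bound holds $\Q$-a.s. Because $S$ is a $\Q$-local martingale, the Ansel--Stricker theorem yields that $(H\cdot S)$ is a $\Q$-local martingale, and being bounded below it is a $\Q$-supermartingale, so $\E_\Q[(H\cdot S)_T]\leq 0$ and thus $\E_\Q[X]\leq 1$.

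For the harder inclusion $\cD_c\subseteq\fM_e(\cP)$, take $\Q\in\fP_e(\cP)$ with $\E_\Q[X]\leq 1$ for all $X\in\cC\cap C_b$. The goal is to feed this information into the separation criterion of Lemma~\ref{le:separation}, so I need to test $\Q$ against all $\gamma\in\Gamma$. Given such a $\gamma\in C_b$ with $\gamma\leq (H\cdot S)_T$ for some $H\in\cH$, note that $\gamma$ is bounded, hence for sufficiently small $\lambda>0$ the function $X_\lambda:=1+\lambda\gamma$ is nonnegative, continuous, and bounded; moreover $\lambda H\in\cH$ and $X_\lambda\leq 1+(\lambda H\cdot S)_T$ $\cP$-q.s., so $X_\lambda\in\cC\cap C_b$. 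The assumption then forces $\E_\Q[X_\lambda]\leq 1$, i.e.\ $\lambda\E_\Q[\gamma]\leq 0$, so $\E_\Q[\gamma]\leq 0$. Lemma~\ref{le:separation} then gives that $S$ is a $\Q$-$\F$-local martingale, whence $\Q\in\fM_e(\cP)$.

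Finally, the identity $\cM=\fM_e(\cP)$ under Assumption~\ref{ass:invertible} follows from Lemma~\ref{le:Q-P}: the inclusion $\cM\subseteq\fM_e(\cP)$ is the second part of that lemma (each $\Q\in\cM$ admits an equivalent $\P\in\cP$), while for $\fM_e(\cP)\subseteq\cM$ one picks $\Q\in\fM_e(\cP)$ with $\Q\approx\P\in\cP$ and observes that the second differential characteristic is invariant under equivalent changes of measure, so $c^\Q=c^\P\in\Theta_c$ $\Q\otimes dt$-a.e., and $\Q\in\cP^{ac}_{sem}$ because absolute continuity of the characteristics transfers via equivalence. The main obstacle in this plan is the $\cD_c\subseteq\fM_e(\cP)$ step, and the decisive trick there is the scaling $\gamma\mapsto 1+\lambda\gamma$ that turns an arbitrary bounded continuous element of $\Gamma$ into a member of $\cC\cap C_b$ without leaving the class of admissible strategies.
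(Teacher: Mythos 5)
Your proof is correct and follows essentially the same route as the paper: the chain $\fM_e(\cP)\subseteq\cD\subseteq\cD_c\subseteq\fM_e(\cP)$, with the key scaling trick (your $1+\lambda\gamma$ is the paper's $\tfrac{1}{c}(c+\gamma)$) to convert elements of $\Gamma$ into members of $\cC\cap C_b$ so that Lemma~\ref{le:separation} applies, and Lemma~\ref{le:Q-P} plus characteristic invariance under equivalent measures for the final identity. Your explicit appeal to Ansel--Stricker to justify the supermartingale property of $(H\cdot S)$ under $\Q$ is a slightly more detailed version of what the paper asserts by ``definition of the set $\cH$,'' but the substance is identical.
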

\begin{proof}
	Throughout this proof, let  $(\cC \cap C_b)^\circ:=\big\{\Q \in \fP_e(\cP)\colon \E_\Q[X]\leq 1 \mbox{ for all } X \in (\mathcal{C}\cap C_b)\big\}$.
	 
Now, to see that $\fM_e(\cP)\subseteq\mathcal{D}$, let $\Q \in \fM_e(\cP)$ and let $X \in \cC$. 
Then, there exists $\P \in \cP$ such that $\P\approx \Q$. This implies that
there exists $H \in \cH$ such that
$X\leq 1 + (H\cdot S)_T \ \Q$-a.s..
Therefore, since
$(H\cdot S)$ 
is  
a $\Q$-supermartingale by definition of the set $\cH$, we conclude that $\E_\Q[X]\leq 1$.  
Furthermore,
$\mathcal{D}\subseteq (\cC \cap C_b)^\circ$ follows directly from the definition of $(\cC \cap C_b)^\circ$. 

To see that $\fM_e(\cP)\supseteq (\cC \cap C_b)^\circ$, let $\Q \in (\cC \cap C_b)^\circ$. By definition, $\Q \in \fP_e(\cP)$. 
Now,  for each $\gamma \in \Gamma \subseteq C_b(\Omega)$
there exists $c\geq 0$ such that $c+\gamma\geq 0$. This 
implies that $\frac{1}{c}(c + \gamma) \in \cC \cap C_b$. This in turn implies that $\E_\Q[\frac{1}{c}(c + \gamma)]\leq 1$ which is equivalent to $\E_\Q[\gamma]\leq 0$. 
By 
Lemma~\ref{le:separation}
we get that $\Q$ is a local-martingale measure for $S$. This and the fact that $\Q \in \fP_e(\cP)$ implies that $\Q \in \fM_e(\cP)$.
Hence we have indeed shown that
\begin{equation*}
\fM_e(\cP)=\cD= (\cC \cap C_b)^\circ.
\end{equation*}
Finally, 
if Assumption~\ref{ass:invertible} holds, 
then $\mathcal{M}\subseteq\fM_e(\cP)$ follows directly from Lemma~\ref{le:Q-P}. Conversely,  $\fM_e(\cP)\subseteq \mathcal{M}$ follows by Girsanov's theorem for semimartingales \cite[Theorem~III.3.24, p.172]{JacodShiryaev.03} and the fact that a semimartingale with continuous sample paths is a local martingale if and only if its predictable finite-variation part vanishes.
\end{proof}	
\begin{proposition}\label{prop:P-D-convex-compact}
Let Assumption~\ref{ass:invertible} hold. Then both $\cP, \cD \subseteq \fP(\Omega)$ are convex and compact.
\end{proposition}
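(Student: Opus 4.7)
The plan is to treat $\cP$ and $\cD$ uniformly. By Proposition~\ref{prop:D-C-Cb-polar} together with Remark~\ref{rem:M-P-tilde-Theta}, we have $\cD=\cM=\cP^{ac}_{sem}(\widetilde{\Theta})$ for $\widetilde{\Theta}:=\{0,\dots,0\}\times\Theta_c$, and Assumption~\ref{ass:invertible} ensures that both $\Theta$ and $\widetilde{\Theta}$ are convex and compact subsets of $\R^d\times\S^d_+$ (since $\proj_c$ is continuous and linear, $\Theta_c$ is itself convex and compact). Thus both sets under consideration are of the form $\cP^{ac}_{sem}(K)$ with $K$ convex and compact, and it suffices to establish that any such $\cP^{ac}_{sem}(K)$ is convex and weakly compact in $\fP(\Omega)$.

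For convexity, I would take $\P_1,\P_2\in\cP^{ac}_{sem}(K)$, $\lambda\in(0,1)$, and set $\P:=\lambda\P_1+(1-\lambda)\P_2$. Then $\P_i\ll\P$ with densities $Y^i:=d\P_i/d\P$ satisfying $\lambda Y^1+(1-\lambda)Y^2=1$. The fact that $S$ remains a $\P$-semimartingale is a standard stability result for convex combinations of semimartingale measures (cf.\ \cite{JacodShiryaev.03}). Applying the Girsanov transformation to each change $\P\rightsquigarrow\P_i$ and taking the appropriate weighted sum gives $b^\P_t=\lambda Y^1_t b^{\P_1}_t+(1-\lambda)Y^2_t b^{\P_2}_t$ together with $c^\P_t=c^{\P_i}_t$ $\P_i\otimes dt$-a.e. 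In particular, on the overlap of the supports of $\P_1$ and $\P_2$ one has $c^{\P_1}_t=c^{\P_2}_t=c^\P_t$, so $(b^\P_t,c^\P_t)$ is on this overlap an honest convex combination of $(b^{\P_1}_t,c^{\P_1}_t)$ and $(b^{\P_2}_t,c^{\P_2}_t)$, and it reduces to $(b^{\P_i}_t,c^{\P_i}_t)$ on the singular parts. Convexity of $K$ then yields $(b^\P,c^\P)\in K$ $\P\otimes dt$-a.e.

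For compactness, by Prokhorov's theorem I need tightness and closedness under weak convergence. Tightness follows from the uniform bound $M:=\sup_{(b,c)\in K}(\|b\|+\|c\|)<\infty$: for any $\P\in\cP^{ac}_{sem}(K)$ the canonical decomposition $S_t=S_0+\int_0^t b^\P_s\,ds+M^\P_t$ with $\langle M^\P\rangle_t=\int_0^t c^\P_s\,ds$ combined with the Burkholder-Davis-Gundy inequality yields, for some $C=C(M,d,T)$,
\begin{equation*}
\E_\P\big[\|S_t-S_s\|^4\big]\leq C(t-s)^2,
\end{equation*}
uniformly in $\P\in\cP^{ac}_{sem}(K)$, whence Kolmogorov's tightness criterion for $C([0,T],\R^d)$ applies.

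For closedness, suppose $\P_n\to\P$ weakly with $\P_n\in\cP^{ac}_{sem}(K)$. I would invoke the martingale-problem characterisation: $\Q\in\cP^{ac}_{sem}(K)$ iff for every $f\in C^2_b(\R^d)$ both
\begin{equation*}
f(S_\cdot)-f(S_0)-\int_0^\cdot\alpha_f(S_u)\,du\quad\text{and}\quad f(S_0)-f(S_\cdot)+\int_0^\cdot\beta_f(S_u)\,du
\end{equation*}
are $\Q$-supermartingales, where $\alpha_f(x):=\sup_{(b,c)\in K}\bigl(\nabla f(x)^\top b+\tfrac{1}{2}\tr(c\nabla^2 f(x))\bigr)$ and $\beta_f$ denotes the analogous infimum. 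Compactness of $K$ renders $\alpha_f,\beta_f$ bounded and continuous, so both supermartingale inequalities pass to the weak limit $\P$ when tested against bounded continuous $\cF_s$-cylinder functionals; a measurable selection argument combined with closedness and convexity of $K$ then lifts these integrated inequalities to pointwise characteristics $(b^\P,c^\P)\in K$ $\P\otimes dt$-a.e. The main obstacle is implementing this closedness step rigorously, but for sets of the form $\cP^{ac}_{sem}(K)$ with $K$ convex and compact this precise result is established in the non-dominated stochastic analysis literature \cite{nutz2012superhedging,NeufeldNutz.12,neufeld2014measurability}, from which the conclusion can be cited directly.
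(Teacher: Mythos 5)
Your proposal takes essentially the same route as the paper: reduce the claim about $\cD$ to the identity $\cD=\cM=\cP^{ac}_{sem}(\widetilde{\Theta})$ via Proposition~\ref{prop:D-C-Cb-polar} and Remark~\ref{rem:M-P-tilde-Theta}, observe that $\widetilde{\Theta}$ inherits convexity and compactness from $\Theta$, and then invoke convexity and weak compactness of $\cP^{ac}_{sem}(K)$ for convex compact $K\subseteq\R^d\times\S^d_+$. The paper dispatches these last two facts by citing \cite[Theorem~III.3.40]{JacodShiryaev.03} for convexity and \cite[Theorem~2.5]{liu2019compactness} for compactness; you instead sketch the underlying arguments (Girsanov for convex combinations of measures, BDG plus Kolmogorov's criterion for tightness, a martingale-problem formulation for closedness) and defer only the closedness step to the literature, which is precisely the content of the references the paper cites — so this is a more self-contained presentation of the same proof rather than a different approach.
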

\begin{proof}
First, note that by definition $\cM=\cP^{ac}_{sem}(\widetilde{\Theta})$, where $\widetilde{\Theta}:=\{0,...,0\}\times\Theta_c\subseteq \R^d\times \S^d_+$ and $\Theta_c:=\proj_c(\Theta)\subseteq\S^d_+$. Moreover, as by assumption $\Theta$ is convex and compact, so is $\widetilde{\Theta}$. Therefore, the compactness of $\mathcal P$ and $\cM$ follows directly from \cite[Theorem~2.5]{liu2019compactness}, whereas the convexity of $\cP$ and $\cM$ follows by \cite[Theorem~III.3.40, p.176]{JacodShiryaev.03}.
In addition, we know by Proposition~\ref{prop:D-C-Cb-polar}  that $\cM=\mathcal{D}$, which finishes the proof.
\end{proof}
The following lemma is the second crucial tool to prove the bipolar relation imposed on $\cC$ and $\cD$. It heavily uses the fact that one can construct a process $Y$ which is a $\Q$-supermartingale for every $\Q\in \cM$, as well as the robust optional decomposition theorem.
\begin{proposition}\label{prop:C-C-b-char}
Let Assumption~\ref{ass:invertible} hold. Then we have that 
\begin{equation*}
\big\{ X \in C_b^+\colon \E_{\mathbb Q}[X]\leq 1 \mbox{ for all }\mathbb Q \in \mathcal{D} \big\}=\mathcal{C}\cap C_b.
\end{equation*}
\end{proposition}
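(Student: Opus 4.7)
The inclusion $\mathcal{C}\cap C_b \subseteq \{X\in C_b^+ : \mathbb{E}_{\mathbb{Q}}[X]\leq 1\ \forall\,\mathbb{Q}\in\mathcal{D}\}$ is immediate from the definition of $\mathcal{D}$: if $X\leq 1+(H\cdot S)_T$ $\mathcal{P}$-q.s.\ for some $H\in\mathcal{H}$, then for any $\mathbb{Q}\in\mathcal{D}\subseteq\mathfrak{P}_e(\mathcal{P})$ there is $\mathbb{P}\in\mathcal{P}$ with $\mathbb{Q}\approx\mathbb{P}$, the stochastic integral $(H\cdot S)$ is a local $\mathbb{Q}$-martingale bounded from below by a constant, hence a true $\mathbb{Q}$-supermartingale, and so $\mathbb{E}_{\mathbb{Q}}[X]\leq 1$.

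For the nontrivial inclusion, fix $X\in C_b^+$ with $\sup_{\mathbb{Q}\in\mathcal{D}}\mathbb{E}_{\mathbb{Q}}[X]\leq 1$. By Proposition~\ref{prop:D-C-Cb-polar} together with Assumption~\ref{ass:invertible}, $\mathcal{D}=\mathfrak{M}_e(\mathcal{P})=\mathcal{M}=\mathcal{P}^{ac}_{sem}(\widetilde{\Theta})$, so the hypothesis reads $\sup_{\mathbb{Q}\in\mathcal{M}}\mathbb{E}_{\mathbb{Q}}[X]\leq 1$. The plan is to invoke the quasi-sure superhedging/optional decomposition machinery developed by Soner--Touzi--Nutz~\cite{nutz2012superhedging}, Neufeld--Nutz~\cite{NeufeldNutz.12} and Nutz~\cite{nutz2015robust} in exactly this framework $\mathcal{M}=\mathcal{P}^{ac}_{sem}(\widetilde{\Theta})$ with the filtration $\mathbb{G}$. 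For the bounded upper-semianalytic payoff $X$, that machinery delivers a $\mathbb{G}$-predictable process $H$ that is $S$-integrable under every $\mathbb{Q}\in\mathcal{M}$, together with a nondecreasing optional process $A$ with $A_0=0$, such that
\begin{equation*}
1+(H\cdot S)_t-A_t \;\geq\; \operatorname*{ess\,sup}_{\mathbb{Q}'\in\mathcal{M}(t,\mathbb{Q})}\mathbb{E}_{\mathbb{Q}'}[X\mid \mathcal{G}_t]\;\geq\;0 \qquad \mathcal{M}\text{-q.s.},\ 0\leq t\leq T,
\end{equation*}
and in particular $1+(H\cdot S)_T\geq X$ $\mathcal{M}$-q.s. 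By Lemma~\ref{le:Q-P} and Remark~\ref{rem:Nullsets}, $\mathcal{M}$-polar and $\mathcal{P}$-polar sets coincide, so the superhedging inequality also holds $\mathcal{P}$-q.s.

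It remains to verify $H\in\mathcal{H}$. The $\mathbb{G}$-predictability comes directly from the cited theorem; the $\mathbb{Q}$-integrability for every $\mathbb{Q}\in\mathcal{M}$ is lifted to $\mathbb{P}$-integrability for every $\mathbb{P}\in\mathcal{P}$ via Lemma~\ref{le:H-Q-P-integr}. Rewriting the displayed inequality as $(H\cdot S)_t\geq A_t-1\geq -1$ shows that the uniform admissibility bound $(H\cdot S)\geq -1$ holds $\mathcal{M}$-q.s., hence $\mathcal{P}$-q.s., so $H\in\mathcal{H}$ and $X\in\mathcal{C}\cap C_b$.

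The main technical obstacle will be pinning down a formulation of the quasi-sure optional decomposition in the literature whose hypotheses are met verbatim here: the claim is a bounded continuous (in particular universally measurable) functional, the strategy must be $\mathbb{G}$-predictable with the admissibility built in, and the conclusion must be $\mathcal{M}$-q.s.\ rather than merely $\mathbb{Q}$-by-$\mathbb{Q}$. The identification $\mathcal{D}=\mathcal{M}$ from Proposition~\ref{prop:D-C-Cb-polar}, together with Assumption~\ref{ass:invertible} and Lemma~\ref{le:Q-P} (which transfer the $\mathcal{M}$-q.s.\ inequality to $\mathcal{P}$-q.s.), are precisely what bridge the martingale-measure phrasing of the superhedging literature and the separating-measure phrasing used to define $\mathcal{C}$ and $\mathcal{D}$ in \eqref{C-D-applic}.
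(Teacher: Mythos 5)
Your proof follows essentially the same route as the paper: the easy inclusion is by the supermartingale property of $(H\cdot S)$ under $\mathbb{Q}\in\mathcal{D}$, and the hard inclusion goes through the identification $\mathcal{D}=\mathcal{M}=\cP^{ac}_{sem}(\widetilde{\Theta})$ from Proposition~\ref{prop:D-C-Cb-polar} and then the quasi-sure Snell-envelope plus optional decomposition machinery of Nutz and Neufeld--Nutz, with Lemmas~\ref{le:Q-P}~and~\ref{le:H-Q-P-integr} transferring the $\mathcal{M}$-q.s.\ statements and integrability to $\cP$. The ``technical obstacle'' you flag — verifying that the hypotheses of the cited decomposition results are met — is exactly what the paper closes by invoking \cite[Theorem~2.1]{neufeld2017nonlinear} to get Condition~(A) for $\mathcal{M}$ (so that the Snell-envelope construction of \cite[Theorem~3.2]{nutz2015robust} and \cite[Theorem~2.3]{NeufeldNutz.12} applies, yielding a nonnegative c\`adl\`ag $\mathbb G$-supermartingale $Y$ with $Y_T=X$ and $Y_0\leq\sup_{\mathbb Q\in\mathcal M}\E_{\mathbb Q}[X]\leq 1$), together with \cite[Lemma~4.2]{nutz2015robust} for saturation (so that the optional decomposition \cite[Theorem~2.4]{nutz2015robust} produces the $\mathbb G$-predictable $H$); the paper also uses $Y\geq0$ and $Y_0\leq 1$ to obtain the admissibility bound $(H\cdot S)_t\geq Y_t-Y_0\geq -1$, which is the same bound you derive from $A_t\geq A_0=0$.
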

\begin{proof}
	By definition, we have that $\big\{ X \in C_b^+\colon \E_{\mathbb Q}[X]\leq 1 \mbox{ for all }\mathbb Q \in \mathcal{D} \big\}\supseteq\mathcal{C}\cap C_b$. To see the opposite inclusion, let $X \in C_b^+$ such that $\E_{\mathbb Q}[X]\leq 1$ for all $\mathbb Q \in \mathcal{D}$. Since $X$ is nonnegative, bounded, and continuous (and so Borel), and since by \cite[Theorem~2.1]{neufeld2017nonlinear} the set $\cM$ satisfies the so-called Condition~(A) (see \cite{neufeld2017nonlinear} or \cite{nutz2015robust} for the precise definition), we can apply the same argument as in the proof of  \cite[Theorem~3.2]{nutz2015robust} and  \cite[Theorem~2.3]{NeufeldNutz.12}
	 and use Remark~\ref{rem:Nullsets} to obtain a  $\mathbb G$-adapted nonnegative process $(Y_t)_{0\leq t \leq T}$ with c\`adl\`ag sample paths which is a $\mathbb{Q}$-$\mathbb G$-supermartingale for every $\mathbb Q \in \mathcal M$ and satisfies that
	\begin{equation}\label{eq:RobOptDec-1}
	Y_0\leq \sup_{\mathbb{Q} \in \mathcal{M}}\E_{\Q}[X] \qquad \mbox{ as well as }  \qquad Y_T=X \qquad \Q\mbox{-a.s.\quad for all } \quad  \Q \in \mathcal{M}.
	\end{equation}
	Moreover, since the set $\cM$ is saturated (in the notion of \cite{nutz2015robust}, see also \cite[Lemma~4.2]{nutz2015robust}), the robust optional decomposition theorem (see \cite[Theorem~2.4]{nutz2015robust})
	ensures the existence of a $\mathbb G$-predictable process $H$ such that $H$ is $S$-integrable for all $\mathbb Q \in \mathcal M$ 
	and
	\begin{equation}\label{eq:RobOptDec-2}
	Y- (H \cdot S) \mbox{ is nonincreasing } \qquad \mathbb Q\mbox{-a.s. } \quad \mbox{ for all } \quad \Q \in \mathcal{M}.
	\end{equation}
	Combining this, \eqref{eq:RobOptDec-1}, and the fact that by Proposition~\ref{prop:D-C-Cb-polar} we know that $\mathcal{M}=\mathcal{D}$ implies that
	\begin{equation}\label{eq:RobOptDec-3}
1 + (H \cdot S)_T \geq Y_0 + (H \cdot S)_T \geq Y_T = X \quad \mathbb Q\mbox{-a.s. } \quad \mbox{ for all } \qquad \Q \in \mathcal{M}.
	\end{equation}
	Moreover, for any $\Q \in \cM$
	we use \eqref{eq:RobOptDec-1}, \eqref{eq:RobOptDec-2}, that $Y\geq 0$ is  a $\Q$-supermartingale, and that $\cM=\cD$ to see that 
	\begin{equation}
	(H \cdot S)_t\geq Y_t-Y_0\geq \E_{\Q}[X\,|\,\cG_t]-1\geq -1 \qquad \Q\mbox{-a.s. }
	\end{equation}
 for all $t\in [0,T]$. 
 Therefore, we conclude that  $(H \cdot S)\geq -1 \ \, \cM$-q.s., which by Lemma~\ref{le:Q-P} implies that $(H \cdot S)\geq -1 \ \, \cP$-q.s..
This	and Lemma~\ref{le:H-Q-P-integr}  ensure that $H \in \mathcal{H}$. Moreover, Lemma~\ref{le:Q-P}  and \eqref{eq:RobOptDec-3} assure that
		\begin{equation*}
	1 + (H \cdot S)_T \geq X \quad \mathbb P\mbox{-a.s. } \quad \mbox{ for all } \quad \P \in \mathcal{P},
	\end{equation*}
	which by definition shows that $X \in \cC$. As by assumption $X\in C_b^+$ we indeed get that $X \in \cC\cap  C_b$.
\end{proof}
%
%
Now we are able to finish the proof of Theorem~\ref{thm:main-app}.
\begin{proof}[Proof of Theorem~\ref{thm:main-app} and Theorem~\ref{thm:main-app-2}]
	We verify that Assumption~\ref{ass:no-arbitrage-style}, the bipolar relation of $\cC$ and $\cD$, and the convex-compactness assumption on $\cP$ and $\cD$ are satisfied. 
	
	To that end, note that Lemma~\ref{le:Q-P} ensures that Assumption~\ref{ass:no-arbitrage-style} hold.
	Moreover, observe that Proposition~\ref{prop:P-D-convex-compact} assures that $\cP$ and $\cD$ (with $\fP:=\fP_e(\cP)$) are both convex and compact (compare with Item~\eqref{thm:ass-1}). 
	In addition, we get from Proposition~\ref{prop:D-C-Cb-polar} and Proposition~\ref{prop:C-C-b-char} that the bipolar relation of $\cC$ and $\cD$ (compare with Items~\eqref{thm:ass-2}~\&~\eqref{thm:ass-3}) hold. Therefore, the result now follows directly from Theorem~\ref{thm:main} and Theorem~\ref{thm:main-2}, respectively.
\end{proof}
%
%
\subsection{Proof of  Corollary~\ref{co:main-app} and Corollary~\ref{co:main-app-2}}\label{subsec:proof-co}
The idea of the proof of Corollary~\ref{co:main-app} and Corollary~\ref{co:main-app-2}  is to verify in the setting of drift and volatility uncertainty introduced in Section~\ref{sec:application} that Assumptions~\ref{ass:u-finite}~\&~\ref{ass:u-bar-finite} and Assumptions~\ref{ass:U-I}~\&~\ref{ass:v-finite} hold for the specific utility functions. Then the result immediately follows from Theorem~\ref{thm:main-app} and Theorem~\ref{thm:main-app-2}.

First, note that 
every utility function  which is bounded from above automatically satisfies Assumptions~\ref{ass:u-finite}~\&~\ref{ass:u-bar-finite} as well as Assumptions~\ref{ass:U-I}~\&~\ref{ass:v-finite}; see also Remark~\ref{rem:ass:U-I} and Remark~\ref{rem:ass:U-I-2}. Therefore, we only have to focus on the utility functions $U(x):=\log(x)$ and $U(x):=\tfrac{x^p}{p}$, $p\in(0,1)$, to prove Corollary~\ref{co:main-app}
and Corollary~\ref{co:main-app-2}.

Moreover, observe that due to Assumption~\ref{ass:invertible}, we have that
\begin{equation}
\label{def:kappa}
\cK:=
 1 + \sup_{(b,c)\in \Theta} \big[\Vert b \Vert+ \Vert c \Vert + \Vert c^{-1} \Vert\big]
 <\infty.
\end{equation}
The following lemma will be used several times in this subsection.
\begin{lemma}\label{le:density-estimate}
Let Assumption~\ref{ass:invertible} hold. Then for every $\P \in  \cP$ there exists $\Q\in \fM_e(\cP)$ such that for every $\delta \in (0,\infty)$
\begin{equation}
\E_{\Q} \Big[\big(\tfrac{d\P}{d\Q}\big)^\delta\Big]<\infty.
\end{equation}
\end{lemma}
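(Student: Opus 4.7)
The plan is to take $\Q$ to be precisely the measure constructed in Lemma~\ref{le:Q-P}, exploit the explicit formula of its density as a stochastic exponential, and then bound $(d\P/d\Q)^\delta$ by a bounded deterministic factor times a positive local martingale under $\Q$.

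First, given $\P\in\cP$, write the canonical decomposition $S=S_0+M^\P+\int_0^\cdot b^\P_s\,ds$ under $\P$. By Lemma~\ref{le:Q-P} there is a $\Q\in\cM$ equivalent to $\P$ whose density process is
\begin{equation*}
Z_t=\cE\Big(\int_0^t -[(c^\P)^{-1}]_s b^\P_s\,dM^\P_s\Big),\qquad 0\le t\le T,
\end{equation*}
and by Proposition~\ref{prop:D-C-Cb-polar} this $\Q$ lies in $\fM_e(\cP)$. Hence $d\P/d\Q=1/Z_T$ and it suffices to bound $\E_\Q[Z_T^{-\delta}]$.

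Next, I would rewrite $Z_T^{-\delta}$ as a stochastic exponential under $\Q$ times a deterministic factor. Since $\Q\approx\P$ we have $c^\Q=c^\P$ (\,$\Q\otimes dt$-a.e.), and by Girsanov the process $M^\Q:=S-S_0$ is a continuous $\Q$-local martingale with $dM^\P_s=dM^\Q_s-b^\P_s\,ds$ under $\Q$. Substituting this into the explicit form of $\ln Z_T$ yields
\begin{equation*}
Z_T^{-\delta}=\cE\big(\delta H\cdot M^\Q\big)_T\cdot\exp\Big(\tfrac{\delta^2-\delta}{2}\int_0^T b^{\P,\top}_s[(c^\P)^{-1}]_s b^\P_s\,ds\Big),
\end{equation*}
where $H_s:=[(c^\P)^{-1}]_s b^\P_s$ and I used that $H_s^\top c^\P_s H_s=b^{\P,\top}_s[(c^\P)^{-1}]_s b^\P_s$.

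The uniform ellipticity in Assumption~\ref{ass:invertible} together with compactness of $\Theta$ (cf.\ \eqref{def:kappa}) gives $b^{\P,\top}_s[(c^\P)^{-1}]_s b^\P_s\le\|b^\P_s\|^2\cdot\|(c^\P)^{-1}_s\|\le\cK^3$ uniformly in $(s,\omega)$, so the deterministic factor above is bounded by $C_\delta:=\exp(\tfrac{|\delta^2-\delta|}{2}\cK^3 T)$. Since $\cE(\delta H\cdot M^\Q)$ is a positive $\Q$-local martingale and hence a $\Q$-supermartingale, one concludes
\begin{equation*}
\E_\Q\big[Z_T^{-\delta}\big]\le C_\delta\cdot\E_\Q\big[\cE(\delta H\cdot M^\Q)_T\big]\le C_\delta<\infty.
\end{equation*}

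The only mildly delicate point is verifying that $H$ is genuinely $M^\Q$-integrable so that $\cE(\delta H\cdot M^\Q)$ is well-defined as a supermartingale; this follows immediately from the boundedness of $b^\P$ and of $(c^\P)^{-1}$ (in fact Novikov's condition even upgrades it to a true martingale), so no further care is needed. The main conceptual step is the change-of-measure computation that turns $Z_T^{-\delta}$ into (bounded factor)$\times$(stochastic exponential under $\Q$); everything else is a direct application of the bounds coming from Assumption~\ref{ass:invertible}.
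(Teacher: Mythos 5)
Your proof is correct and follows essentially the same route as the paper: take the $\Q$ from Lemma~\ref{le:Q-P}, express $(d\P/d\Q)^\delta$ as a bounded deterministic exponential times a positive $\Q$-local martingale via the Girsanov rewrite $Z_T^{-1}=\cE(H\cdot M^\Q)_T$, and use the supermartingale property together with the uniform bound from Assumption~\ref{ass:invertible}. The only cosmetic difference is that the paper treats $\delta\in(0,1]$ separately via Jensen's inequality and only runs the exponential-martingale computation for $\delta>1$, whereas you cover all $\delta>0$ in one pass by allowing $|\delta^2-\delta|$ in $C_\delta$ — a mild streamlining of the same argument.
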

\begin{proof}
Let $\P \in \cP$. Jensen's inequality, Lemma~\ref{le:Q-P}, and Proposition~\ref{prop:D-C-Cb-polar} ensure that the statement holds for $\delta\in(0,1]$, hence we only need to focus on the case $\delta>1$. Note that from the proof of Lemma~\ref{le:Q-P}, see \eqref{eq:density-dQ-dP} and \eqref{eq:density-dP-dQ}, we know that
there
 exists $\Q\in \fM_e(\cP)$ such that
\begin{equation}\label{eq:dP-dQ-proof}
\tfrac{d\P}{d\Q}= \cE\Big(\int_0^T \big[(c^\P)^{-1}\big]_s b^\P_s\,dS\Big) \qquad \Q\mbox{-a.s.,}
\end{equation}
where $(b^\P,c^\P)$ denotes the differential characteristics of $S$ under $\P$. This and the fact that $c^\Q=c^\P \ \Q\otimes dt$-a.s. imply for every $\delta >1$ that
\begin{equation*}
\begin{split}
&\E_{\Q}\bigg[\cE\Big(\int_0^T \big[(c^\P)^{-1}\big]_s b^\P_s\,dS\Big)^\delta\bigg]\\
&=
\E_{\Q}\bigg[\exp\Big(\int_0^T \big[(c^\P)^{-1}\big]_s b^\P_s\,dS-\tfrac{1}{2} \int_0^T {\textstyle\sum\limits_{i,j=1}^d}b^{i,\P}_s\big[(c^\P)^{-1}\big]_s^{ij}b^{j,\P}_s\,ds\Big)^\delta\bigg]\\
&=
\E_{\Q}\Bigg[\exp\Big(\int_0^T \delta \big[(c^\P)^{-1}\big]_s b^\P_s\,dS-\tfrac{1}{2} \int_0^T {\textstyle\sum\limits_{i,j=1}^d}\delta^2 b^{i,\P}_s\big[(c^\P_s)^{-1}\big]^{ij}b^{j,\P}_s\,ds\Big)\\
& \quad \quad \quad \ \cdot
\exp\Big(\tfrac{1}{2}(\delta^2-\delta) \int_0^T {\textstyle\sum\limits_{i,j=1}^d} b^{i,\P}_s\big[(c^\P)^{-1}\big]_s^{ij}b^{j,\P}_s\,ds\Big)
\Bigg]\\
&=\E_{\Q}\bigg[\cE\Big(\int_0^T \delta\big[(c^\P)^{-1}\big]_s b^\P_s\,dS\Big)\, \exp\Big(\tfrac{1}{2}(\delta^2-\delta) \int_0^T {\textstyle\sum\limits_{i,j=1}^d} b^{i,\P}_s\big[(c^\P)^{-1}\big]_s^{ij}b^{j,\P}_s\,ds\Big)\bigg].
\end{split}
\end{equation*}
This, the fact that by Assumption~\ref{ass:invertible} we have \eqref{def:kappa}, and the fact that $S$ under $\Q$ is a (local-) martingale  show that for every $\delta >1$, we indeed have that
\begin{equation*}
\begin{split}
\E_{\Q}\bigg[\cE\Big(\int_0^T \big[(c^\P)^{-1}\big]_s b^\P_s\,dS\Big)^\delta\bigg]
& \leq 
\E_{\Q}\bigg[\cE\Big(\int_0^T \delta\big[(c^\P)^{-1}\big]_s b^\P_s\,dS\Big)\bigg]
\exp\Big(\tfrac{1}{2}(\delta^2-\delta) T d^2\cK^3\Big)\\
&=  \exp\Big(\tfrac{1}{2}(\delta^2-\delta) T d^2\cK^3\Big)\\
&< \infty.
\end{split}
\end{equation*}
\end{proof}
\begin{lemma}\label{le:cons-density-estimate}
	Let 
	Assumption~\ref{ass:invertible} hold, and let $x>0$ and $(g_n)_{n \in \N} \subseteq  \cC(x)$. 
	Then for every $\P \in \cP$ and every $\varepsilon\in (0,1)$ we have that
	\begin{equation*}
	\sup_{n \in \N} \E_\P\big[(g_n)^\varepsilon\big]<\infty.
	\end{equation*}
	\end{lemma}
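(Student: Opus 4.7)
The plan is to combine Hölder's inequality with the uniform integral bound coming from the polar relation and the moment estimate of Lemma~\ref{le:density-estimate}. The key observation is that, although $\P$ may be far from a martingale measure, we have at our disposal an equivalent measure $\Q$ under which the elements of $\cC(x)$ have bounded expectation.

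More precisely, I would fix $\P \in \cP$ and $\varepsilon \in (0,1)$, and set $q := 1/(1-\varepsilon) \in (1,\infty)$, the Hölder conjugate of $p := 1/\varepsilon$. Applying Lemma~\ref{le:density-estimate} to the chosen $\P$ gives $\Q \in \fM_e(\cP)$ with $\Q \approx \P$ such that
\begin{equation*}
M := \E_\Q\big[(\tfrac{d\P}{d\Q})^q\big] < \infty.
\end{equation*}
By Proposition~\ref{prop:D-C-Cb-polar} we have $\fM_e(\cP) = \cD$, so $\Q \in \cD$. Consequently, since $g_n \in \cC(x) = x\cC$, the defining property of $\cD$ yields $\E_\Q[g_n] \leq x$ for every $n \in \N$; in particular each $g_n$ is $\Q$-a.s.\ (hence $\P$-a.s.) finite.

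The remaining step is a one-line application of Hölder's inequality under $\Q$. Writing $\E_\P[(g_n)^\varepsilon] = \E_\Q[\tfrac{d\P}{d\Q}\,(g_n)^\varepsilon]$ and splitting with exponents $p$ and $q$, we obtain
\begin{equation*}
\E_\P[(g_n)^\varepsilon]
\leq \E_\Q\big[((g_n)^\varepsilon)^{1/\varepsilon}\big]^{\varepsilon}\, \E_\Q\big[(\tfrac{d\P}{d\Q})^q\big]^{1/q}
= \E_\Q[g_n]^{\varepsilon}\, M^{1-\varepsilon}
\leq x^{\varepsilon}\, M^{1-\varepsilon},
\end{equation*}
which is finite and independent of $n$. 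Taking the supremum over $n$ concludes the proof.

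Since Lemma~\ref{le:density-estimate} already does the real analytic work (controlling arbitrary positive moments of the density $\tfrac{d\P}{d\Q}$ via the uniform ellipticity in Assumption~\ref{ass:invertible}), there is no serious obstacle here; the only thing one must check carefully is that the measure $\Q$ supplied by that lemma is indeed equivalent to $\P$ (so that the change-of-measure identity $\E_\P[\,\cdot\,]=\E_\Q[\tfrac{d\P}{d\Q}\,\cdot\,]$ is valid) and lies in $\cD$ (so that $\E_\Q[g_n]\leq x$). Both are immediate: the construction of $\Q$ in the proof of Lemma~\ref{le:Q-P} via a Girsanov transformation produces $\Q\approx \P$, and $\fM_e(\cP)=\cD$ by Proposition~\ref{prop:D-C-Cb-polar}.
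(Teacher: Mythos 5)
Your proof is correct and follows essentially the same route as the paper's: fix $\P$, obtain $\Q\in\fM_e(\cP)=\cD$ from Lemma~\ref{le:density-estimate} and Proposition~\ref{prop:D-C-Cb-polar}, use $\E_\Q[g_n]\leq x$ from the polar relation, and apply H\"older with exponents $1/\varepsilon$ and $1/(1-\varepsilon)$ to the change-of-measure identity. The only cosmetic difference is the labelling of the H\"older exponents; the mathematics is identical.
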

\begin{proof}
	Fix $\varepsilon\in (0,1)$, $n \in \N$, and $\P \in \cP$.
	By Lemma~\ref{le:density-estimate}, there exists $\Q \in \fM_e(\cP)$ which satisfies for every $\delta\in (0,\infty)$ that $c(\delta):=\E_{\Q} \big[(\tfrac{d\P}{d\Q})^\delta\big]<\infty$. Therefore, H\"olders inequality (applied to $p:=\tfrac{1}{1-\varepsilon}$, $q:=\tfrac{1}{\varepsilon}$) and the fact that $\cD=\fM_e(\cP)$ (see Proposition~\ref{prop:D-C-Cb-polar}) ensure 
	 that indeed
	\begin{equation*}
	\begin{split}
	\E_\P\big[(g_n)^\varepsilon\big] \leq  \E_\Q\big[\tfrac{d\P}{d\Q}(g_n)^\varepsilon\big] 
	\leq 
	\E_\Q\big[\big(\tfrac{d\P}{d\Q}\big)^{\nicefrac{1}{(1-\varepsilon)}}\big]^{(1-\varepsilon)}
	\E_\Q\big[g_n\big]^{\varepsilon}
	\leq c(\nicefrac{1}{(1-\varepsilon)})^{1-\varepsilon} x^\varepsilon <\infty.
	\end{split}
	\end{equation*}
\end{proof}
	
\begin{lemma}\label{le:log-U-I}
Let 
Assumption~\ref{ass:invertible} hold, and let $U(x):=\log(x)$. Then for every $x>0$, $(g_n)_{n \in \N} \subseteq \cC(x) $ we have that the sequence of random variables
\begin{equation*}
\max\big\{\log(g_n+ \tfrac{1}{n}),0\big\}, \ n \in \N,
\end{equation*}
is $\P$-uniformly integrable for every $\P \in \cP$. 
\end{lemma}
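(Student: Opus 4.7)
My plan is to reduce the problem to a uniform $L^\varepsilon$-bound for $g_n$ itself, which is already available through Lemma~\ref{le:cons-density-estimate}, and then to conclude uniform integrability via the de la Vall\'ee-Poussin criterion.

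The first step is a pointwise reduction. For any $n \geq 1$ we have $1/n \leq 1$, and hence $g_n + 1/n \leq g_n + 1 \leq 2 \max\{g_n, 1\}$. Taking logarithms gives the inequality
\begin{equation*}
\max\big\{\log(g_n + \tfrac{1}{n}), 0\big\}
\leq \log 2 + \max\{\log g_n, 0\},
\end{equation*}
valid pointwise on $\Omega$. It therefore suffices to show that the sequence $\big(\max\{\log g_n, 0\}\big)_{n \in \N}$ is $\P$-uniformly integrable for every $\P \in \cP$.

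The second step is to apply the de la Vall\'ee-Poussin criterion with the convex, nondecreasing test function $\phi(t) := e^{\varepsilon t}$ for some fixed $\varepsilon \in (0,1)$, which satisfies $\phi(t)/t \to \infty$ as $t \to \infty$. For such $\varepsilon$,
\begin{equation*}
\phi\big(\max\{\log g_n, 0\}\big)
= \max\{g_n, 1\}^\varepsilon
\leq 1 + g_n^\varepsilon,
\end{equation*}
so that $\E_\P\big[\phi\big(\max\{\log g_n, 0\}\big)\big] \leq 1 + \E_\P[g_n^\varepsilon]$.

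The third step invokes Lemma~\ref{le:cons-density-estimate}, which under Assumption~\ref{ass:invertible} yields $\sup_{n \in \N} \E_\P[g_n^\varepsilon] < \infty$ for every $\P \in \cP$ and every $\varepsilon \in (0,1)$. Combining the three steps gives
\begin{equation*}
\sup_{n \in \N} \E_\P\big[\phi\big(\max\{\log g_n, 0\}\big)\big] < \infty,
\end{equation*}
which by de la Vall\'ee-Poussin's criterion proves the desired uniform integrability. There is no real obstacle here beyond picking the right envelope; the only delicate point is keeping $\varepsilon$ strictly less than $1$ so that Lemma~\ref{le:cons-density-estimate} applies, and choosing the exponential test function so that the logarithm is exactly absorbed.
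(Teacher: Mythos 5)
Your proof is correct and follows essentially the same strategy as the paper: apply the de la Vall\'ee-Poussin criterion with the test function $t\mapsto e^{\varepsilon t}$ for some $\varepsilon\in(0,1)$ and reduce to the uniform bound $\sup_n\E_\P[g_n^\varepsilon]<\infty$ supplied by Lemma~\ref{le:cons-density-estimate}. The only cosmetic difference is the order of reductions: you absorb the $\tfrac1n$-shift into a $\log 2$ constant before applying the test function, whereas the paper applies the test function first and then uses $(g_n+\tfrac1n)^\varepsilon\leq g_n^\varepsilon+(\tfrac1n)^\varepsilon$; both are equally valid.
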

\begin{proof}
		Fix $\P \in \cP$,  let $\varepsilon \in (0,1)$, and define the function $\Psi\colon [0,\infty) \to [0,\infty)$ by $\Psi(x)=\exp(\varepsilon x)$.
	Then, by the de la Vall\'ee-Poussin theorem, it suffices to show 
	that
	\begin{equation*}
	\sup_{n \in \N} \E_\P\Big[\Psi\Big(\max\big\{\log(g_n+ \tfrac{1}{n}),0\big\}\Big)\Big]<\infty.
	\end{equation*}
	Since $x \mapsto \Psi(x)=\exp(\varepsilon x)$ is increasing,
	we have for every $n\in \N$ that 
	\begin{equation*}
	\Psi\Big(\max\big\{\log(g_n+ \tfrac{1}{n}),0\big\}\Big)
	=\max\Big\{\Psi\big(\log(g_n+ \tfrac{1}{n})\big),\Psi(0)\Big\}
	=\max\Big\{(g_n+ \tfrac{1}{n})^\varepsilon,1\Big\},
	\end{equation*}
	hence it suffices to show 
	 that
	\begin{equation*}
	\sup_{n \in \N} \E_\P\big[(g_n+ \tfrac{1}{n})^\varepsilon\big]<\infty.
	\end{equation*}
	Therefore, as $(g_n+ \tfrac{1}{n})^\varepsilon \leq (g_n)^\varepsilon  + (\tfrac{1}{n})^\varepsilon$ for each $n \in \N$, it suffices to show 
	that
	\begin{equation}\label{eq:log-UI}
	\sup_{n \in \N} \E_\P\big[(g_n)^\varepsilon\big]<\infty.
	\end{equation}
	Lemma~\ref{le:cons-density-estimate}
	hence indeed implies that \eqref{eq:log-UI} holds.
\end{proof}
%
%
\begin{lemma}\label{le:power-U-I}
	Let 
	Assumption~\ref{ass:invertible} hold, and let $U(x):=\frac{x^p}{p}$ for some $p\in (0,1)$. Then for every $x>0$, $(g_n)_{n \in \N} \subseteq  \cC(x) $ we have that the sequence of random variables
	\begin{equation*}
	\frac{(g_n+ \frac{1}{n})^p}{p}, \ n \in \N,
	\end{equation*}
	is $\P$-uniformly integrable for every $\P \in \cP$. 
\end{lemma}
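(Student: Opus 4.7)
The plan is to mirror the argument of Lemma~\ref{le:log-U-I}, using the de la Vall\'ee-Poussin theorem together with Lemma~\ref{le:cons-density-estimate}. Since $p\in(0,1)$ and $U(x)=x^p/p\geq 0$ on $[0,\infty)$, the sequence consists of nonnegative random variables and the positive-part truncation in Assumption~\ref{ass:U-I} is not needed.

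First I would fix $\P\in\cP$ and pick $\delta>0$ small enough that $q:=p(1+\delta)\in(p,1)$. Define $\Psi\colon[0,\infty)\to[0,\infty)$ by $\Psi(t):=t^{1+\delta}$, which is nondecreasing and satisfies $\Psi(t)/t\to\infty$ as $t\to\infty$. By the de la Vall\'ee-Poussin theorem it suffices to show
\begin{equation*}
\sup_{n\in\N}\E_\P\bigg[\Psi\Big(\tfrac{(g_n+\tfrac{1}{n})^p}{p}\Big)\bigg]
=
\frac{1}{p^{1+\delta}}\sup_{n\in\N}\E_\P\big[(g_n+\tfrac{1}{n})^{q}\big]<\infty.
\end{equation*}

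Second, because $q\in(0,1)$, the function $x\mapsto x^q$ is subadditive on $[0,\infty)$, so for each $n\in\N$,
\begin{equation*}
(g_n+\tfrac{1}{n})^{q}\leq (g_n)^{q}+\big(\tfrac{1}{n}\big)^{q}\leq (g_n)^{q}+1.
\end{equation*}
Hence it is enough to establish $\sup_{n}\E_\P[(g_n)^{q}]<\infty$.

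Finally, since $q\in(0,1)$ and $(g_n)_{n\in\N}\subseteq\cC(x)$, Lemma~\ref{le:cons-density-estimate} applied with $\varepsilon:=q$ yields exactly this bound, finishing the proof. The only ``choice'' is picking $\delta$ so that $p(1+\delta)<1$, which is possible precisely because $p<1$; this is the sole place where the restriction $p\in(0,1)$ (as opposed to $p\geq 1$) is used, so I don't anticipate any real obstacle.
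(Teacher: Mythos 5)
Your proof is correct and follows essentially the same route as the paper: you use de la Vall\'ee--Poussin with a power-type test function $\Psi$ (your $\Psi(t)=t^{1+\delta}$ is just the paper's $\Psi(x)=x^{\varepsilon/p}$ with $\varepsilon=p(1+\delta)\in(p,1)$), reduce via subadditivity of $x\mapsto x^q$ to bounding $\sup_n\E_\P[(g_n)^q]$, and conclude from Lemma~\ref{le:cons-density-estimate}. The only minor difference is that you spell out the subadditivity step that the paper leaves implicit.
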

\begin{proof}
		Fix $\P \in \cP$,  let $\varepsilon \in (p,1)$, and define the function $\Psi\colon [0,\infty) \to [0,\infty)$ by $\Psi(x)=x^{\nicefrac{\varepsilon}{p}}$.
Then, by the de la Vall\'ee-Poussin theorem, it suffices to show 
that
\begin{equation*}
\sup_{n \in \N} \E_\P\Big[\Psi\Big(\tfrac{(g_n+ \frac{1}{n})^p}{p}\Big)\Big]<\infty.
\end{equation*}
To see this, since $\Psi(x)=x^{\nicefrac{\varepsilon}{p}}$, it suffices to show 
that
\begin{equation*}\label{eq:power-UI}
\sup_{n \in \N} \E_\P\big[(g_n)^\varepsilon\big]<\infty,
\end{equation*}
which follows directly from Lemma~\ref{le:cons-density-estimate}.
\end{proof}
Now, recall that
\begin{equation*}
\begin{split}
V_1(y):=\sup_{x\geq 0}  \big[U_1(x)-xy\big], \quad y>0, 
\end{split}
\end{equation*}
where $U_1(x):=U(x + 1)$, $x\geq 0$. Then we have the following.
\begin{lemma}\label{le:conjugate-log-power}
For each $p\in (0,1)$, $y>0$ we have that 
\begin{equation}\label{eq:conjugate-log-power-1}
\begin{split}
V_{1,\log}(y)&:=\sup_{x\geq 0}\big[\log(x+1)-xy\big]\leq \log\big(\tfrac{1}{y})-1+y,\\
%
%
V_{1,p}(y)&:=\sup_{x\geq 0}\big[\tfrac{(x+1)^p}{p}-xy\big] 
\leq (\tfrac{1}{p}-1\big) \big(\tfrac{1}{y}\big)^{\frac{p}{1-p}}+y.
\end{split}
\end{equation}
\end{lemma}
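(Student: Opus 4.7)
The proof is a direct calculus exercise. My plan is to reduce each bound to the well-known unconstrained Legendre transform of $\log$ and of $z \mapsto z^p/p$, using the substitution $z = x+1$ and dropping the constraint $z \geq 1$ to get an inequality.

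For the logarithmic case, I would substitute $z = x+1$ to rewrite
\begin{equation*}
V_{1,\log}(y) = \sup_{x\geq 0}\big[\log(x+1)-xy\big] = \sup_{z\geq 1}\big[\log(z)-(z-1)y\big] = y + \sup_{z\geq 1}\big[\log(z)-zy\big].
\end{equation*}
Then I would enlarge the feasible set from $z\geq 1$ to $z>0$ to obtain the upper bound
\begin{equation*}
V_{1,\log}(y) \leq y + \sup_{z>0}\big[\log(z)-zy\big].
\end{equation*}
The unconstrained supremum is attained at $z^* = 1/y$ (set derivative $1/z - y$ to zero) with value $\log(1/y) - 1$, which gives exactly the claimed bound.

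For the power case, the same substitution gives
\begin{equation*}
V_{1,p}(y) = y + \sup_{z\geq 1}\big[z^p/p - zy\big] \leq y + \sup_{z>0}\big[z^p/p - zy\big].
\end{equation*}
The unconstrained supremum is attained at the critical point $z^* = y^{1/(p-1)}$ of $z^{p-1} - y = 0$, which, since $p\in(0,1)$, lies in $(0,\infty)$. Evaluating yields
\begin{equation*}
\frac{(z^*)^p}{p} - z^* y = \frac{y^{p/(p-1)}}{p} - y^{p/(p-1)} = \Big(\tfrac{1}{p} - 1\Big)\,y^{p/(p-1)} = \Big(\tfrac{1}{p} - 1\Big)\Big(\tfrac{1}{y}\Big)^{p/(1-p)},
\end{equation*}
using $p/(p-1) = -p/(1-p)$ in the last equality.

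There is no real obstacle here; the only thing requiring care is the exponent manipulation $p/(p-1) = -p/(1-p)$ and the observation that $1/p - 1 > 0$ when $p \in (0,1)$, which ensures the right-hand side is nonnegative and the inequality is sharp (equality holds whenever $z^* \geq 1$, i.e.\ $y \leq 1$ in the log case and $y \leq 1$ in the power case; otherwise the bound is strict but still valid because we only claimed an inequality).
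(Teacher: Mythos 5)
Your proof is correct and takes essentially the same approach as the paper: the substitution $z=x+1$ is cosmetic, and in both arguments the key step is to relax the constraint (the paper allows $x\geq-1$, you allow $z>0$) so that the unconstrained first-order condition yields the stated closed-form bound.
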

\begin{proof}
We start for the $\log$-case.
To that end, for every $y>0$, let $\widehat{x}_{1,\log}(y):=\tfrac{1}{y}-1$.
Then one sees, using the first-order condition,
that for every $y>0$,
\begin{equation*}
V_{1,\log}(y)
\leq \sup_{x\geq -1}\big[\log(x+1)-xy\big]
=\big[\log\big(\widehat{x}_{1,\log}(y)+1\big)-\widehat{x}_{1,\log}(y)y\big]
=
\log\big(\tfrac{1}{y})-1+y.
\end{equation*}
To see the result for the power-case, we set for every $y>0$ that $\widehat{x}_{1,p}(y):=y^{\nicefrac{1}{(p-1)}}-1$. Then, using the first-order condition, we get for every $y>0$ that
\begin{equation*}
V_{1,p}(y)
\leq
 \sup_{x\geq -1}\big[\tfrac{(x+1)^p}{p}-xy\big] 
 = \Big[\tfrac{\big(\widehat{x}_{1,p}(y)+1\big)^p}{p}-\widehat{x}_{1,p}(y)y\Big]
 = (\tfrac{1}{p}-1\big) \big(\tfrac{1}{y}\big)^{\frac{p}{1-p}}+y.
\end{equation*}
\end{proof}
%
%
%
%
%
\begin{lemma}\label{le:log-V}
	Let Assumption~\ref{ass:invertible} hold. Then, for every $y>0$ and every $\P \in \cP$ there exists $\Q \in \cD$ such that
	\begin{equation*}
\E_\P\big[\!\max\!\big\{V_{1,\log}(y\tfrac{d\Q}{d\P}),0\big\}\big]
<\infty.
\end{equation*}
\end{lemma}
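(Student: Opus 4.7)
The plan is to combine the explicit upper bound for $V_{1,\log}$ from Lemma~\ref{le:conjugate-log-power} with the moment estimate for the density $d\P/d\Q$ obtained in Lemma~\ref{le:density-estimate}. The measure $\Q$ produced by Lemma~\ref{le:density-estimate} will play the role of the candidate in $\cD = \fM_e(\cP)$ (using the identification from Proposition~\ref{prop:D-C-Cb-polar}).

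Fix $y > 0$ and $\P \in \cP$. Using Proposition~\ref{prop:D-C-Cb-polar} to identify $\cD = \fM_e(\cP)$, I would invoke Lemma~\ref{le:density-estimate} to select $\Q \in \cD$ with $\Q \approx \P$ and satisfying $\E_\Q[(d\P/d\Q)^\delta] < \infty$ for every $\delta \in (0,\infty)$. Writing $Z := d\Q/d\P$ (which is strictly positive $\P$-a.s.\ since $\Q \approx \P$), Lemma~\ref{le:conjugate-log-power} gives pointwise
\[
V_{1,\log}(yZ) \;\leq\; \log(1/(yZ)) - 1 + yZ \;=\; -\log(y) - \log(Z) - 1 + yZ,
\]
so that
\[
\max\{V_{1,\log}(yZ),\,0\} \;\leq\; \max\{-\log(y)-1,\,0\} + (\log Z)^- + yZ.
\]

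The first summand is a finite constant. For the last one, $\E_\P[yZ] = y\,\E_\P[d\Q/d\P] = y < \infty$ since $\Q \ll \P$. The real work is in controlling the middle term $\E_\P[(\log Z)^-]$. Note that $(\log Z)^- = \max\{\log(d\P/d\Q),\,0\}$, so by changing measure,
\[
\E_\P[(\log Z)^-] \;=\; \E_\Q\!\left[\tfrac{d\P}{d\Q}\,\max\!\big\{\log\tfrac{d\P}{d\Q},0\big\}\right].
\]
The elementary inequality $x\log x \leq x^2$ for $x \geq 1$ (and the product is zero when $d\P/d\Q < 1$) yields
\[
\tfrac{d\P}{d\Q}\,\max\!\big\{\log\tfrac{d\P}{d\Q},\,0\big\} \;\leq\; \left(\tfrac{d\P}{d\Q}\right)^{\!2},
\]
whose $\Q$-expectation is finite by the choice of $\Q$ (with $\delta = 2$). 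Combining the three bounds gives $\E_\P[\max\{V_{1,\log}(yZ),\,0\}] < \infty$, as required.

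The only subtle point is the selection of $\Q$: a generic separating/martingale measure need not have $\log(d\Q/d\P)$ with integrable negative part under $\P$, so one must not pick $\Q$ by some structural recipe but rather by the uniform ellipticity-based construction behind Lemma~\ref{le:density-estimate}, which is what guarantees the required moment control on $d\P/d\Q$.
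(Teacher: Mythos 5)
Your proof is correct and takes essentially the same approach as the paper: select $\Q$ via Lemma~\ref{le:density-estimate} (identifying $\cD=\fM_e(\cP)$ through Proposition~\ref{prop:D-C-Cb-polar}), apply the explicit bound from Lemma~\ref{le:conjugate-log-power}, split into three terms, and control $\E_\P[\max\{\log(d\P/d\Q),0\}]$ by a change of measure plus an elementary pointwise bound. The only cosmetic difference is that you use $x\log x\le x^2$ whereas the paper uses $\log x\le 1+x$; both reduce to the same moment condition $\E_\Q[(d\P/d\Q)^2]<\infty$.
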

\begin{proof}
Let $y>0$ and $\P \in \cP$. By Proposition~\ref{prop:D-C-Cb-polar}, we know that $\cD=\fM_e(\cP)$. Moreover,
by Lemma~\ref{le:density-estimate}, there exists $\Q \in \fM_e(\cP)$ which satisfies for every $\delta\in (0,\infty)$ that $c(\delta):=\E_{\Q} \big[(\tfrac{d\P}{d\Q})^\delta\big]<\infty$.
This and Lemma~\ref{le:conjugate-log-power} imply that
\begin{equation*}
\begin{split}
\E_\P\big[\!\max\!\big\{V_{1,\log}(y\tfrac{d\Q}{d\P}),0\big\}\big]
&=
\E_\P\Big[\!\max\!\big\{\log\big(\tfrac{1}{y}\tfrac{d\P}{d\Q}\big)-1+y\tfrac{d\Q}{d\P},0\big\}\Big]
\\
& \leq \max\!\big\{\log\big(\tfrac{1}{y}),0\big\} 
+ \E_\P\Big[\!\max\!\big\{\log\big(\tfrac{d\P}{d\Q}\big),0\big\}\Big]
+ y\E_{\P}\big[\tfrac{d\Q}{d\P}\big].
\end{split}
\end{equation*}
Since $\E_{\P}\big[\tfrac{d\Q}{d\P}\big]=1$, 
it  hence suffices to show that $\E_\P\big[\!\max\!\big\{\log\big(\tfrac{d\P}{d\Q}\big),0\big\}\big]<\infty$. To see this, note that the fact that $\log(x)\leq 1 +x $ for all $x\geq 0$ and Lemma~\ref{le:density-estimate} indeed ensure that
\begin{equation*}
\begin{split}
\E_\P\Big[\!\max\!\big\{\log\big(\tfrac{d\P}{d\Q}\big),0\big\}\Big]
&= \E_\Q\Big[\tfrac{d\P}{d\Q} \max\!\big\{\log\big(\tfrac{d\P}{d\Q}\big),0\big\}\Big]
\leq 
\E_\Q\Big[\tfrac{d\P}{d\Q} \big(1+\tfrac{d\P}{d\Q}\big)\Big]
<\infty.
\end{split}
\end{equation*}
\end{proof}
\begin{lemma}\label{le:power-V}
	Let Assumption~\ref{ass:invertible} hold. Then, for every $y>0$ and every $\P \in \cP$ there exists $\Q \in \cD$ such that
	\begin{equation*}
	\E_\P\big[\!\max\!\big\{V_{1,p}(y\tfrac{d\Q}{d\P}),0\big\}\big]
	<\infty.
	\end{equation*}
\end{lemma}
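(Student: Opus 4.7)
The plan is to mirror the proof of Lemma~\ref{le:log-V}, but using the power-case bound from Lemma~\ref{le:conjugate-log-power} in place of the log-case bound. Fix $y>0$ and $\P \in \cP$. By Lemma~\ref{le:density-estimate}, together with the identification $\cD = \fM_e(\cP)$ from Proposition~\ref{prop:D-C-Cb-polar}, I can pick $\Q \in \cD$ with $\Q \sim \P$ such that $c(\delta) := \E_\Q\big[(\tfrac{d\P}{d\Q})^\delta\big] < \infty$ for every $\delta > 0$. Since $\Q\sim\P$, the Radon--Nikodym derivatives $\tfrac{d\Q}{d\P}$ and $\tfrac{d\P}{d\Q}$ are both strictly positive $\P$-a.s.\ and $\Q$-a.s., so all terms below make sense.

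Write $z := \tfrac{d\Q}{d\P}$. Lemma~\ref{le:conjugate-log-power} gives the pointwise bound
\begin{equation*}
V_{1,p}(y z) \leq \big(\tfrac{1}{p}-1\big) (yz)^{-p/(1-p)} + y z \quad \P\text{-a.s.}
\end{equation*}
Since $p \in (0,1)$, we have $\tfrac{1}{p}-1 > 0$ and $yz \geq 0$, so the right-hand side is itself nonnegative and therefore dominates $\max\{V_{1,p}(yz),0\}$ as well.

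Taking $\P$-expectation, the second term contributes $y\,\E_\P[z] = y$ (because $\Q \ll \P$). For the first term, a change of measure yields
\begin{equation*}
\E_\P\big[z^{-p/(1-p)}\big]
= \E_\P\Big[\big(\tfrac{d\P}{d\Q}\big)^{p/(1-p)}\Big]
= \E_\Q\Big[\tfrac{d\P}{d\Q}\big(\tfrac{d\P}{d\Q}\big)^{p/(1-p)}\Big]
= \E_\Q\Big[\big(\tfrac{d\P}{d\Q}\big)^{1/(1-p)}\Big]
= c\big(\tfrac{1}{1-p}\big) < \infty,
\end{equation*}
since $\tfrac{1}{1-p} \in (1,\infty)$. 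Combining, $\E_\P\big[\max\{V_{1,p}(y\tfrac{d\Q}{d\P}),0\}\big] \leq \big(\tfrac{1}{p}-1\big) y^{-p/(1-p)} c\big(\tfrac{1}{1-p}\big) + y < \infty$, as required.

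There is no real obstacle: once Lemma~\ref{le:density-estimate} supplies a reference measure $\Q \sim \P$ with finite moments of every order of $\tfrac{d\P}{d\Q}$, and Lemma~\ref{le:conjugate-log-power} controls $V_{1,p}$ by a polynomial of the density, the conclusion follows from a single Hölder-type change of measure. The only point meriting care is checking the sign of $\tfrac{1}{p}-1$ and the positivity of the exponents $p/(1-p)$ and $1/(1-p)$ for $p \in (0,1)$, which ensure that the majorant is nonnegative and that Lemma~\ref{le:density-estimate} can be applied with an admissible $\delta$.
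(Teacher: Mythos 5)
Your proof is correct and follows essentially the same route as the paper's: pick $\Q \in \cD = \fM_e(\cP)$ via Lemma~\ref{le:density-estimate} so that all $\Q$-moments of $\tfrac{d\P}{d\Q}$ are finite, bound $V_{1,p}$ by the polynomial majorant from Lemma~\ref{le:conjugate-log-power}, and conclude by a single change of measure using $1 + \tfrac{p}{1-p} = \tfrac{1}{1-p}$. If anything, your version is slightly more careful than the paper's, which writes the estimate as an equality rather than the inequality $\leq$ that the majorant actually gives.
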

\begin{proof}
Let $y>0$ and $\P \in \cP$. By Proposition~\ref{prop:D-C-Cb-polar}, we know that $\cD=\fM_e(\cP)$. Moreover,
by Lemma~\ref{le:density-estimate}, there exists $\Q \in \fM_e(\cP)$ which satisfies for every $\delta\in (0,\infty)$ that $c(\delta):=\E_{\Q} \big[(\tfrac{d\P}{d\Q})^\delta\big]<\infty$.
This, the fact that $\tfrac{d\Q}{d\P} \in \mathcal{L}^1(\P)$, and Lemma~\ref{le:conjugate-log-power} imply that
\begin{equation*}
\begin{split}
\E_\P\big[\!\max\!\big\{V_{1,p}(y\tfrac{d\Q}{d\P}),0\big\}\big]
&=
(\tfrac{1}{p}-1\big)\big(\tfrac{1}{y}\big)^{\frac{p}{1-p}}\,\E_\P\Big[\big(\tfrac{d\P}{d\Q}\big)^{\frac{p}{1-p}}\Big]
+y\E_{\P}\big[\tfrac{d\Q}{d\P}\big]\\
&=
(\tfrac{1}{p}-1\big)\big(\tfrac{1}{y}\big)^{\frac{p}{1-p}}\,\E_\Q\Big[\big(\tfrac{d\P}{d\Q}\big)^{1+\frac{p}{1-p}}\Big]
+ y\E_{\P}\big[\tfrac{d\Q}{d\P}\big]
<\infty.
\end{split}
\end{equation*}
\end{proof}

\begin{lemma}\label{le:u-finite}
	 Let the utility function $U$ be either $U(x):=\log(x)$ or $U(x):=\tfrac{x^p}{p}$ for some $p\in (0,1)$, and let Assumption~\ref{ass:invertible} hold.
	Then for every $x>0$ we have that $u(x)<\infty$.
	If in addition Assumption~\ref{ass:limmed} holds, then we also have that $\ov u(x)<\infty$.
\end{lemma}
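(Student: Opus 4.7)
The plan is to prove the first assertion ($u(x)<\infty$) by a direct estimate that reduces to Lemma~\ref{le:cons-density-estimate}, and the second assertion ($\ov u(x)<\infty$) by weak duality after verifying $v(y)<\infty$ for every $y>0$.

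For $u(x)<\infty$, fix an arbitrary $\P_0\in\cP$; then $u(x)\le\sup_{g\in\cC(x)}\E_{\P_0}[U(g)]$. In the power case $U(z)=z^p/p$ with $p\in(0,1)$, Lemma~\ref{le:cons-density-estimate} applied with $\varepsilon=p$ yields a constant $C$ such that $\sup_{g\in\cC(x)}\E_{\P_0}[g^p]\le C$, whence $u(x)\le C/p<\infty$. In the logarithmic case, fix any $\varepsilon\in(0,1)$ and use the elementary inequality $\log z\le z^\varepsilon/\varepsilon$ (valid for all $z>0$) together with Lemma~\ref{le:cons-density-estimate} to conclude $u(x)\le C/\varepsilon<\infty$.

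For $\ov u(x)<\infty$ (under Assumption~\ref{ass:limmed}), the plan is to invoke the weak duality $\ov u(x)\le v(y)+xy$ for any $y>0$. This inequality is obtained from the Young-type inequality $V(y\,d\Q/d\P)\ge U(g)-g\,y\,d\Q/d\P$ after taking $\E_\P$, combined with the bound $\E_\Q[g]\le x$ for every $g=\mathop{\mathrm{lim\,med}}_{n\to\infty} g_n\in\ov{\cC}(x)$ and every $\Q\in\cD$. The latter follows from the reverse-Fatou property $\E_\Q[\mathop{\mathrm{lim\,med}}_{n\to\infty} g_n]\le\mathop{\mathrm{lim\,med}}_{n\to\infty}\E_\Q[g_n]$ for nonnegative sequences, which is a direct consequence of combining monotone convergence with the bounded-sequence identity from Assumption~\ref{ass:limmed}, exactly as carried out in the proof of Theorem~\ref{thm:main-2}.

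It remains to verify $v(y)<\infty$ for every $y>0$. By Proposition~\ref{prop:D-C-Cb-polar} we have $\cD=\fM_e(\cP)$, so by Lemma~\ref{le:density-estimate}, for any fixed $\P_0\in\cP$ there exists $\Q_0\in\cD$ with $\E_{\Q_0}[(d\P_0/d\Q_0)^\delta]<\infty$ for every $\delta>0$. In the power case $V(y)=\tfrac{1-p}{p}\,y^{-p/(1-p)}$, so
\[v(y)\le\E_{\P_0}\!\bigl[V(y\,d\Q_0/d\P_0)\bigr]=\tfrac{1-p}{p}\,y^{-p/(1-p)}\,\E_{\Q_0}\!\bigl[(d\P_0/d\Q_0)^{1/(1-p)}\bigr]<\infty.\]
In the logarithmic case $V(y)=-\log y-1$, and the elementary bounds $\log^+ z\le z$ and $\log^- z\le 1/z$ yield $\E_{\P_0}[\log^+(d\P_0/d\Q_0)]\le\E_{\Q_0}[(d\P_0/d\Q_0)^2]<\infty$ and $\E_{\P_0}[\log^-(d\P_0/d\Q_0)]\le\E_{\P_0}[d\Q_0/d\P_0]=1$, whence $v(y)<\infty$. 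The only non-routine ingredient in the whole argument is the reverse-Fatou property for medial limits used in the weak-duality step for $\ov u$, but this is already employed in the proof of Theorem~\ref{thm:main-2}.
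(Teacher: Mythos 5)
Your proof is correct, and it takes a genuinely different route from the paper's for the first claim. The paper establishes $u(x)<\infty$ via weak duality: it writes $u(x)\le v(y)+xy$ and then controls $v(y)$ by observing $V\le V_1$ and invoking Lemmas~\ref{le:log-V}~\&~\ref{le:power-V}, which bound $\E_\P[\max\{V_1(y\,d\Q_\P/d\P),0\}]$ using the density estimate. Your argument instead bounds $u(x)$ directly by fixing a reference $\P_0\in\cP$ and estimating $\sup_{g\in\cC(x)}\E_{\P_0}[U(g)]$ through Lemma~\ref{le:cons-density-estimate} and, in the log case, the elementary inequality $\log z\le z^\varepsilon/\varepsilon$. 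This bypasses both the weak duality step and the auxiliary conjugate $V_1$, and is arguably cleaner; it cannot, however, be extended straightforwardly to $\ov u$ since $g\mapsto g^\varepsilon$ is concave and Jensen for medial limits goes the wrong way. For the second claim, your route coincides with the paper's in using the weak duality $\ov u(x)\le v(y)+xy$ (derived via Fatou for medial limits), but you verify $v(y)<\infty$ directly from the explicit formulas $V(y)=\tfrac{1-p}{p}y^{-p/(1-p)}$ and $V(y)=-\log y-1$ together with Lemma~\ref{le:density-estimate}, rather than through the $V_1$-bounds of Lemmas~\ref{le:log-V}~\&~\ref{le:power-V}. Both approaches ultimately rest on the same key input, the moment bound $\E_{\Q}[(d\P/d\Q)^\delta]<\infty$ of Lemma~\ref{le:density-estimate}; the paper's unified weak-duality treatment has the organisational advantage that the $V_1$-machinery is reused when verifying Assumption~\ref{ass:v-finite}, whereas your split treatment is more economical for the finiteness claims alone.
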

\begin{proof}
By Lemma~\ref{le:density-estimate}, we know that for every $\P$ there exists $\Q_\P \in \fM_e(\cP)$ which satisfies for every $\delta\in (0,\infty)$ that $c(\delta):=\E_{\Q_\P} \big[(\tfrac{d\P}{d\Q_\P})^\delta\big]<\infty$.
By the weak duality, see also \eqref{eq:weak-duality-U-V}--\eqref{eq:weak-duality-U-V-2}, the fact that $V(y)\leq V_1(y)$ for every $y\geq 0$, and Lemma~\ref{le:log-V}~\&~\ref{le:power-V} we see  that indeed
for every $x>0$, $y>0$,
\begin{equation*}
u(x)\leq v(y)+ xy \leq \inf_{\P\in \cP}\E_\P\big[\!\max\!\big\{V_1(y\tfrac{d\Q_\P}{d\P}),0\big\}\big]+xy <\infty.
\end{equation*}
If Assumption~\ref{ass:limmed} holds in addition, then the same arguments, together with the weak duality with respect to $\ov u$ derived in \eqref{eq:weak-duality-U-V-1}, show that also $\ov u(x)<\infty$ for all $x>0$.
\end{proof}
We are now able to provide the proof of Corollary~\ref{co:main-app} and Corollary~\ref{co:main-app-2}.
\begin{proof}[Proof of Corollary~\ref{co:main-app} and Corollary~\ref{co:main-app-2}]
We verify that the
Assumptions~\ref{ass:u-finite}~\&~\ref{ass:u-bar-finite} and Assumptions~\ref{ass:U-I}~\&~\ref{ass:v-finite}  hold for the specific utility functions. 

First, note that the specific 
 utility functions $U\colon[0,\infty) \to [-\infty,\infty)$ defined by $U(x)=\tfrac{x^p}{p}$, $p\in(0,1)$, and  $U(x)=-e^{-\lambda x}$, $\lambda>0$, satisfy Assumption~\ref{ass:U-1}, whereas the utility functions
$U(x):=\log(x)$ and $U(x):=\tfrac{x^p}{p}$, $p\in (-\infty,0)$, satisfy Assumption~\ref{ass:U-2}. Moreover, note that every utility function  which is bounded from above automatically satisfies the Assumptions~\ref{ass:u-finite}~\&~\ref{ass:u-bar-finite} and Assumptions~\ref{ass:U-I}~\&~\ref{ass:v-finite}; see also Remark~\ref{rem:ass:U-I} and Remark~\ref{rem:ass:U-I-2}. 
Therefore, we only need to show that the Assumptions~\ref{ass:u-finite}~\&~\ref{ass:u-bar-finite} and Assumptions~\ref{ass:U-I}~\&~\ref{ass:v-finite}  hold for the utility functions $U(x):=\log(x)$ and $U(x)=\tfrac{x^p}{p}$, $p\in(0,1)$.

To that end, note that for these utility functions, Lemma~\ref{le:u-finite} guarantees that Assumptions~\ref{ass:u-finite}~\&~\ref{ass:u-bar-finite} hold,  Lemma~\ref{le:log-V}~\&~\ref{le:power-V} show that Assumption~\ref{ass:v-finite} hold, whereas Lemmas~\ref{le:log-U-I}~\&~\ref{le:power-U-I} ensure that Assumption~\ref{ass:U-I} hold. Therefore, Corollary~\ref{co:main-app} now directly follows from Theorem~\ref{thm:main-app} together with the fact that  Assumption~\ref{ass:v-finite} implies that $v(y)<\infty$ for every $y>0$, 
whereas Corollary~\ref{co:main-app-2} now directly follows from Theorem~\ref{thm:main-app-2}.
\end{proof}
\appendix
\section{Appendix: Continuous separation}\label{sec:appendix}
\setcounter{theorem}{0}
\setcounter{equation}{0}

%
%
Throughout this section, we  will work in the framework of Section~\ref{sec:application}.
We recall that $\Omega:=C([0,T],\R^d)$ is endowed with its Borel $\sigma$-field $\cF$. Moreover, we let  $(S_t)_{0\leq t \leq T}$ be the canonical process. In addition, we let $\F$ be the raw filtration generated the canonical process and $\F_+$ denotes the corresponding right-continuous version of $\F$.

We define the set  $\mathcal{H}_{s,d}(\F_+)$ of all $d$-dimensional ${\F}_+$-simple processes $H\colon\Omega\times[0,T]\to \R^d$ of the form 
$H_t(\omega):=\sum_{\ell=1}^L h_\ell(\omega)\mathbf{1}_{(\tau_{\ell},\tau_{\ell+1}]}(t)$ for $(\omega,t)\in \Omega\times [0,T]$, where $L \in \N$, 0 $\leq \tau_1\leq \dots \leq \tau_{L+1}\leq T$ are ${\mathbb{F}}_+$-stopping times, and
$h_\ell:=(h_\ell^{(1)},\dots,h_\ell^{(d)})\colon \Omega \to \R^d$ is bounded and $\cF_{\tau_\ell+}$-measurable. 
Furthermore we define for every $m \in \N$ the set
\begin{equation*}
\mathcal{H}_{s,d,m}(\F_+):=\Big\{H \in \mathcal{H}_{s,d}(\F_+) \colon (H\cdot S) \geq -m \mbox{ pointwise on } \Omega \times [0,T] \Big\}.
\end{equation*}
\begin{remark}\label{rem:Ansel-Stricker-simple}
	Recall  the filtration $\mathbb{G}$ and  the set of strategies $\mathcal{H}$ introduced in Section~\ref{sec:application}.
Then, the fact that ${\F}_+\subseteq \mathbb G$
 immediately implies that
$\mathcal{H}_{s,d,m}({\F}_+)\subseteq \mathcal{H}$ for each $m \in \N$. 
\end{remark}
The following result slightly extends \cite[Proposition~5.5]{predictionSet} and \cite[Proposition~4.4]{bartl2019duality}.
\begin{proposition}\label{prop:separation-app}
Consider the set
	\begin{equation}\label{def:Gamma-app}
	\Gamma_d:=\Big\{ \gamma \in C_b(\Omega)\colon \mbox{ there exists } m \in \N \mbox{ and }  H \in \mathcal{H}_{s,d,m}(\F_+) \mbox{ such that } \gamma \leq (H \cdot S)_T\Big\}.
	\end{equation}
	Moreover, let $\Q \in \fP(\Omega)$ such that $\E_\Q[\gamma]\leq 0$ holds for all $\gamma \in \Gamma_d$. Then  $S$ is a $\Q$-$\mathbb{F}$-local martingale.
\end{proposition}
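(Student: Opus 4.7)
The strategy is to mimic the continuous-path arguments of \cite[Proposition~5.5]{predictionSet} and \cite[Proposition~4.4]{bartl2019duality}: show that the separating condition on $\Q$ against $\Gamma_d$ forces $S$ to be a $\Q$-local martingale by testing $\Q$ against a sufficiently rich class of simple strategies whose stochastic integrals are globally bounded.

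First I would localize. Since $S$ has continuous paths, the times $\rho_n:=\inf\{u\in[0,T]:\|S_u\|\geq n\}\wedge T$ are $\F_+$-stopping times with $\rho_n\uparrow T$ pointwise, and it suffices to show that each stopped process $S^{\rho_n}$ is a $\Q$-$\F_+$-martingale; the passage to the raw filtration $\F$ is then automatic for continuous adapted processes (see Remark~\ref{rem:filtration}). To obtain the martingale property of $S^{\rho_n}$, fix a coordinate $i\in\{1,\dots,d\}$, times $0\leq s<t\leq T$, and a bounded continuous $\cF_{s+}$-measurable $\phi\colon\Omega\to\R$. Consider the simple strategy $H:=\phi\,e_i\,\mathbf{1}_{(s\wedge\rho_n,\,t\wedge\rho_n]}\in\mathcal{H}_{s,d}(\F_+)$. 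The bound $\|S^{\rho_n}\|_\infty\leq n$ gives $|(H\cdot S)|\leq 2n\|\phi\|_\infty$ pointwise, so $\pm H\in\mathcal{H}_{s,d,m}(\F_+)$ with $m:=2n\|\phi\|_\infty$. If the separating hypothesis can be applied to both $H$ and $-H$, it yields $\E_\Q[\phi(S^{(i)}_{t\wedge\rho_n}-S^{(i)}_{s\wedge\rho_n})]=0$; a standard monotone-class argument then extends this to all bounded $\cF_{s+}$-measurable $\phi$, delivering the desired martingale property of $S^{\rho_n}$.

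The main obstacle is verifying $\pm(H\cdot S)_T\in\Gamma_d$, i.e., constructing $\gamma\in C_b(\Omega)$ with $\gamma\leq\pm(H\cdot S)_T$ pointwise and $\E_\Q[\gamma]$ arbitrarily close to $\E_\Q[\pm(H\cdot S)_T]$. The difficulty is that $(H\cdot S)_T=\phi\cdot(S^{(i)}_{t\wedge\rho_n}-S^{(i)}_{s\wedge\rho_n})$ is generally discontinuous in $\omega$ because $\omega\mapsto\rho_n(\omega)$ is only lower semicontinuous: a path may just touch the sphere $\{\|x\|=n\}$ without strictly crossing it. I would circumvent this by choosing $n$ from the cocountable set of radii for which the event ``$\omega$ reaches but does not strictly cross radius $n$'' has $\Q$-probability zero; since these exceptional events for distinct radii are $\Q$-essentially disjoint and $\Q$ is a finite measure, only countably many $n$ can fail. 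For such $n$ the stopping time $\rho_n$ is $\Q$-almost surely continuous, so $(H\cdot S)_T$ coincides $\Q$-a.s.\ with a continuous function; a Lusin--Tietze approximation then furnishes, for each $\varepsilon>0$, a closed set $K_\varepsilon$ with $\Q(K_\varepsilon)\geq 1-\varepsilon$ and an $\tilde f\in C_b(\Omega)$ agreeing with $(H\cdot S)_T$ on $K_\varepsilon$ and bounded by $\|(H\cdot S)_T\|_\infty$. Combining $\tilde f$ with a Urysohn function that drives $\gamma$ below $-m$ off a slight enlargement of $K_\varepsilon$---here the uniform pointwise lower bound $(H\cdot S)_T\geq -m$ coming from the admissibility of $H$ is essential---produces the required $\gamma$ at a controllable $\Q$-integral cost of order $\varepsilon$. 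This approximation step is the crux and parallels the construction in \cite{predictionSet,bartl2019duality}.
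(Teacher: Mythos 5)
Your high-level strategy (localize, test $\Q$ against bounded simple integrands, approximate the integrand from below by bounded continuous functions in $\Gamma_d$) is the correct one and parallels the paper's proof. However, the construction of the approximating continuous functions $\gamma$ is genuinely flawed.

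Write $f:=(H\cdot S)_T$. With $\tilde f\in C_b(\Omega)$ agreeing with $f$ on the closed set $K_\varepsilon$ and $\chi$ a Urysohn function equal to $1$ on $K_\varepsilon$ and $0$ off an enlargement $K_\varepsilon^\delta$, the natural candidate $\gamma:=\chi\tilde f+(1-\chi)(-m)$ does satisfy $\gamma\leq f$ on $K_\varepsilon$ (where $\gamma=f$) and off $K_\varepsilon^\delta$ (where $\gamma=-m\leq f$); but on the transition region $K_\varepsilon^\delta\setminus K_\varepsilon$ it is a convex combination of $\tilde f$ and $-m$ and can strictly exceed $f$, because Lusin/Tietze gives no control $\tilde f\leq f$ off $K_\varepsilon$. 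So $\gamma\in\Gamma_d$ fails in general and the separating hypothesis cannot be invoked. The fix is to abandon the Lusin--Tietze--Urysohn route and use the lower-semicontinuous envelope $f_*:=\sup\{g\in C_b:g\leq f\}$ directly: $f_*$ is lsc, bounded, satisfies $-m\leq f_*\leq f$, and $f_*(\omega)=f(\omega)$ at every continuity point $\omega$ of $f$, hence $\Q$-a.s.\ once $\rho_n$ has been arranged to be $\Q$-a.s.\ continuous. On a metric space $f_*$ is an increasing limit of $\gamma_k\in C_b$ with $\gamma_k\leq f$, so $\E_\Q[\gamma_k]\uparrow\E_\Q[f_*]=\E_\Q[f]$ and the argument closes. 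This is essentially what the paper achieves in a cleaner way by choosing the approximating stopping times $\sigma_\varepsilon$ (hitting a level slightly inside $m$) so that $\omega\mapsto\cS_{t\wedge\sigma_\varepsilon(\omega)}(\omega)$ is lower semicontinuous outright, with no almost-sure caveats at all.

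Two secondary inaccuracies. First, the events $\{\rho_n<\rho_n'\}$ for distinct $n$ are \emph{not} $\Q$-essentially disjoint (a single path may linger at several radii); the correct reason you may choose good $n$ is that for fixed $\omega$ the set of such $n$ is at most countable (these are the flat levels of the continuous nondecreasing running maximum of $\|S(\omega)\|$), so by Fubini $\Q(\rho_n<\rho_n')=0$ for Lebesgue-a.e.\ $n$, which suffices. Second, rewriting $H=\phi e_i\mathbf{1}_{(s\wedge\rho_n,t\wedge\rho_n]}$ in the admissible form $h_1\mathbf{1}_{(\tau_1,\tau_2]}$ with a deterministic $\tau_1=s$ forces $h_1=\phi\mathbf{1}_{\{\rho_n>s\}}e_i$; the indicator $\mathbf{1}_{\{\rho_n>s\}}$ is only lower semicontinuous, not continuous, so you must also verify its $\Q$-a.s.\ continuity (or, as the paper does, approximate the $\cF_s$-measurable factor by continuous $\cF_s$-measurable functions via an auxiliary lemma rather than by monotone class over continuous $\cF_{s+}$-measurable test functions).
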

\begin{proof}
	First of all, note that 
	 $S=(S^{(1)},\dots,S^{(d)})$ is a $d$-dimensional $\Q$-${\mathbb{F}}$-local martingale if and only if each component  $S^{(i)}$ is a $\Q$-${\mathbb{F}}$-local martingale.  In addition, since each $H \in \mathcal{H}_{s,d,m}({\F}_+)$ is ${\F}_+$-predictable and (locally) bounded, we obtain from \cite[Section~III.6]{JacodShiryaev.03} that the stochastic integral $(H\cdot S)$ is well-defined and satisfies that $(H\cdot S)=\sum_{i=1}^d (H^{(i)}\cdot S^{(i)})$.
	Now, for every $i\in \{1,\dots,d\}$ let
	 \begin{equation*}
	 \mathcal{H}_{s,1,m}^{(i)}(\F_+):=\Big\{H \in \mathcal{H}_{s,1}(\F_+) \colon (H\cdot S^{(i)}) \geq -m \mbox{ pointwise on } \Omega \times [0,T] \Big\}.
	 \end{equation*}
	Then for every $i\in\{1,\dots,d\}$, we see that
	 any $H \in \mathcal{H}^{(i)}_{s,1,m}({\F}_+)$
	 can be extended to an element $H_d:=(H_d^{(1)},\dots,H_d^{(d)})\in \mathcal{H}_{s,d,m}({\F}_+)$ by setting $(H_d^{(1)},\dots,H_d^{(i)},\dots,H_d^{(d)}):=(0,\dots,H, \dots,0)$ which satisfies for every $\gamma \in C_b(\Omega)$ that 
	$\gamma \leq (H_d \cdot S)_T$ if and only if  $\gamma \leq (H \cdot S^{(i)})_T$. Therefore, we conclude that for each $i\in \{1,\dots,d\}$ we have that
	\begin{equation*}
	\Gamma^{(i)}\!:
	=\!\Big\{ \gamma \in C_b(\Omega)\colon \mbox{there exists } m \in \N, \  H \in \mathcal{H}^{(i)}_{s,1,m}(\F_+) \mbox{ such that } \gamma \leq (H \cdot S^{(i)})_T\Big\}\!\subseteq \Gamma_d.
	\end{equation*} 
	As a consequence, it suffices to prove for each $i\in \{1,\dots,d\}$ that if  $\Q \in \fP(\Omega)$ satisfies $\E_{\Q}[\gamma]\leq 0$ for all $\gamma \in \Gamma^{(i)}$, then $S^{(i)}$ is a $\Q$-${\mathbb{F}}$-local martingale.
	
	Therefore, we fix any component $\cS:=S^{(i)}$
	and   assume 
	that $\E_{\Q}[\gamma]\leq 0$ for all $\gamma \in \Gamma^{(i)}$.
	We want to show that $\cS$ is a $\Q$-${\mathbb{F}}$-local martingale
 with localising sequence 
\begin{equation*}
\tau_m:=\inf\{ t\geq 0 : |\cS_t|\geq m\}\wedge T,
\end{equation*}
i.e.~for every $m\in\mathbb{N}$, the stopped process
\begin{equation*}
\cS^{\tau_m}_t:=\cS_{t\wedge \tau_m}
\end{equation*}
is a $\Q$-${\mathbb{F}}$-martingale. We follow the arguments in \cite[Proposition~5.5]{predictionSet} and \cite[Proposition~4.4]{bartl2019duality}. 

Fix $m\in\mathbb{N}$ and write $\tau:=\tau_m$. We first  show that $\cS^{\tau}$ is an ${\mathbb{F}}$-supermartingale. To that end, let $0\leq s< t\leq T$, and define for every $0<\varepsilon \leq 1$
\begin{align*}
\sigma&:=\inf\{ r\geq  s : |\cS_r|\geq m \}\wedge T,\\
\sigma_\varepsilon&:=\inf\{ r\geq s :  \cS_r> m-\varepsilon \text{ or } \cS_r \leq -m+\varepsilon\}\wedge T.
\end{align*} 
Since both $\tau$ and $\sigma$ are  hitting times of a closed set and $\cS$ has continuous sample paths, 
they are $\F$-stopping times, whereas $\sigma_\varepsilon$, $0<\varepsilon<1$, are $\F_+$-stopping times.

Now, fix an arbitrary $\cF_s$-measurable function $h: \Omega \to [0,1]$. 
Notice that $\sigma=\tau$ on $\{\tau\geq s\}$, so that 
$1_{\{\tau\geq s\}}(\cS_t^{\sigma}-\cS_s)=\cS^{\tau}_t-\cS_s^{\tau}$. Moreover,  $\sigma_\varepsilon$ increases to $\sigma$ as $\varepsilon$ tends to $0$, and therefore $\cS_t^{\sigma_\varepsilon}\to \cS_t^{\sigma}$ as $\cS$ has continuous sample paths.  Since additionally $|\cS_t^{\sigma_\varepsilon}-\cS_s|\leq 2m$, we have that
\begin{equation}\label{eq:append-1}
\E_{{\Q}}[ h(\cS^{\tau}_t-\cS_s^{\tau})]
=
\E_{\Q}[ h\,1_{\{\tau\geq s\}}\,(\cS_t^{\sigma}-\cS_s)] 
=\lim_{\varepsilon\to 0}  \E_{{\Q}}[ h\,1_{\{\tau\geq s\}}\,(\cS_t^{\sigma_\varepsilon}-\cS_s)].
\end{equation}
Recall that $g:=h1_{\{\tau\geq s\}}\colon\Omega\to[0,1]$ is 
${\mathcal{F}}_s$-measurable. Then, by \cite[Lemma~5.3]{predictionSet} there exists a sequence of continuous ${\mathcal{F}}_s$-measurable 
functions $g_k\colon\Omega\to[0,1]$
which converge ${\Q}$-almost surely to $g$.  Moreover, as  $\cS\colon \Omega \to C([0,T],\R)$ is continuous, we get from \cite[Lemma~5.4]{predictionSet} that for every $0<\varepsilon<1$, the function 
\begin{equation}\label{eq:lsc}
\Omega \ni \omega \mapsto \cS_{t\wedge \sigma_\varepsilon(\omega)}(\omega) \in \R
\end{equation}
 is lower semicontinuous.
In particular,  for every fixed $k\in \N$
it holds for $H:=g_k 1_{(\!(s,\sigma_\varepsilon\wedge t]\!]}$ that
\begin{equation}\label{eq:lsc-integral}
\Omega \ni\omega \mapsto (H\cdot \cS)_T(\omega) \in \R\,\text{ is lower semicontinuous.}
\end{equation}
Moreover, the fact that $|\cS.-\cS_s|\leq 2m$ 
on $[\![s,\sigma_\varepsilon]\!]$ and $g_k \in [0,1]$ implies that $(H\cdot \cS) \in [-2m,2m]$ and so
\begin{equation*}
H\in\mathcal{H}^{(i)}_{s,1,2m}(\F_+).
\end{equation*}
In addition, observe that \eqref{eq:lsc-integral} 
ensures that there exists a sequence of bounded continuous functions $\gamma_n\colon\Omega\to\R$ such that 
$\gamma_n\leq (H\cdot \cS)_T$ and $\gamma_n$ increases pointwise to $(H\cdot \cS)_T$.
Therefore we have for each $n \in \N$ that $\gamma_n\in  \Gamma^{(i)}$, hence by assumption  we have for every $\varepsilon \in (0,1)$, $k\in \N$ that
\begin{equation*}
\E_{{\Q}}[g_k(\cS_t^{\sigma_\varepsilon}-\cS_s)] 
= \E_{{\Q}}[ (H\cdot \cS)_T] 
=\sup_n \E_{{\Q}}[\gamma_n] 
\leq 0.
\end{equation*}
We hence conclude from  \eqref{eq:append-1} that
\begin{equation*}
\E_{{\Q}}[h(\cS^{\tau}_t-\cS_s^{\tau})]
=
\lim_{\varepsilon\to 0}  \E_{{\Q}}[h\,1_{\{\tau\geq s\}}\,(\cS_t^{\sigma_\varepsilon}-\cS_s)]
=
\lim_{\varepsilon\to 0}  \lim\limits_{k \to \infty}\E_{{\Q}}[g_k\,(\cS_t^{\sigma_\varepsilon}-\cS_s)]\leq 0.
\end{equation*}
This in turn implies ${\Q}$-a.s.\  that
$\E_{{\Q}}[\cS^{\tau}_t|\mathcal{F}_s]\leq \cS^{\tau}_s$, 
hence $\cS^{\tau}$ is indeed a ${\Q}$-${\F}$-supermartingale. 

By similar arguments one can also show that $\cS^{\tau}$ is a 
${\Q}$-${\F}$-submartingale, hence we conclude that indeed $\cS$ is a ${\Q}$-${\F}$-martingale. 
\end{proof}

\bibliography{stochfin}
\bibliographystyle{plain}


\end{document}